\title{Local Computation Algorithms for Hypergraph Coloring -- following Beck's approach}
\author{Andrzej Dorobisz}{Theoretical Computer Science Department, Faculty of Mathematics and Computer Science, Jagiellonian University, Krak\'{o}w, Poland}{andrzej.dorobisz@tcs.uj.edu.pl}{https://orcid.org/0000-0002-0910-4370}{}
\author{Jakub Kozik}{Theoretical Computer Science Department, Faculty of Mathematics and Computer Science, Jagiellonian University, Krak\'{o}w, Poland}{jakub.kozik@uj.edu.pl}{https://orcid.org/0000-0002-1362-7780}{}
\authorrunning{A. Dorobisz and J. Kozik}
\keywords{Property B, Hypergraph Coloring, Local Computation Algorithms}
\newcommand{\gray}{\textcolor{gray}}
\newcommand{\FullShort}[2]{\ignorespaces #1}    % full version
\newenvironment{algorithm-hbox}{\hbadness=10000\begin{algorithm}}{\end{algorithm}}
\newcommand{\codeCommentLine}[1]{\gray{\tcp{#1}}}
\newcommand{\codeCommentRight}[1]{\gray{\tcp*{#1}}}
\newcommand{\aand}{\textbf{and}}
\newcommand{\fHColoring}{\fun{hypergraph\_coloring}}
\newcommand{\fQuery}{\fun{query}}
\newcommand{\fBuildComponent}{\fun{build\_final\_component}}
\newcommand{\fExpandBadComp}{\fun{expand\_bad\_component}}
\newcommand{\fExpandViaUnsafe}{\fun{expand\_via\_unsafe}}
\newcommand{\fDetermineEdgeStatus}{\fun{determine\_edge\_status}}
\newcommand{\fExpandOrAccept}{\fun{expand\_or\_accept}}
\newcommand{\fColorComponent}{\fun{color\_final\_component}}
\newcommand{\eul}{\mathrm{e}}
\newcommand{\Ett}[2]{\mathbb{E}\left[{\textrm{#1}}\atop{\textrm{#2}}\right]}
\newcommand{\pr}[1]{\Pr{[#1]}}
\newcommand{\prt}[1]{\pr{\textrm{#1}}}
\newcommand{\ceil}[1]{\lceil{#1}\rceil}
\newcommand{\Oh}[1]{\mathcal{O}{\left(#1\right)}}
\newcommand{\polylog}{\mathrm{polylog}}
\newcommand{\event}[1]{\mathcal{X}(#1)}
\newcommand{\eventBasic}[1]{\event{#1}}
\newcommand{\fun}[1]{\textsc{#1}{}}     % function name
\newcommand{\state}[1]{\texttt{#1}}     % state
\newcommand{\rc}{\mathcal{C}}           % component of vertices to be recolored
\newcommand{\type}[1]{(#1)}
\newcommand{\typeM}{\type{M}}
\newcommand{\typeB}{\type{B}}
\newcommand{\typeBin}{\type{B_{in}}}
\newcommand{\typeBout}{\type{B_{out}}}
\newcommand{\typeU}{\type{U}}
\newcommand{\typeE}{\type{E}}
\newcommand{\ort}{\phi}
\newcommand{\lab}{\sigma}
\newcommand{\tCover}[2][]{R_{#1}(#2)}     % first arg optional, specify as \x{2} or \x[1]{2}
\newcommand{\vCompBound}{2(\Delta+1)\log(m)}  % bound on size of the component (# of bad edges)
\newcommand{\RESAMPLE}{\textsc{Resample}} % be carefull - no space after word, use with {}
\newcommand{\nextSection}{}           % use this when setting article (no new pages)
\newcommand{\nextSectionGap}{}
\newcommand{\nextLines}{}
\newcommand{\nextLine}{}
\begin{document}

\maketitle

\begin{abstract}    We investigate local computation algorithms (LCA) 
        for two-coloring of $k$-uniform hypergraphs.
    We focus on hypergraph instances that satisfy strengthened assumption of 
        the Lov\'{a}sz Local Lemma of the form
        $2^{1-\alpha k} (\Delta+1) \eul < 1$,
        where $\Delta$ is the bound on the maximum edge degree.
    The main question which arises here is
        for how large $\alpha$ there exists 
            an LCA that is able to properly color such hypergraphs
                in polylogarithmic time per query.
    We describe briefly how 
        upgrading the classical sequential procedure of Beck from 1991
             with Moser and Tardos' \RESAMPLE{}
                yields polylogarithmic LCA that works for $\alpha$ up to $1/4$.
    Then, we present an improved procedure that
        solves wider range of instances by allowing $\alpha$ up to $1/3$.

\end{abstract}

\section{Introduction}

The problem of hypergraph coloring often serves as a benchmark for various probabilistic techniques.
The task is to answer whether there exist (or to explicitly find) \emph{a proper coloring},
    that is, such an assignment of colors to the vertices of a hypergraph
    that no edge contains vertices all of the same color.
In fact, the problem of two-coloring\footnote{
        In two-coloring problem we can assign to each vertex one of two available colors.
    }
    of linear hypergraphs was one of the main motivations for introducing Local Lemma
    in the seminal paper of Erd\H{o}s and Lov\'{a}sz \cite{EL1975}.
It is well known that determining whether the given hypergraph admits proper two-coloring
    is NP-complete \cite{Lov73}.
This result holds even for hypergraphs with all edges of size 3.
In this work, we discuss sublinear algorithms for two-coloring of uniform hypergraphs
    within the framework of Local Computation Algorithms.

We are going to work with $k$-uniform hypergraphs\footnote{
    In \emph{$k$-uniform hypergraph} each edge contains exactly $k$ vertices.
}.
For the rest of the paper, 
    $n$ is used to denote the number of vertices of considered uniform hypergraph, 
    $m$ its number of edges, 
    and $k$ size of the edges.
We assume that $k$ is fixed (but sufficiently large to avoid technical details) 
    and that $n$ tends to infinity.
For a fixed hypergraph, we denote by $\Delta$ its maximum edge degree.
In the instances with which we are going to work, 
    $\Delta$ is bounded by a function of $k$, so in terms of $n$ it is $\Oh{1}$.
This implies that the number of edges $m$ is at most linear in $n$.
We also  assume that the considered hypergraphs do not have isolated vertices.
Then, we also have $m = \Theta(n)$.

\subsection{Local Computation Algorithms}
Rubinfeld, Tamir, Vardi and Xie proposed in \cite{fast-local} 
    a general model of sublinear sequential algorithms
        called Local Computation Algorithms (LCA).
The model is intended to capture the situation 
    where some computation has to be performed on a large instance
    but, at any specific time, only parts of the answer are required.
The interaction with a local computation algorithm is organized in the sequence of queries 
    about fragments of a global solution.
The algorithm shall answer each consecutive query in sublinear time 
    (wrt the size of the instance),
    systematically producing a partial answer 
        that is consistent with some global solution.
The model allows for randomness, and algorithm may occasionally fail.

For example, for the hypergraph two-coloring problem, the aim of an LCA procedure is
    to find a proper coloring of a given hypergraph.
The algorithm can be queried about any vertex, and in response, 
    it has to assign to the queried vertex one of the two available colors.
For any sequence of queries, with high probability, 
    it should be possible to extend the returned partial coloring to a proper one.

Formally, for a fixed problem, 
    a procedure is a \emph{$(t,s,\delta)$-local computation algorithm},
    if for any instance of size $n$ and any sequence of queries,
        it can consistently answer each of them in time $t(n)$
        using up to $s(n)$ space for computation memory.
The time $t(n)$ has to be sublinear in $n$,
    but a polylogarithmic dependence is desirable. 
The value $\delta(n)$ shall bound the probability of failure 
    for the whole sequence of queries.
It is usually demanded to be small.
The computation memory, the input, and the source of random bits 
    are all represented as tapes with random access 
    (the last two are not counted in $s(n)$ limit).
The computation memory can be preserved between queries.
In particular, it can store some partial answers determined in the previous calls.
For the precise general definition of the model consult \cite{fast-local}.

A procedure is called \emph{query oblivious} if the returned solution does not depend
    on the order of the queries  (i.e. it depends only on the input and the random bits).
It usually indicates that the algorithm uses computation memory 
    only to answer the current query 
    and that there is no need to preserve information between queries.
It is a desirable property, since it allows to run queries to algorithm in parallel.
In a follow-up paper \cite{space-eff}, Alon, Rubinfeld, Vardi, and Xie presented 
    generic methods of 
        removing query order dependence
        and reducing necessary number of random bits
    in LCA procedures.
In the same paper, these techniques were applied to the example procedures
    (including hypergraph coloring)
    from \cite{fast-local} converting them to query oblivious LCAs.
The improved procedures work not only in polylogarithmic time but also in polylogarithmic space.
Mansour, Rubinstein, Vardi, and Xie in \cite{MRVX12} improved analysis of this approach.

\subsection{Constructive Local Lemma and LCA}

The Lov\'{a}sz Local Lemma (LLL) is one of the most important tools in the field of local algorithms.
In its basic form, it allows one to non-constructively prove the existence of
    combinatorial objects 
        omitting a collection of undesirable properties, so-called bad events.
A~brief introduction to this topic and a summary of various versions of LLL
    can be found in the recent survey by Farag\'{o} \cite{survey-2021}.

For a fixed $k$-uniform hypergraph, let $p=2^{-k}$ denote the probability that, 
    in a uniformly random coloring, a~fixed edge is monochromatic in a~specific color.
A straighforward application of the symmetric version of Local Lemma (see e.g., \cite{survey-2021})
proves that the condition 
$
    2 p \; (\Delta+1) \; \eul < 1,
$
is sufficient for a hypergraph with the maximum edge degree $\Delta$, 
    to be two-colorable.

For many years, Local Lemma resisted attempts to make it efficiently algorithmic.
The first breakthrough came in 1991, when Beck \cite{Beck91},
    working on the example of hypergraph two-coloring,
    showed a method of converting some of LLL existence proofs 
            into polynomial-time algorithmic procedures.
However, in order to achieve that, 
    the assumptions of Local Lemma had to be strengthened and took form
\begin{equation} \label{eq:parametrized-LLL-condition}
    2 \; p^{\alpha} \; (\Delta+1) \; \eul < 1.
\end{equation}
For $\alpha=1$ the inequality reduces to the standard assumption.
The above inequality constraints $\Delta$, and the constraint becomes more restrictive as $\alpha$ gets smaller.
The original proof of Beck  worked for $\alpha < 1/48$.
From that time, a lot of effort has been put into studying applications to specific problems and pushing $\alpha$ forward, 
    as close as possible to standard LLL criterion \cite{alon,MR98,non-uniform,Sri08,Mos08}.

The next breakthrough was made by Moser in 2009.
In cooperation with Tardos, Moser's ideas have been recasted in \cite{moser-tardos} into general constructive formulation of the lemma.
They showed that,
    assuming so called variable setting of LLL,
    a natural randomized procedure called \RESAMPLE\footnote{
        As long as some bad events are violated, 
         the procedure picks any such event and resamples all variables on which that event depends.
        }
    quickly finds an evaluation of involved random variables for which none of the bad events hold.
They also proved that, in typical cases,
    the expected running time of the procedure is linear in the size of the instance.
For the problem of two-coloring of $k$-uniform hypergraphs,
    the total expected number of resamplings is bounded by $m/\Delta$
    (see Theorem 7 in \cite{survey-2021}).

Adjusting constructive LLL to LCA model remains one of the most challenging problems in the area.
It turns out, however, that previous results on algorithmization of Local Lemma
        can be adapted in the natural way.
In fact, the first LCA algorithm for the hypergraph coloring from \cite{fast-local},
    is built on the variant of Beck's algorithm 
        that is described in the book by Alon and Spencer \cite{ProbMeth-v2}.
That version works for $\alpha < 1/11$, and runs in polylogarithmic time per query.
Later refinements focused on optimizing space and time requirements
    (\cite{space-eff}, \cite{MRVX12}),
    however, for polylogarithmic LCAs the bound on $\alpha$ has not been improved.
In a recent work, Achlioptas, Gouleakis, and Iliopoulos \cite{AGI} showed
    how to adjust \RESAMPLE{} to LCA model.
They did not manage, however, to obtain a polylogarithmic time.
Their version answers queries in time $t(n)=n^{\beta(\alpha)}$.
They establish some trade-off between the bound on $\alpha$ and the time needed to answer a query.
In particular, when $\alpha$ approaches $1/2$ then $\beta(\alpha)$ tends to $1$, which results in a very weak bound on the running time per query.

\subsection{Main result}  \label{ssec:main_result}

Our research focuses on the following general question 
    in the area of local constructive versions of the Lov\'{a}sz Local Lemma:
    up to what value of $\alpha$ there exists a polylogarithmic LCA
        for the problem of two-coloring of $k$-uniform hypergraphs satisfying condition
            $2(\Delta+1) \eul < 2^{\alpha{k}}$.
We prove the following theorem:

\begin{theorem}[main result] \label{thm:main}
    For every $\alpha < 1/3$ and all large enough $k$,
    there exists a local computation algorithm that,
    in polylogarithmic time per query,
        with probability $1-O(1/n)$ solves the problem of two-coloring
        for $k$-uniform hypergraphs with maximum edge degree $\Delta$,
            that satisfies
   $2 \eul (\Delta+1) < 2^{\alpha k}.$
\end{theorem}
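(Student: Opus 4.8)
The plan is to make Beck's two-phase coloring argument run locally. In the \emph{first phase} we commit to a uniformly random $2$-coloring $\chi$ of the vertices, with $\chi(v)$ computed from $v$ and a short segment of the random tape, hence recomputable in $O(1)$ time. Call an edge \emph{safe} if $\chi$ leaves more than $\lambda k$ vertices of each color on it, for a constant $\lambda$ fixed later, and \emph{dangerous} otherwise. In the \emph{second phase} we intend to keep the $\chi$-color on every vertex lying in no dangerous edge and to recolor the remaining (\emph{free}) vertices; a safe edge survives any recoloring provided at most $\lambda k$ of its vertices are free, so a safe edge with more free vertices must itself be reclassified as dangerous, and this closure is iterated to a fixed point --- which is where a hierarchy of edge types (in our notation $(B_{in})$, $(B_{out})$, $(U)$) enters. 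A free vertex shared by two dangerous edges forces those edges to be recolored jointly, so the objects to control are the connected components of the closed dangerous set in the dependency graph. The two facts that make the scheme work are: with probability $1-O(1/n)$ every such component spans only $\polylog n$ edges; and on each component the residual instance --- freeze the colors outside, keep the free vertices variable --- still satisfies a Local Lemma condition, and is therefore properly recolorable.

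The LCA then answers a query about a vertex $v$ as follows. It computes $\chi(v)$; it explores the closed dangerous structure reachable from the edges through $v$, reporting failure if this exploration ever exceeds the $\polylog n$ budget; if $v$ turns out to be fixed (no edge through $v$ ends up dangerous) it returns $\chi(v)$; otherwise $v$ lies in a single polylog-sized dangerous component $C$, and the algorithm runs \RESAMPLE{} on the residual instance of $C$, using only random bits indexed by $C$, and returns the color it assigns to $v$. Since the recoloring of $C$ is a function of $C$ and of randomness private to $C$, the answers to different queries are automatically consistent and the algorithm is query oblivious; the computation memory holds only the current component, so it too is polylogarithmic in $n$.

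The quantitative part has three ingredients. \emph{(1) Component size.} A closed dangerous component with $t$ edges contains $\Omega(t/\Delta^{O(1)})$ pairwise non-adjacent dangerous edges --- whose events of being dangerous are then independent --- that still remain connected in a bounded power of the dependency graph (a $2$-tree, the usual gadget in Beck-type arguments); writing $q$ for the probability that a fixed edge is dangerous, the probability that some component has $\Theta(\log n)$ edges is at most $n\cdot\bigl(\Delta^{O(1)}\,q\bigr)^{\Theta(\log n)}$, which is $O(1/n)$ once $\Delta^{O(1)}\,q<1$; since $q\le 2^{-k(1-H(\lambda))}$ with $H$ the binary entropy and $\Delta<2^{\alpha k}$, this holds for $\alpha$ below a first threshold depending on $\lambda$. \emph{(2) Phase-two feasibility.} After the closure every surviving dangerous edge still has at least $\lambda' k$ free vertices, so the residual hypergraph has edges of size $\ge\lambda' k$ and maximum degree $\le\Delta$, and the condition $2\,\eul\,(\Delta+1)<2^{\lambda' k}$ --- that is, $\alpha$ below a second threshold --- yields, via the Local Lemma and constructively via \RESAMPLE{}, a proper recoloring. \emph{(3) Time and total error.} On an instance of size $\polylog n$, \RESAMPLE{} halts within $\polylog n$ steps except with probability $O(1/n^2)$ (the witness-tree tail bound of Moser and Tardos), each step touching $O(k\Delta)=O(1)$ data, and the exploration of step~(ii) visits only $\polylog n$ vertices and edges; summing the per-component $O(1/n^2)$ over the at most $n$ distinct components ever queried and adding the $O(1/n)$ from~(1) leaves the total failure probability $O(1/n)$.

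The crux is to make the two thresholds in~(1) and~(2) overlap all the way up to $\alpha=1/3$. With a single parameter --- one threshold $\lambda$, recolor every vertex of every dangerous edge, plain \RESAMPLE{} --- one is balancing $q\approx 2^{-k(1-H(\lambda))}$ against a residual slack of only $\lambda k$ vertices, and optimizing $\lambda$ stalls at $\alpha<1/4$; this is the bound we record for the Beck-plus-\RESAMPLE{} baseline. Reaching $1/3$ requires being frugal on both sides at once: split the dangerous edges by the reason they are dangerous (own color pattern versus inheritance through shared free vertices) and by position (interior versus boundary of a component), charge each type separately in the $2$-tree count, and in phase two free as many vertices per surviving dangerous edge as this finer bookkeeping permits, so that the effective $\lambda'$ is decoupled from the $\lambda$ controlling $q$. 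The main obstacle is to engineer this classification so that, for every $\alpha<1/3$ simultaneously, (a) closed dangerous components stay of size $\polylog n$, (b) the residual instance still obeys an LLL condition, and (c) the local exploration of step~(ii) can still certify, within a $\polylog n$ budget, to which component a queried vertex belongs; granting such a classification and the matching pair of inequalities, the construction and its analysis proceed exactly as above.
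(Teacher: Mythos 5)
Your proposal diverges from the paper at the very first step, and the divergence is load-bearing. You commit up front to a \emph{static} coloring $\chi$ computed independently per vertex from the random tape, classify edges as dangerous purely from $\chi$, take a fixed-point closure, and explicitly claim the resulting LCA is query oblivious with polylogarithmic space. The paper does something quite different: its shattering phase is a \emph{sequential} Beck-style process in which a vertex is accepted or frozen depending on the current state of the edges through it, the processing order is chosen dynamically in response to the queries, and the authors state outright that the resulting algorithm is \emph{not} query oblivious and that ``typical methods of eliminating the dependence on the order of queried vertices do not seem to be applicable without sacrificing constant $\alpha$.'' So your scheme is not a rephrasing of theirs; it is a genuinely different architecture, and it is exactly the architecture the paper's authors flag as apparently incapable of reaching $\alpha$ near $1/3$. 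That tension should at least be confronted: a static threshold-plus-closure classification makes the set of free vertices a monotone function of $\chi$ alone, which is pleasant for consistency but gives you much less control over how danger cascades, since a safe edge can be dragged into the dangerous set by any subset of its neighbors at any point in the fixed-point iteration. The sequential shattering sidesteps this: once an edge has accepted vertices of both colors it is safe forever, regardless of what later happens to its neighbors, and this irrevocability is what keeps the Beck/Molloy--Reed $(2,3)$-tree count honest. Your claimed bound ``a closed dangerous component with $t$ edges contains $\Omega(t/\Delta^{O(1)})$ pairwise non-adjacent dangerous edges whose events of being dangerous are then independent'' is simply false after closure: an edge can be in the closed set entirely because its neighbors are, with its own $\chi$-pattern perfectly balanced, so the independent-events count you feed into the first-moment bound does not survive the fixed-point step.

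The second and decisive gap is that the part of the argument that actually gets you from $1/4$ to $1/3$ is asserted, not constructed. You correctly identify the tension (one threshold $\lambda$ controls both the danger probability $q$ and the residual slack, and optimizing a single $\lambda$ stalls at $1/4$), and you correctly guess that the fix is a finer classification of dangerous edges with a separate charging scheme in the tree count. But the proposal then says ``granting such a classification and the matching pair of inequalities, the construction and its analysis proceed exactly as above.'' That classification \emph{is} the theorem. In the paper it takes the form of a component-hypergraph with an explicit activation-propagation rule seeded only at genuinely monochromatic edges, an edge-trimming rule that lets an unsafe edge be absorbed by a single component once that component owns $\alpha k$ of its troubled vertices, a conditional-expansion subroutine whose correctness rests on a nontrivial ``amortizing configuration'' observation, and a witness-structure calculus with six node types, a depth-sensitive orientation, and a \emph{balance} invariant (each $(M)$-node amortizes two $(E)$-nodes, each $(B)$-type node one, with the common bound $q=2^{1-k/3}$) that is precisely what makes the constant come out to $1/3$ rather than something smaller. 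None of this is recoverable from your sketch; the phrase ``interior versus boundary of a component'' does not determine the probabilities attached to each type nor verify that the amortization closes. Until that bookkeeping is written down and shown consistent with the local-exploration budget, you have reproven the $1/4$ baseline (which is indeed the paper's first contribution and your paragraph on it is essentially right) but not the stated theorem.
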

Within the notation of \cite{fast-local} we present
    $(\polylog(n), \Oh{n}, \Oh{1/n})$-local computation algorithm
    that properly colors hypergraphs that satisfy the above assumption.
Our algorithm is not query oblivious.
Moreover, typical methods of eliminating the dependence on the order of queried vertices
    do not seem to be applicable without sacrificing constant $\alpha$.
\FullShort{
    More technical and precise statement of our main result is presented
    in Appendix \ref{sec:proof} as Theorem \ref{thm:main-technical}.
}{
    Consult the full version of this paper for the complete proof of the theorem.
}

For comparison,
    Alon et al. \cite{space-eff} after Rubinfeld et al. \cite{fast-local}
    present a query oblivious
    $(\polylog(n), \polylog(n), \Oh{1/n})$-local computation algorithm
    working for hypergraphs satisfying
\begin{align} \label{eq:fast-local-conditions}
\nonumber    16\;\Delta(\Delta-1)^{3}(\Delta+1) &< 2^{k_{1}},\\ 
             16\;\Delta(\Delta-1)^{3}(\Delta+1) &< 2^{k_{2}},\\
\nonumber    2e(\Delta+1) &< 2^{k_{3}},
\end{align}
where $k_{1}, k_{2}$ and $k_{3}$ are positive integers such that 
$k = k_{1}+k_{2}+{k_3}$.
These assumptions correspond to $\alpha < 1/11$.

The analysis of the LCA procedure from \cite{space-eff}
    guarantees only that the running time is 
    of the order $\Oh{\log^{\Delta}(n)}$.
Mansour et al. in \cite{MRVX12}
    focus on improving time and space bounds within polylogarithmic class,
        removing the dependency on the maximal edge degree from the exponent.
They obtain an LCA working in $\Oh{\log^{4}(n)}$ time and space,
    assuming that $k \geq 16 \log(\Delta) + 19$,
        so it requires even stronger bound on $\alpha$.

\subsection{LOCAL distributed algorithms}

The model of Local Computation Algorithms is related to the classical model of local distributed computations by Linial \cite{Lin87} (called LOCAL).
For comparison of these two models, see work of Even, Medina, and Ron \cite{EMR18}.
Chang and Pettie observed recently in \cite{CP17} that within LOCAL model,
    the general problem of solving Local Lemma instances with a~dependency graph of bounded degree
    is in some sense complete for a large class of problems
    (these are the problems which can be solved in sublogarithmic number of rounds).
They also conjectured that for sufficiently strengthened condition of Local Lemma
    (like taking small enough $\alpha$ in (\ref{eq:parametrized-LLL-condition}))
    there exists a distributed LOCAL algorithm that solves the problem in $\Oh{\log\log n}$ rounds.
The straightforward simulation of such an algorithm within LCA framework would yield a procedure that,
   at least for fixed maximum degree, answers queries in polylogarithmic time.

Recently, progress towards this conjecture has been made by Fischer and  Ghaffari \cite{FG17}, who proved that
    there exists an algorithm for Local Lemma instances that works in $2^{\Oh{\sqrt{\log\log n}}}$ rounds.
The influence of the degree of underlying dependency graph on running time has been later improved by Ghaffari, Harris and Kuhn in \cite{GHK18}.
In particular, for sufficiently constrained problem of hypergraph two-coloring,
    that result allows one to obtain an LCA procedure that answers queries in sublinear time.
The time, however, would be superpolylogarithmic.
Moreover, the necessary strengthening of Local Lemma assumptions 
    appears to be much stronger than the one required to apply the result of Rubinfeld et al. \cite{fast-local}.

The possibility of simulation of LOCAL algorithms within LCA model implies that if Chang and Pettie conjecture holds,
    then any problem satisfying sufficiently strengthened LLL conditions can be solved in LCA model in polylogarithmic time per query.
We can therefore formulate a weaker conjecture that for some $\alpha$ every such $\alpha$-strengthened problem 
    can be solved in LCA in polylogarithmic time per query.
For the specific problem of hypergraph coloring,  this property is known to hold.
We can, however, ask what is the maximum such $\alpha$ for a fixed problem.
That is precisely the general problem stated at the beginning of Section \ref{ssec:main_result}. 
It is interesting to note that our algorithms make essential use of the sequential nature of LCA.
For that reason, they cannot be translated to $\Oh{\log\log n}$ LOCAL algorithms.
This also illustrates an important difference between the models.

\section{Main techniques and ideas of the proof} \label{sec:ideas}

The algorithmic procedure of Beck \cite{Beck91} is divided into two phases.
In the first one, which we call \emph{the shattering phase}, it builds a random partial coloring
    that guarantees that a fraction of all edges are already properly colored.
Moreover, the edges which are not yet taken care of have sufficiently many 
    non-colored vertices to make sure that the partial coloring can be completed to a proper one.
They also form connected components of logarithmic sizes
    which can be colored independently.
Then, in the second phase, which we call \emph{the final coloring phase},
    an exhaustive search is used to complete the coloring of each component.
This results in a sequential procedure with polynomial running time.
In order to reduce the running time to almost linear,
    the shattering phase can be applied twice.
Then, the final components w.h.p. are of size $\Oh{\log\log(n)}$.
The polylogarithmic LCA procedure for hypergraph coloring from \cite{fast-local}
    followed that approach and simulates locally two shattering phases and an exhaustive search
        when answering a single query.
Division into these three phases is directly reflected 
    in the conditions (\ref{eq:fast-local-conditions}) required by the procedure.

While it is not known whether it is possible
    to design an LCA algorithm based solely on \RESAMPLE, 
    combining it with previous local algorithms brings significant improvements.
It turns out that, within polylogarithmic time, after only one shattering phase, the coloring 
    can be completed with the use of \RESAMPLE.
This simple modification, with slightly improved analysis,
    is sufficient to derive Theorem \ref{thm:main} for $\alpha \leq 1/4$.
        This is our first contribution.
That procedure provides  
    a reference point for explaining the intuitions and motivations that underlie 
    the further improvements that we derive.
In particular, we define a notion of \emph{component-hypergraph}
    that allows for a more fine-grained analysis of the components
         of the residual hypergraph.
For that reason, we present our base algorithm in detail in Section \ref{sec:base}.

The first modification that we make in order to improve the base algorithm is that within the shattering phase
    we sample colors for all vertices.
Then, for some vertices, the color is final,
    and for others, it is allowed to change the assigned color in the final coloring phase.
Coloring all the vertices during the first phase
    somehow blurs the border between the shattering and final coloring phases.
Its main purpose is to enable a more refined partition of the residual hypergraph
    into independent fragments.
It also allows to determine  some components of the residual hypergraph
    for which no recoloring would be necessary.
This corresponds to a situation in which the first sampled colors 
        in \RESAMPLE{} happen to define a proper coloring.
Altogether, we managed to significantly reduce the pessimistic size of the independent fragments colored in the final coloring phase,
    which enables further relaxation of the necessary conditions on $\alpha$ to $\alpha<1/3$.
The improved procedure is described in Section \ref{sec:main}.

In order to analyze the procedures,
    we employ a common technique of  
        associating some tree-like \emph{witness structures}
    with components that require recoloring.
Every such structure describes a collection of events associated with some edges of the hypergraph.
All these events are determined by the colors assigned in the shattering phase.
For the base algorithm, these structures are quite typical.
However, in order to achieve the better bound on $\alpha$,
    we developed more sophisticated structures that are capable of tracking different
        kinds of events, which can also depend on the colors that are allowed to be recolored.
Different kinds of events come with different bounds on probability.
An important aspect of the analysis concerns
    amortization of different kinds of events within a single structure.
The construction of these structures is our main technical contribution.
\FullShort{
    We described it in detail in Appendix \ref{sec:proof-witness}.
}{
    Its detailed description can be found in the full version of the paper.
}

We finally note that, while our methods are not general enough 
    to work for all instances satisfying the strengthened assumptions of LLL,
    they can be applied to a number of 
    problems similar to hypergraph coloring, like, e.g. $k$-SAT.

\section{Establishing base result} \label{sec:base}

In this section we show
    how the Beck's algorithm can be combined with \RESAMPLE{} to construct
        a local computation algorithm that works in polylogarithmic time per query
            for $\alpha$ up to $1/4$.
In other words, %we prove weaker result,
    we prove Theorem \ref{thm:main}
        under the stronger assumption that $\alpha \leq 1/4$.
To keep the exposition simple,
    we first present a global randomized algorithm.
Then, we comment on how to adapt this procedure to LCA model.
\FullShort{
    The analysis of the procedure can be found in Appendix \ref{sec:base-analysis}.
}{
    The complete analysis of the procedure can be found in the full version of this paper.
}

Let $H=(V,E)$ be a hypergraph that satisfies the assumptions of Theorem \ref{thm:main}
    for a fixed $\alpha \leq 1/4$.
For technical convenience, we assume that $\alpha k$ is an integer\footnote{
        In fact, for the given $k$ it is only reasonable to take $\alpha$ in the form of $t/k$,
            where $t$ is an integer $2 \leq t \leq k$.
    }.
By assigning a random color, we mean choosing uniformly one of the two available colors.
For a set of edges $S$, by $V(S)$ we mean all vertices covered by the edges from $S$.
For an edge $f$, $N(f)$ denotes the set of edges intersecting $f$.
We use a naming convention that is similar to other works on the subject
    -- in particular, 
    our view of Beck's algorithm is influenced by 
        its descriptions by
            Alon and Spencer \cite{ProbMeth-v2}
            and Molloy and Reed \cite{GraphCol},
        as well as LCA realization given in \cite{fast-local}.

\subsection{Global coloring procedure} \label{sec:base-alg}

The algorithm starts with choosing an arbitrary order of vertices.
Then, it proceeds in two phases: \emph{the shattering phase} and \emph{the final coloring phase}.
The shattering phase colors some vertices of the input hypergraph
    and then splits the edges of the hypergraph that are not properly colored yet into
        \emph{final components} -- subhypergraphs that can be colored independently.
The final coloring phase completes the coloring by
    considering the final components separately, one by one.

\subsubsection{The shattering phase}

The procedure processes vertices sequentially according to the fixed ordering.
For every
    vertex, it either
        assigns a random color to the vertex
    or
        leave it non-colored in case it belongs to a \emph{bad} edge.
An edge is called \emph{bad} if it contains $(1-\alpha)k$ colored vertices
    and is still not colored properly (that is,  all these vertices have the same color).
Once an edge becomes bad, no more vertices from that edge will be colored
    -- such vertices are called \emph{troubled}.
Vertices with assigned colors are called \emph{accepted}.

Upon completion of the shattering phase, there are three types of edges:
\begin{itemize}
    \item \emph{safe edges} -- properly colored by the accepted vertices,
    \item \emph{bad edges} -- containing exactly $(1-\alpha)k$ accepted vertices,
                              all of the same color,
    \item \emph{unsafe edges} -- containing fewer than $(1-\alpha)k$ accepted vertices,
                                 all of the same color.
\end{itemize}
Observe that in the resulting (partial) coloring,
    every edge that is not colored properly          %is either bad or unsafe
        has at least $\alpha{k}$ troubled vertices,
            which will be colored in the next phase.
Note also that it might happen that some unsafe edge has no colored vertices at all.

The colors of accepted vertices are not going to be changed, so the safe edges are already taken care of.
Therefore, we focus on bad and unsafe edges.
Let $E_{bad}$ denote the set of all bad edges.
Consider hypergraph $(V(E_{bad}), E_{bad})$.
It is naturally decomposed into connected components.
\begin{definition} \label{def:bad-comp}
    Every component of the hypergraph $(V(E_{bad}), E_{bad})$ is called a \emph{bad-component}.
\end{definition}
Note that every troubled vertex belongs to some bad-component.
On top of them we build an abstract structure
    to express dependencies between bad-components through unsafe edges.

\begin{definition} \label{def:comp-hg}
    A \emph{component-hypergraph} is constructed as follows:
        its vertices are bad-components of $H$
        and for every unsafe edge $f$ intersecting more than one bad-component,
            an edge that contains all bad-components intersected by $f$ is added to it.
\end{definition}

For each connected component of the component-hypergraph
    (that is, a maximal set of bad-components that is connected in the component-hypergraph)
    we construct a \emph{final component} by taking the union of those bad-components
(hence a final component is a subhypergraph of $H$).
The shattering phase is \emph{successful} if
    each final component contains at most $\vCompBound$ bad edges.
If this is not the case, the procedure declares a failure. 
It turns out that this is very unlikely to happen.

\subsubsection{The final coloring phase}

For each final component $\rc$ determined during the shattering phase,
    we add to $\rc$ all unsafe edges intersecting it,
    and then, we restrict $\rc$ to troubled vertices\footnote{
        Restriction of $H=(V,E)$ to $V' \subseteq V$ is defined as 
            $H' = (V', \{e \cap V' | ~e \in E, e \cap V' \neq \emptyset \})$.
    }.
We obtain a hypergraph $\rc'$ containing
    at most $2(\Delta+1)^2\log(m)$ edges,
    and each of them has at least $\alpha{k}$ vertices.
The maximum edge degree in $\rc'$ cannot be larger than $\Delta$,
    which is the maximum edge degree in $H$.
Since $2e(\Delta+1) < 2^{\alpha k}$ (by the assumptions of Theorem~\ref{thm:main}),
    Lov\'{a}sz Local Lemma ensures that $\rc'$ is two-colorable.
Hence, by the theorem of Moser and Tardos
        \RESAMPLE{} finds a~proper coloring of it
        using on average $|E(\rc')|/\Delta$ resamplings
        (see Theorem 7 in \cite{survey-2021}).

When the final coloring phase is over, all final components are properly colored.
Since each bad or unsafe edge is dealt within some final component,
    and each safe edge was properly colored during the shattering phase,
    it is now guaranteed that the constructed coloring is proper for the whole $H$.

\subsection{LCA realization} \label{sec:base-lca}

We employ quite standard techniques to obtain an LCA realization of the described algorithm.
We articulate it below to provide a context for the description of our main algorithm. 
An important property of the described procedure is that
    the ordering of vertices does not have to be fixed a priori.
In fact it can be even chosen in an on-line manner by an adversary.
Following \cite{fast-local},
    we are going to exploit the freedom of choice of ordering.
The LCA version of the algorithm is going to simulate the global version run with a specific ordering.
That ordering is constructed dynamically during the evaluation and is driven by the queries.
Apart from some minor adjustment (resulting from adaptation to LCA model)
    when the algorithm is queried about vertex $v$,
    it performs all the work of the standard algorithm
        needed to assign a final color to $v$.
The LCA version is presented in Listings
        \ref{alg:base-lca-query},
        \ref{alg:base-lca-build-component},
        \ref{alg:base-lca-subprocedures},
    and \ref{alg:base-lca-color-component}.
All colors assigned during work of the algorithm are stored
    in the computation memory (which is preserved between queries).
For convenience, we also store there the status of each vertex 
    -- \emph{uncolored}, \emph{accepted} or \emph{troubled}.
Initially all vertices are uncolored.

\begin{algorithm}
    \caption{LCA for uniform hypergraph coloring -- main function} \label{alg:base-lca-query}
    \DontPrintSemicolon
    \SetKwProg{Fn}{Procedure}{:}{}
    \Fn{\FQuery{$v$ - vertex}}{
        \If {$v$ is \state{uncolored}}  {
            \If{all edges containing $v$ are not bad}{
                assign a random color to $v$ \aand{} 
                    mark it as \state{accepted}        \codeCommentRight{shattering}
            } \lElse {
                mark $v$ as \state{troubled}
            }
        }
        \If {$v$ is \state{troubled}}  {
                $\rc_v \gets $ \FBuildComponent{$v$}   \codeCommentRight{shattering}
                \FColorComponent{$\rc_v$}              \codeCommentRight{final coloring} 
        }
        \Return color assigned to $v$
  }
\end{algorithm}

\subsubsection{\fQuery}
When a vertex $v$ has been already marked as accepted, its color is immediately returned.
If it has not been processed before, the algorithm checks whether $v$ belongs to any bad edge
    (that requires inspecting the current statuses of all the edges that contain $v$).
If not,
    a random color is assigned to $v$, the vertex is marked as accepted, and the procedure returns the assigned color.
On the other hand, when $v$ belongs to a bad edge,
    it is marked as troubled.
The algorithm then determines the final component containing $v$
    in procedure \FBuildComponent.
These steps can be viewed as the shattering phase.
Afterwards, the final coloring phase is performed for the final component
    in procedure \FColorComponent.

\begin{algorithm}
    \caption{Building the final component for $v$ that belongs to some bad edge} \label{alg:base-lca-build-component}
    \DontPrintSemicolon
    \SetKwProg{Fn}{Procedure}{:}{}
    \Fn{\FBuildComponent{$v$ - troubled vertex}}{
        $B \gets \emptyset$
            \codeCommentRight{initialize set of bad edges of the component}
        $U \gets \emptyset$
            \codeCommentRight{initialize set of unsafe edges to process}
        $e \gets$ any bad edge containing $v$ \;
        mark $e$ as explored \aand{} run \FExpandBadComp{$e$, $B$, $U$} \;
        \codeCommentLine{process surrounding unsafe edges}
        \While{$U$ is not empty}{
                $f \gets$ next edge from $U$ (remove it from $U$)\;
                \FExpandViaUnsafe{$f$, $B$, $U$}
        }
        \codeCommentLine{return hypergraph built on set of bad edges}
        \Return $\rc = (V(B), B)$
    }
\end{algorithm}

\subsubsection{\fBuildComponent}
This procedure builds the set $B$ of bad edges of the final component of $v$,
    exploring the line graph of $H$\footnote{
        The line graph $L(H)$ is the graph built on $E(H)$
        in which two distinct vertices (representing edges of $H$)
        are adjacent if the corresponding edges intersect.
}.
It uses a temporary flag \emph{explored} to mark visited edges
    (this flag is not preserved between queries).
The construction starts from a bad edge containing troubled vertex $v$
    and expands it to a bad-component.
Then, as long as possible,
    set $B$ is extended by edges of neighboring bad-components,
        which can be reached through unsafe edges adjacent to $B$.
If at some point the number of bad edges in $B$
    exceeds the prescribed bound $\vCompBound$,
    then the procedure declares a failure
        (note that it cannot be restarted
         since LCA model does not allow to change colors returned for previous queries).
Construction of the final component is done when
    there are no more bad edges to add.
Then, the hypergraph $\rc = (V(B), B)$ built on the collected bad edges is returned.

The expansion of bad-components is done within subprocedure \FExpandBadComp.
It starts from the given bad edge and
    explores the line graph by inspecting the adjacent edges.
For each adjacent edge, its type (safe, unsafe, or bad) is determined using \FDetermineEdgeStatus.
Determining status of an edge may require processing some uncolored vertices of that edge.
For each of them, the procedure check whether it is troubled.
If it is not, a random color is assigned to the vertex and the vertex is marked as accepted.

\begin{algorithm}
    \caption{Subprocedures for the final component construction} \label{alg:base-lca-subprocedures}
    \DontPrintSemicolon
    \SetKwProg{Fn}{Procedure}{:}{}
    \Fn{\FExpandBadComp{$e$ - bad edge, $B$ - bad edges, $U$ - unsafe edges}}{
        $Q \gets \{ e \} $
            \codeCommentRight{initialize set of bad edges to process}
        \While{$Q$ is not empty}{
            $f \gets$ next edge from $Q$ (remove it from $Q$)\;
            add $f$ to $B$ \aand{}
                \lIf{$|B|> \vCompBound$}{ \FAIL } %\tcp*{too many edges}
            \For{$g \in N(f)$ which are not explored}{
                mark $g$ as explored       \aand{}
                \FDetermineEdgeStatus{$g$} \;
                \lIf {$g$ is \state{bad}}    { add $g$ to $Q$ }
                \lIf {$g$ is \state{unsafe}} { add $g$ to $U$ }
            }
        }
    }
    \;
    \Fn{\FExpandViaUnsafe{$f$ - unsafe edge, $B$ - bad edges, $U$ - unsafe edges}}{
                \For{$g \in N(f)$ which are not explored}{
                    \FDetermineEdgeStatus{$g$} \;
                    \If{$g$ is \state{bad}}{
                        mark $g$ as explored \aand{} run \FExpandBadComp{$g$, $B$, $U$}
                    }
                }
    }
    \;
    \Fn{\FDetermineEdgeStatus{$g$ - edge}}{
        \For{each $w$ in $g$ that is \state{uncolored} unless $g$ becomes \state{safe}}{
            \lIf{some edge containing $w$ (including $g$) is bad}{ mark $w$ as \state{troubled} }
            \lElse{ assign a random color to $w$ \aand{} mark it as \state{accepted} }
        }
        count accepted vertices and check their colors to determine status of $g$
    }
\end{algorithm}

During the expansion through unsafe edges we keep a set $U$ of not processed unsafe edges
    that intersects any edge of $B$.
As long as $U$ is not empty,
    we pick any unsafe $f$ from $U$ and process it by \FExpandViaUnsafe.
Here we determine the statuses of all edges adjacent to $f$
    and if we encounter a bad edge which is not in $B$,
        then we add it and expand a bad-component containing it.
For technical convenience, during bad-component expansion we collect
    non-explored adjacent unsafe edges and add them to $U$.

\begin{algorithm}
  \caption{Finding coloring inside the final component} \label{alg:base-lca-color-component}
  \DontPrintSemicolon
  \SetKwProg{Fn}{Procedure}{:}{}
    \Fn{\FColorComponent{$\rc$ - hypergraph}}{
        add to $\rc$ all \state{unsafe} edges intersecting $\rc$ \;
        $\rc' \gets$ restriction of $\rc$ to troubled vertices \;
        $t_{e} \gets |E(\rc')|/\Delta$  \codeCommentRight{expected time of one RESAMPLE trial}
        \For{$trial=1$ to $2 \log(m)$}{
            \codeCommentLine{RESAMPLE with limited number of steps}
            assign random colors to $V(\rc')$ \;
            \For{$step=1$ to $2 t_e$}{
                \eIf{there is monochromatic $f \in E(\rc')$}{
                    assign new random colors to all vertices of $f$
                }
                {
                    \codeCommentLine{$\rc'$ is properly colored}
                    mark all vertices of $\rc'$ as \state{accepted}
                    \aand{} \Return
                }
            }
        }
        \FAIL
    }
\end{algorithm}

\subsubsection{\fColorComponent}
Final component $\rc$ is extended with unsafe edges that intersect it.
Then it is restricted to the set of its troubled vertices.
The resulting hypergraph is denoted by $\rc'$.
The 
    algorithm tries to find a proper coloring of $\rc'$ using \RESAMPLE{} procedure.
To ensure polylogarithmic time,
    it is run only for the limited number of resampling steps.
To decrease the probability of a failure,
    the procedure may be restarted a few times.
When a proper coloring is found,
    each vertex of $\rc'$
        is marked as accepted.
From now on, all edges of $\rc$ are treated as safe.
However, if all trials were unsuccessful, the procedure declares a failure.

\nextSection
\section{Main result - algorithm} \label{sec:main}

We show how to improve the base procedure described in the previous section 
    to obtain an algorithm 
        that can be used to prove Theorem \ref{thm:main},
that is, an algorithm
    that works in polylogarithmic time per query
    on input hypergraphs that satisfy
        strengthened LLL condition~(\ref{eq:parametrized-LLL-condition}) for $\alpha < 1/3$.
Actually, our procedure can be used to find a proper coloring
    also for instances that satisfy that condition with any $\alpha \in (0,1)$,
    but the running time is not guaranteed for $\alpha\geq 1/3$.
We start with introducing the main ideas behind algorithm improvement
    and describe its global version.
Then, we discuss how to adapt it to 
    the model of the local computation algorithms,
    and finally we present a description of the LCA procedure.
\FullShort{
    The analysis of the algorithm is placed in Appendix \ref{sec:proof}.
}{
    The analysis of the algorithm can be found in the full version of this paper.
}

\subsection{A general idea}

It is a common approach in randomized coloring algorithms to start  
    from an initial random coloring and then make some correction to convert it to a proper one
        (like in \RESAMPLE{} \cite{moser-tardos} or in Alon's parallel algorithm \cite{alon}).
This is not the case of Beck's procedure, in which a proper coloring is constructed incrementally, 
    but coloring of some vertices (those marked as troubled) is postponed to the later phase.
Our approach lies somewhere in between.
We generally try to follow the latter one, but 
    we sample colors for the troubled vertices already in the shattering phase.
Such colors are considered as \emph{proposed},
    and we reserve the possibility of changing them in the final coloring phase. 
We use the information about the proposed colors
    to shrink the area that will be processed in the final coloring phase.
In particular, if we look at the colors proposed for troubled vertices,
    then only those final components that contain a monochromatic edge require recoloring.
Moreover, if we carefully track dependencies between bad-components
    (see Definition \ref{def:bad-comp}),
    it is also possible to decrease the sizes of the final components.
We explain this idea in more detail in the following subsections.

\subsubsection{Activation of bad-components}

Imagine that all the vertices were colored in the shattering phase
    and we want to determine the final components.
We look at the component-hypergraph (see Definition \ref{def:comp-hg})
    and have to decide which of the bad-components should be recolored.
We start from bad-components that are intersected by monochromatic edges
    - we mark them as \emph{initially active} and treat them as seeds of final components.
The remaining ones are currently \emph{inactive}.
Our intention is to recolor only active components in the final coloring phase.
Note that it might not be sufficient to alter the coloring in a way that makes initially active components properly colored,
    because after their recoloring, it is possible 
    that some unsafe edge which get both colors in the shattering phase becomes monochromatic.
That is why the activation has to be propagated. We use the following rule
\begin{itemize} \item
Let $A_t$ be the set of troubled vertices that are covered by active bad-components,
    and $f$ be an unsafe edge that intersects $A_t$.
If $f \setminus A_t$ is monochromatic, then 
    all inactive bad-components that intersect $f$ become active
    and all bad-components that intersect $f$ are merged into one (eventually final) component.
\end{itemize}
The above propagation rule is applied as long as possible.
When it stops, it is guaranteed that
    all monochromatic edges are inside active components 
    and all unsafe and bad edges outside of active components
        are properly colored by the vertices that are outside of active bad-components.
In particular, we can accept all the colors proposed for inactive vertices.

\subsubsection{Edge trimming}
We employ an additional technique, which can further reduce the area of the final components.
Observe that, in order to guarantee two-colorability of the final components,
    it is enough to ensure that each edge has at least $\alpha k$ vertices to recolor
        inside one final component.
It means that if some active component already contains $\alpha k$ troubled vertices 
    of some edge, then it is not necessary to propagate activation through that edge.
Thus, we can improve the propagation rule in the following way.
Consider an unsafe edge $f$ for which $f \setminus A_t$ is monochromatic
    (recall that $A_t$ denotes the set of currently active troubled vertices).
If some active component contains at least $\alpha k$ troubled vertices of $f$, 
    then $f$ is trimmed to that active component.
Otherwise, all bad-components intersected by $f$ are activated and merged into one component (as described in the previous section).

We point out that the direct inspiration for this technique came from
    the work of Czumaj and Scheideler \cite{non-uniform}
        in which the edge trimming is actively used
            during the construction of the area to be recolored.
One of the consequences of using it
    is that the shapes of the final components 
    depend on the specific order in which activation is propagated.

\subsection{Global coloring procedure}

Similarly to the base algorithm from Section \ref{sec:base-alg},
    the improved procedure performs the shattering phase and then the final coloring phase.
The former is modified according to the ideas described in the previous subsection.
In particular, we use the notions of \emph{proposed} and \emph{accepted} colors.
Pseudocode of the whole procedure can be found in Listing \ref{alg:main-global}
in Appendix \ref{sec:listings_improvedLCA}.

\subsubsection{The shattering phase}

The first part of the shattering phase is almost the same,
    except that now each vertex is colored.
The procedure processes the vertices in a fixed order,
    and for each vertex it marks it as \emph{accepted} or \emph{troubled}
        and chooses a random color.
A~vertex $v$ is accepted if, at the time of processing, 
    $v$ does not belong to any of the bad edges. Otherwise, it is troubled.
An edge becomes \emph{bad} when its set of accepted vertices reaches size $(1-\alpha)k$
    and is still monochromatic.
After processing all the vertices, 
    \emph{safe} and \emph{unsafe} edges 
        are determined in the same way as in the base algorithm.
Additionally, by a \emph{monochromatic} edge, we mean an edge for which 
    all its vertices (accepted and troubled) have the same color.
The colors of the accepted vertices are called \emph{accepted colors}.
The colors of the troubled vertices are called \emph{proposed colors}.
By accepting a color assigned to a vertex, we mean changing its status to accepted.

The next step involves determining the final components.
We work with the component-hypergraph. % as defined in ...
We are going to mark some bad-components and unsafe edges as \emph{active}.
By an \emph{active component}, we mean a maximal set of active bad-components
    which is connected in the component-hypergraph via active unsafe edges.
We start with marking as active
    all monochromatic unsafe edges and
    all bad-components that are intersected by any (bad or unsafe) monochromatic edge.
Let $A_t$ denote the set of troubled vertices that are currently covered by active bad-components.
Then, as long as there exists an inactive unsafe edge $f$ satisfying the following conditions:
\begin{itemize}
    \item $f$ is monochromatic outside the active troubled area
              (i.e., $f \setminus A_t$ is monochromatic), and
    \item each active component contains less than $\alpha k$ troubled vertices of $f$,
\end{itemize}
we activate $f$ and activate all bad-components intersected by $f$.
When this propagation rule can no longer be applied,
    we accept the colors of all the troubled vertices from inactive bad-components.
At that time, each active component determines a final component as the union of its bad-components.
Just like in the base algorithm, the shattering phase is \emph{successful} if
    each final component contains at most $\vCompBound$ bad edges.
Otherwise, the procedure declares a failure. % and can be restarted.

\subsubsection{The final coloring phase}

We implement one modification at the beginning of the final coloring phase.
For each final component $\rc$, %determined during the shattering phase, 
    we add to $\rc$ not all unsafe edges intersecting it, but
    only those that have at least $\alpha k$ troubled vertices in $V(\rc)$.
Then, we proceed exactly as in the base algorithm:
    we restrict $\rc$ to the troubled vertices
    and apply \RESAMPLE. % to find a proper coloring of it.

\subsection{Ideas behind LCA realization}

In the base case, the conversion of the global algorithm to LCA is straightforward. %can be easily converted to LCA.
In fact, the LCA version determines the same area to recolor
    (assuming that both versions process the vertices in the same order).
For the improved algorithm described in the previous subsection,
    conversion to LCA is more complex and alters the behavior of the algorithm.
The main difficulty is that for a bad-component alone that is not initially active,
    it is not easy to quickly decide whether it is going to be activated or not.
There might exist a long chain of activation
        leading to an activation of the considered bad-component,
        and we do not know in which direction to search for the sources of this eventual activation.
Moreover, even if we find out that it will be activated, 
    it is not obvious what the shape of the final component containing it will be,
        since it requires performing activation propagation and 
            determining activation statuses of neighboring bad-components as well.
To address these problems,
    when a troubled vertex of some bad-component is queried,
    we focus on finding an area containing that vertex
        that can be recolored independently from the remaining part of the input hypergraph.
It means that from the beginning of the procedure the component of that vertex is treated as active
    and we allow trimming unsafe edges to that component.
Moreover, we use additional techniques described below
    to limit the expansion of the processed area in a single query.

\subsubsection{Trimming to bad-component}
We extend edge trimming to the case when an unsafe edge $f$
    has at least $\alpha{k}$ troubled vertices in some bad-component $S$, and 
    the set of those vertices together with the accepted vertices of $f$ is not monochromatic.
In such a case, $f$ can be trimmed by removing from it the troubled vertices that do not belong to $S$.
Note that we do not check here whether $S$ is active or not.
The idea behind this step is that from now on $S$ is responsible for the proper coloring of $f$.
If at some point, the colors of the vertices of $S$ get accepted without any resamplings, 
    then $f$ will be obviously colored properly.
Otherwise, if $S$ becomes active, then $f$ will be trimmed anyway,
    and $S$ has enough troubled vertices of $f$ to not break two-colorability of $S$.

\subsubsection{Activation exclusion}
The necessary condition for an inactive bad-component $S$ to be activated is that
    there is an unsafe edge $f$ 
    whose accepted vertices and troubled vertices in $f\cap V(S)$ are of the same color.
When there is no such edge
        or all such edges were trimmed to other components,
    then $S$ cannot be activated.
Therefore if it is not initially active, it stays inactive.
In such a case, we can accept all the proposed colors for the vertices of $S$.
As a result, some unsafe edges become properly colored, and we can treat them as safe.
This, in turn, may enable proving that neighboring bad-components will also not be activated.
The same reasoning can be applied to  a set $C$ of bad-components.
If none of the bad-components in $C$ is initially active
    and there are no unsafe edges 
        intersecting some bad-component outside $C$
        that may activate bad-component from $C$,
    then we can conclude that all bad-components in $C$ remain inactive.

\subsubsection{Conditional expansion}

The idea described in the previous subsection
    can be used for a bad-component to perform some kind
    of search for a potential reason of activation.
If $S_1$ is not initially active, we inspect
    unsafe edges that may cause the activation of $S_1$.
We can select any such $f$, and ask whether other bad-component $S_2$
    intersected by $f$ may become active.
We can continue that procedure as long as there is a risk of activating
    any $S_i$ from the group of bad-components visited so far.
In the end, 
    we either find some initially active component
    or we prove that all the considered bad-components cannot be activated.
It turns out that, if we do not follow the edges that can be trimmed
    with the trimming to bad-component technique,
        then the processed area during such a search is unlikely to be large.

The possibility of finding an initially active bad-component
    can be used in expansion of the component to extend it by a neighboring area.
For a selected bad-component adjacent to the currently constructed eventually final component,
    we launch a search and
        either we find some monochromatic edge (initially active component)
            and extend the component with the whole searched area,
        or convince ourselves that this area cannot be activated.
In the latter case we can simply accept the proposed colors in that area.
In the former we can perform the expansion because 
    the occurrence of a monochromatic edge, as an unlikely event, 
        in a~sense amortizes the expansion of the component.
In fact, we can stop the search procedure not only when we find a monochromatic edge
    but also in a less restrictive case
        when we find an unsafe edge intersecting at least two disjoint bad edges
            outside the search area.
This possibility follows from the technical details of the analysis.

\subsection{LCA procedure} \label{sec:main-lca}

We describe the improved LCA procedure in reference to the base algorithm presented
    in Section \ref{sec:base-lca}.
As previously,
    the ordering of the vertices is constructed dynamically and is driven by the queries and the work of the algorithm.
For a set of edges $S$, by $V_t(S)$ we mean all troubled vertices in $V(S)$.
For an edge $f$, we denote by $f|_t$ the set of troubled vertices of $f$, and by $f|_a$ the set of accepted vertices of $f$.

\subsubsection{\fQuery}

The main procedure is almost identical to its counterpart in the base algorithm
    (Listing \ref{alg:base-lca-query}).
The only difference is that when processing a vertex $v$ of a bad edge,
    it is not only marked as troubled, but also a random color is assigned to $v$.

\subsubsection{\fBuildComponent} \label{sec:main-build-component}

This procedure is the heart of the algorithm
    and is substantially more complex than its analogue in the base version.
It is presented in Listings 
    \ref{alg:main-lca-build-component} and \ref{alg:main-lca-expand-or-accept}
        available in Appendix \ref{sec:listings_improvedLCA}.
It also makes use of subprocedures defined earlier
    (see Listing \ref{alg:base-lca-subprocedures}), with one modification in \FDetermineEdgeStatus
        ~-- once a vertex $w$ is marked as troubled, a random color is also assigned to $w$.
As previously, the procedure works on the line graph of $H$
    and grows a set $B$ of bad edges 
    that will be converted to a final component at the end of the procedure.
It always starts from the bad-component containing the queried vertex $v$,
    and expands it by neighbor bad-components via unsafe edges.
The main change is that
    in the base algorithm each unsafe edge causes expansion of the component,
    here unsafe edges are processed more carefully.
Throughout the procedure we make sure that the size of $B$ does not exceed
    $\vCompBound$ bound on number of edges
    -- if that happens, the procedure stops and declares a failure.

Let $U$ be the set of not processed unsafe edges intersecting~$V(B)$.
If some edge can be trimmed to $V(B)$, it can be safely removed from $U$.
Thus, we may assume that 
    each $f$ in $U$ has fewer than $\alpha{k}$ troubled vertices in $V(B)$.
Since every unsafe edge has more than $\alpha{k}$ troubled vertices,
    each $f$ from $U$ has to intersect at least one bad-component outside $V(B)$.
The procedure applies the following \emph{extension rules} as long as possible:
\begin{itemize}
    \item (r1)
        if there exists $f$ in $U$ that intersects
            at least two disjoint bad edges outside $B$,
    or
    \item (r2)
        if there exists $f$ in $U$
            for which all the vertices of $f$ outside of $V_t(B)$ are monochromatic,
\end{itemize}
then $B$ is 
    extended with all bad edges from the bad-components intersected by $f$;
\begin{itemize}
    \item (r3)
        if there are no edges in $U$ that meet the conditions (r1) or (r2),
        but there exists $f$ in $U$ that
            has fewer than $\alpha{k}$ troubled vertices outside $V(B)$,

\end{itemize}
then call \FExpandOrAccept procedure (described in the following subsection) for $f$,
    which implements the conditional expansion technique,
    and extend $B$ with the returned set of bad edges (which may happen to be empty).

Note that, when there are no edges that meet conditions (r1) or (r2),
    then for any remaining $f$ from $U$
        it is guaranteed that 
            $f$ intersects exactly one bad-component outside $V(B)$
            and $f \setminus V_t(B)$ is not monochromatic.
If such $f$ does not satisfy condition (r3),
    it has at least $\alpha k$ troubled vertices in that external bad-component,
        so it can be trimmed to it (according to trimming to bad-component technique).
Thus, $f$ can be removed from $U$.

After each extension rule, the processed edge is removed from $U$.
On the other hand, when $B$ is extended, new unsafe edges may be added to $U$,
    but we remove those that can now be trimmed to $V(B)$.
Since edges which do not fulfill any of the extension rules are also removed from $U$, 
    finally $U$ becomes empty and the procedure stops.
At this point, $B$ is a set of bad edges
    which are surrounded only by safe and trimmed unsafe edges.

\subsubsection{\fExpandOrAccept}

This procedure is an implementation of the conditional expansion technique,
    through a given unsafe edge $e$.
Similarly to \FBuildComponent, it grows a set $A$ of bad edges,
    which we call a \emph{search area},
        and makes sure that its size does not exceed $\vCompBound$ bound
        (if that happens, the whole algorithm stops and declares a failure).
Initially, $A$ is empty.
Then it becomes expanded by bad-components which may lead to initially active bad-component,
    starting from the not explored bad-component intersected by $e$.
The expansion naturally stops when there are no more candidate bad-components.
The procedure, however, can also stop earlier 
    in case when some monochromatic edge
    or unsafe edge intersecting two disjoint not explored bad edges is found.

Let $Q$ be the set of unsafe edges to be processed (initially it is empty).
Let $C$ be the set of bad edges of the currently expanded bad-component.
Let $U_C$ denote the set of unsafe edges intersecting $V(C)$
    but not adjacent to the edges of $B$ and $A$
    (these are simply those unsafe edges adjacent to the edges in $C$
        that were not explored before expansion of $C$).
The procedure extends $A$ with all edges from $C$,
    and then looks for the following \emph{amortizing configuration}:
    \begin{itemize}
        \item (e1) %(monochromatic bad edge)
              if $C$ contains monochromatic edge $f$
    \end{itemize}
then 
    the procedure stops and returns set $A$;
    \begin{itemize}
        \item (e2) %(monochromatic unsafe edge)
              if $U_C$ contains a monochromatic edge $f$,
        or
        \item (e3) %(amortizing configuration)
              if $U_C$ contains an edge $f$,
              which intersects at least two disjoint bad edges outside $C$,
    \end{itemize}
then
    first set $A$ is extended with all the bad edges of the bad-components intersected by $f$,
    and then the procedure stops and returns $A$.

When no such configuration is found,
    all unsafe edges in $U_C$ are not monochromatic
        and, moreover, each intersects at most one bad-component outside $A$.
We focus on the edges from $U_C$ that can cause an activation of $C$
    -- these are the edges whose 
        troubled vertices in $V(C)$ together with accepted vertices are monochromatic.
Each such an edge $f$ has to intersect exactly one external bad-component
    and troubled vertices of that component together with $f|_a$ ensure a proper coloring of $f$.
If
    there are at least $\alpha k$ troubled vertices of $f$ in that external bad-component,
        $f$ can be trimmed to it (according to the technique of trimming to bad-component).
That is why we add to $Q$
    only those edges from $U_C$ that may cause activation of $C$
        and have fewer than $\alpha k$ troubled vertices outside of $V(C)$.

When processing of $C$ is finished,
    we pick any edge from $Q$ (the set of unsafe edges to be processed) and 
    repeat the above steps for the external bad-component intersected by the selected edge.
It may happen that this component has already been added to $A$,
    in a such case the procedure continues picking edges from $Q$.
When the procedure finishes without encountering amortizing configuration,
    there are no monochromatic edges in $A$
        and all unsafe edges intersecting $V(A)$ are 
            either properly colored by the colors of the accepted vertices and the vertices from $V_t(A)$,
            or are trimmed to bad-components outside it.
Thus, an activation of whole $A$ is excluded.
Then  we mark all vertices in $V_t(A)$ as accepted
        and treat edges properly colored by their colors as safe.
In that case, the procedure returns the empty set.

Note that during this procedure, we do not apply edge trimming to $V(A)$
    when it covers at least $\alpha k$ troubled vertices of some unsafe edge,
        since it can result in a false activation (in case the edge is monochromatic inside $V(A)$).
We also ignore all unsafe edges intersecting $V(B)$
    (they were explored before call to \FExpandOrAccept)
    since, due to not satisfying (r1) and (r2)
        they cannot be used in an amortizing configuration
            or cause an activation
                (it is guaranteed that they are not monochromatic outside $V_t(B)$).

\subsubsection{\fColorComponent}

The last procedure %\FColorComponent(\rc)
    is almost identical to its counterpart in the base algorithm
    (Listing \ref{alg:base-lca-color-component}).
Recall that the only change is at the beginning of the procedure.
Instead of extending $\rc$ with all unsafe edges intersecting it,
    only those unsafe edges that have at least $\alpha{k}$ troubled vertices in $V(\rc)$
        are added.
Then we proceed as in the base algorithm.

\bibliography{references.bib}

\appendix

\newpage
\section{Listings of the improved procedure}  \label{sec:listings_improvedLCA}

\subsection{Listing of the global algorithm}  \label{app:MainGlobal}

\begin{algorithm}[H]
    \newcommand{\AC}{\mathcal{A}}
    \caption{Improved algorithm for uniform hypergraph coloring} \label{alg:main-global}
    \DontPrintSemicolon
    \SetKwProg{Fn}{Procedure}{:}{}
    \Fn{\FHColoring{$H$ - hypergraph}}{
        \codeCommentLine{I. SHATTERING PHASE}
        let $(v_1, v_2, ... v_n)$ be an ordering of $V(H)$ \;
        \For{$i=1$ \KwTo $n$}{
            assign a random color to $v_i$ \;
            \lIf{all edges containing $v_i$ are not bad}{
                mark $v_i$ as \state{accepted}
            } \lElse {
                mark $v_i$ as \state{troubled}
            }
        }
        
        determine status of each $e \in E(H)$  \codeCommentRight{$e$ is bad, safe, or unsafe}
        explore the line graph and build component-hypergraph $H_C = (V_C, E_C)$ \;
        
        \codeCommentLine{activation of bad-components}
        \codeCommentLine{- let $U_C$ be the set of unsafe edges corresponding to $E_C$}
        \codeCommentLine{- let $U(B)$ denote unsafe edges intersecting component $B$}
        \codeCommentLine{- let $U_C(B) = U(B) \cap U_C$}
        \codeCommentLine{- let $V_t(\rc)$ denote set of troubled vertices in component $\rc$}
        $\AC \gets \emptyset$ \codeCommentRight{initialize set of active components}
        $Q \gets \emptyset$   \codeCommentRight{unsafe edges to process}
        
        \codeCommentLine{- initial activation}
        \ForEach{$B \in V_C$}{
            \If{some $e \in E(B)$ or $f \in U(B)$ is monochromatic}{
                mark $B$ as \emph{active} \;
                add $B$ to $\AC$ \aand{} add all edges from $U_C(B)$ to $Q$
            } \lElse {
                mark $B$ as \emph{inactive}
            }
        }
        \ForEach{$f \in U_C$}{
            \lIf{$f$ is monochromatic}{
                merge in $\AC$ all $\rc \in \AC$ intersected by $f$
            }
        }

        \codeCommentLine{- activation propagation}
        \While{
            $Q$ is not empty
        }{
            $f \gets$ next edge from $Q$ (remove it from $Q$)\;
            \If {
                $\forall_{\rc \in \AC} \; |f \cap V_t(\rc)| < \alpha k$
                \aand{}
                $f \setminus V_t(\bigcup\AC)$ is monochromatic
            }  {
                \codeCommentLine{- activate new bad-components through $f$}
                \ForEach{$B \in V_C$ such that $B$ is \emph{inactive} and $f$ intersects $B$}{
                    mark $B$ as \emph{active} \;
                    add $B$ to $\AC$ \aand{} add all edges from $U_C(B)$ to $Q$
                }
                \codeCommentLine{- merge active components through $f$}
                merge in $\AC$ all $\rc \in \AC$ intersected by $f$ \;
            }
        }
        
        \codeCommentLine{II. FINAL COLORING PHASE - color each final component}
        \ForEach{$\rc \in \AC$}{
            \lForEach{$f \in U(\rc)$ such that $|f \cap V_t(\rc) | \geq \alpha k$}{
                add $f$ to $\rc$
            }
            $\rc' \gets$ restriction of $\rc$ to troubled vertices \;
            \RESAMPLE($\rc'$)
        }
  }
\end{algorithm}

\newpage
\subsection{Listing of \fBuildComponent{} (LCA)}

\begin{algorithm}[H]
    \caption{Improved LCA procedure for the final component construction} \label{alg:main-lca-build-component}
    \DontPrintSemicolon
    \SetKwProg{Fn}{Procedure}{:}{}

    \Fn{\FBuildComponent{$v$ - troubled vertex}}{

        $B \gets \emptyset$
            \codeCommentRight{initialize set of bad edges of the component}
        $U \gets \emptyset$
            \codeCommentRight{initialize set of unsafe edges to process}
        $U_{s} \gets \emptyset$
            \codeCommentRight{unprocessed unsafe edges able to launch search}

        $e \gets$ any bad edge containing $v$ \;
        mark $e$ as explored \aand{} run \FExpandBadComp{$e$, $B$, $U$} \;

        \codeCommentLine{process surrounding unsafe edges according to extension rules}
        \While{ % main-while
            $U \neq \emptyset$ or $U_s \neq \emptyset$
        }{
          \While{$U$ is not empty}{
            $f \gets$ next edge from $U$ (remove it from $U$) \;
            \If{$f$ has $< \alpha k$ troubled vertices in $V(B)$}{
                \If{$f$ satisfies rule (r1) or (r2)}{
                    \FExpandViaUnsafe{$f$, $B$, $U$}
                }
                \ElseIf {$f$ can satisfy rule (r3)}{
                    add $f$ to $U_s$
                    \codeCommentRight{$f \setminus V(B)$ has $< \alpha k$ troubled vertices}
                }
            }
          }
          \If{$U_s$ is not empty}{
            $f \gets$ next edge from $U_s$ (remove it from $U_s$) \;
            \If{$f$ has $< \alpha k$ troubled vertices in $V(B)$}{
                \codeCommentLine{$f$ satisfies rule (r3)}
                $(A, U_A) \gets$ \FExpandOrAccept{$f$, $B$, $U$} \;
                $B = B \cup A$ \aand{} \lIf{$|B|> \vCompBound$}{ \FAIL }
                $U = U \cup U_A$ \; 
            }
          }
        } % end main-while

        \codeCommentLine{return hypergraph built on set of bad edges}
        \Return $\rc = (V(B), B)$ 
    }
\end{algorithm}

\newpage
\subsection{Listing of \fExpandOrAccept{} (LCA)}

\begin{algorithm}[H]
    \caption{Conditional expansion via unsafe edge $e$ (exploring a search area)} \label{alg:main-lca-expand-or-accept}
    \DontPrintSemicolon
    \SetKwProg{Fn}{Procedure}{:}{}
    \Fn{\FExpandOrAccept{$e$ - unsafe edge}}{
    
        $A \gets \emptyset$
            \codeCommentRight{initialize set of bad edges of the search area}
        $U_A \gets \emptyset$
            \codeCommentRight{initialize set of unsafe edges around search area}
        $Q \gets \{e\}$
            \codeCommentRight{unprocessed unsafe edges allowing expansion}

        \codeCommentLine{process selected surrounding unsafe edges}
        \While{ % main-while
            $Q$ is not empty
        }{
            $f \gets$ next edge from $Q$ (remove it from $Q$)\;

            \codeCommentLine{expand with the external component to which leads $f$}
            $(C,U_C) \gets (\emptyset, \emptyset)$

            \FExpandViaUnsafe{$f$, $C$, $U_C$}
            
            $A = A \cup C$ \aand{} \lIf{$|A|> \vCompBound$}{ \FAIL }
            $U_A = U_A \cup U_C$ \;
            
            \codeCommentLine{inspect new edges -- look for amortizing configuration}
            \If{(e1) is satisfied (there is a monochromatic edge in $C$)}{
                \Return $(A, U_A)$
            }
            \ElseIf{there is an unsafe edge $f$ in $U_C$ satisfying (e2) or (e3)}{
                \FExpandViaUnsafe{$f$, $A$, $U_A$} \;
                \Return $(A, U_A)$
            }

            \codeCommentLine{select edges that may cause an activation}
            \Else{
                \For{$g$ in $U_C$}{
                    \If { $g|_a \cup (g|_t \cap V(C))$ is monochromatic } {
                      \lIf{ $g \setminus V(C)$ has $< \alpha k$ troubled vertices }{
                            add $g$ to $Q$
            }}}}

        } % end main-while

        \codeCommentLine{activation exclusion}
        mark all troubled vertices in $V(A)$ as \state{accepted} \;
        \Return $(\emptyset,\emptyset)$
    }
\end{algorithm}

\FullShort{
    \newpage
    \section{Analysis of the base algorithm} \label{sec:base-analysis}

We prove that the algorithm described in Section \ref{sec:base-lca}
    is a $\left(\Oh{\log^2(n)}, \Oh{n}, \Oh{1/n}\right)$-LCA %local computation algorithm
    for the problem of two-coloring of uniform hypergraphs.
Below we briefly discuss correctness and running time.
The space requirement $\Oh{n}$ is trivial.
Then, we focus on the probability of a failure in the shattering phase,
    which is the most important part of the analysis.
We prove that, for any sequence of queries, this probability is at most $1/m$.
Moreover, it can be arbitrarily reduced remaining in the same time bound.
Finally, we show that the same is true for the final coloring phase.
Together, that imply $\Oh{1/n}$ bound on a failure of the algorithm.

\subsection{Correctness} \label{sec:base-analysis-correctness}
We show that
    as long as the procedure does not fail, 
    the current partial coloring defined by the colors of accepted vertices
    can be extended to a proper coloring of $H$.
Observe that after each answered query %during work of the algorithm 
    we have the following property:
    each edge is either properly colored (is safe)
        or has at least $\alpha k$ uncolored or troubled vertices.
Thus, if we remove safe edges from $H$
    and restrict it to uncolored and troubled vertices,
    we obtain a hypergraph with
        maximum edge degree at most $\Delta$
        and edges of size at least $\alpha k$ each.
Since it is assumed that $2 \eul (\Delta +1) < 2^{\alpha k}$,
    we know from LLL that remaining hypergraph is two-colorable
        and any proper coloring of that hypergraph can be used 
        to extend the current partial coloring of $H$.

To complete the proof of the correctness in the context of LCA, we should mention that
    the partial coloring build by the algorithm is only extended,
        i.e. once a vertex is marked as accepted its color is never modified.
In particular, vertices queried multiple times receives consistent answers.

\subsection{Running time}
Recall that we treat $k$ and $\Delta$ as fixed (i.e. $\Oh{1}$ wrt the size of the hypergraph),
    so
        checking the current status of each vertex inside any specific edge, % // is e bad now?
        or traversing all edges containing some specific vertex, % // does v belongs to bad?
        or determining set $N(f)$ for an edge $f$               % // for g in N(f)
    are all made in constant time.
In particular, running time of \FDetermineEdgeStatus is $\Oh{1}$.

Consider a call $\FQuery(v)$.
In case $v$ is accepted or it does not belong to any bad edge,
    the computation is done in constant time.
Otherwise, the algorithm has to build and color the final component containing $v$.
As we argue below, both these steps are completed in $\Oh{\log^2(n)}$ time .

Procedure \FBuildComponent explores set $B$ and all edges adjacent to $B$,
    so at most $(\Delta+1)|B|$ edges are processed in a single call.
Notice that time spent directly on processing a single edge
    (in that procedure and its subprocedures) is constant.
Therefore, due to the explicit bound on the size of $B$,
    this step requires only $\Oh{\log(n)}$ time.

Procedure \FColorComponent starts from the hypergraph $\rc$ containing at most $\vCompBound$ edges,
    so its size $|\rc|$ is $\Oh{\log(n)}$.
In linear time to $|\rc|$ it obtains $\rc'$ whose size is also of the same order.
Then, there are at most $\Oh{\log(n)}$ trials of the \RESAMPLE{} procedure.
A single \RESAMPLE{} requires $\Oh{|\rc'|}$ time for 
    initialization of the colors and finding all monochromatic edges and then, 
    each resampling step can be performed in a~constant time
        (after resampling $f$ we only need to check statuses of $f$ and adjacent edges).
Since we limit number of resamplings to $t_e = |E(\rc')|/\Delta$ which is $\Oh{\log(n)}$
    each trial of \RESAMPLE{} requires $\Oh{\log(n)}$ time.
In total, this gives $\Oh{\log^2(n)}$ running time for the whole procedure.

\nextSection
\subsection{Probability of a failure in the shattering phase} \label{sec:base-shattering-fail}

We focus on the event that during some query, procedure \FBuildComponent
    constructs a set $B$ of bad edges of size larger than $\vCompBound$.
For the purpose of the analysis, suppose that in such a case, instead of failing,
    the procedure continues until a complete final component is constructed.
Then, we measure probability of 
    discovering a large final component,
        i.e. a component that has more than $\vCompBound$ edges.
We analyze that case following the proof by Molloy and Reed (see Lemma 25.2 in \cite{GraphCol}),
    but we improve the estimates slightly.

Employing a common technique,
    with each final component discovered by the algorithm
        we associate a \emph{witness structure}
            of the size that is closely related to the size of the component.
For every witness structure we define an event in such a way that,
    whenever some final component is constructed,
        the event defined by the associated witness structure is satisfied.
Finally, we show that it is very unlikely that
    any of the events defined by all large enough witness structures holds.
This also bounds the probability of discovering a large final component.

Recall that $L(H)$ is the line graph of $H$.
    A set of edges $T \subseteq E$ is called a \emph{$(1,2)$-tree in $L(H)$}, if
        $T$ is connected in the graph in which elements of $T$ are adjacent
            when their distance in $L(H)$ is $1$ or $2$.
    A set of edges $T \subseteq E$ is called a \emph{$(2,3)$-tree in $L(H)$}, if
        $T$ is connected in the graph in which elements of $T$ are adjacent
            when their distance in $L(H)$ is $2$ or $3$,
        and additionally
            no two elements of $T$ are adjacent in $L(H)$.
Equipped with that notation,
    we define as witness structure any $(2,3)$-tree in $L(H)$.
We assign to each witness structure $\tau$,
    an event $\event{\tau}$,
        that every edge in $\tau$ is bad after the shattering phase.
We say that $\tau$ is satisfied when $\event{\tau}$ holds.

\begin{claim} \label{claim:base-bad-edges-12tree}
    If $\rc$ is a final component, then $E(\rc)$ is a $(1,2)$-tree in $L(H)$.
\end{claim}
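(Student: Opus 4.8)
The plan is to unwind the definitions and show that the edge set $E(\rc)$ of a final component $\rc$ forms a connected subgraph of $L(H)$ under the ``distance $1$ or $2$'' adjacency relation. Recall that $\rc$ is built as a union of bad-components, and that bad-components are, by Definition~\ref{def:bad-comp}, the connected components of the hypergraph $(V(E_{bad}), E_{bad})$. So two bad edges in the same bad-component are connected by a path of bad edges in which consecutive edges intersect; that path is a path in $L(H)$ using only distance-$1$ steps. Hence within a single bad-component, the bad edges form a connected subgraph of $L(H)$ even under plain adjacency (distance $1$), which is stronger than what we need.

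The remaining work is to handle the links \emph{between} bad-components that were merged into the same final component. By construction, two bad-components end up in the same final component precisely when they lie in the same connected component of the component-hypergraph (Definition~\ref{def:comp-hg}), i.e. when there is a chain of bad-components $S_0, S_1, \dots, S_\ell$ with consecutive ones joined by some unsafe edge $f_i$ that intersects both $S_{i-1}$ and $S_i$. (In the improved algorithm the merging is driven by the activation/trimming rules, but in every case a merge of two bad-components is witnessed by an unsafe edge meeting both; that is all that is used here.) Fix such an unsafe edge $f_i$ intersecting bad edges $e \in S_{i-1}$ and $e' \in S_i$. Then $e$ and $f_i$ intersect, and $f_i$ and $e'$ intersect, so the distance between $e$ and $e'$ in $L(H)$ is at most $2$. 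Therefore the pair $\{e, e'\}$ is an edge in the ``distance $1$ or $2$'' graph on $E(\rc)$.

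Putting these two observations together: given any two bad edges $a, b \in E(\rc)$, pick the bad-components containing them, connect those bad-components by a chain in the component-hypergraph (which exists since $\rc$ is one connected component), and for each step of the chain use either an intra-bad-component path (distance-$1$ steps) or an inter-bad-component hop of distance $\le 2$ via the witnessing unsafe edge, possibly prepended/appended with a distance-$1$ path inside the relevant bad-component to reach the specific endpoints of that unsafe edge. Concatenating these gives a walk from $a$ to $b$ all of whose consecutive pairs are at distance $1$ or $2$ in $L(H)$, so $E(\rc)$ is connected in the $(1,2)$-adjacency graph, which is exactly the definition of a $(1,2)$-tree in $L(H)$ (the term ``tree'' here refers to connectedness in that auxiliary graph, not to acyclicity).

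I do not expect a serious obstacle: this is essentially a bookkeeping argument translating the two-level construction (bad-components, then component-hypergraph) into distance bounds in $L(H)$. The one point requiring a little care is making sure that the notion of ``final component'' used -- whether the base-algorithm version or the improved version with trimming and activation -- still has the property that every merge of bad-components is witnessed by an unsafe edge meeting both; this holds in both versions because an unsafe edge can only cause a merge (via being added to the component-hypergraph, or via an extension/activation rule) when it intersects each of the bad-components it merges. Once that is noted, the distance-$\le 2$ bound is immediate and the rest is concatenation of connecting walks.
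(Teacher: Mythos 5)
Your argument is exactly the paper's: bad edges within a bad-component are connected at distance $1$ in $L(H)$, two bad-components adjacent in the component-hypergraph are bridged by an unsafe edge giving a distance-$2$ hop, and concatenating these walks yields $(1,2)$-connectivity. The only difference is that you spell out the concatenation step and add an (unnecessary here) aside about the improved algorithm, whereas the paper leaves those routine details implicit; this claim sits in the base-algorithm analysis, so Definition~\ref{def:comp-hg} alone suffices.
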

\begin{claimproof}
It follows easily from the fact that $E(\rc)$ contains edges
    of bad-components that are connected in the component-hypergraph.
Edges of each bad-component by definition are connected in $L(H)$, and
    when two bad-components are adjacent in the component-hypergraph,
    then there must exist two edges, one in each bad-component, which are in a distance 2 in $L(H)$.
\end{claimproof}

For a final component $\rc$
    let $\tau(\rc)$ be the witness structure associated to $\rc$ defined as
        a~maximal subset of $E(\rc)$ which forms a $(2,3)$-tree.

\begin{claim} \label{claim:base-witness-for-component}
    If $\rc$ is a final component,
        then $\tau(\rc)$ is satisfied and $|\tau(\rc)| \geq |E(\rc)|/(\Delta+1)$.
\end{claim}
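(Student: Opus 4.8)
The plan is to prove the two assertions of Claim~\ref{claim:base-witness-for-component} separately: first that the event $\event{\tau(\rc)}$ holds (i.e.\ every edge of the chosen $(2,3)$-tree is bad), and then the lower bound $|\tau(\rc)| \geq |E(\rc)|/(\Delta+1)$ on its size.

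\emph{Satisfaction.} By construction $\tau(\rc)$ is a subset of $E(\rc)$, and $E(\rc)$ is exactly the union of the bad edges of the bad-components that make up the final component $\rc$. Hence every edge of $\tau(\rc)$ is a bad edge of $H$, which is precisely the event $\event{\tau(\rc)}$. So there is essentially nothing to do here beyond unwinding the definitions: $\tau(\rc)$ consists of bad edges, so $\event{\tau(\rc)}$ holds, i.e.\ $\tau(\rc)$ is satisfied.

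\emph{Size bound.} This is the substantive part. By Claim~\ref{claim:base-bad-edges-12tree}, $E(\rc)$ is a $(1,2)$-tree in $L(H)$, and $\tau(\rc)$ is a \emph{maximal} $(2,3)$-subtree of $E(\rc)$. The strategy is the standard covering argument: I want to show that every edge of $E(\rc)$ lies within distance $2$ in $L(H)$ of some edge of $\tau(\rc)$, equivalently that the closed $L(H)$-distance-$\le 2$ neighborhoods of the members of $\tau(\rc)$ cover $E(\rc)$. Given such a covering, since each edge $f\in E(\rc)$ has at most $\Delta$ neighbors in $L(H)$, the number of edges within distance $\le 2$ of a single edge of $\tau(\rc)$ inside the $(1,2)$-tree structure can be bounded by $\Delta+1$ (the edge itself together with the $\le\Delta$ edges reachable by the relevant short paths, using that in a $(1,2)$-tree consecutive members are already at distance $1$ or $2$), giving $|E(\rc)| \le (\Delta+1)\,|\tau(\rc)|$, which rearranges to the claim. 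The key lemma to establish is the covering property, and it follows from maximality: if some $g\in E(\rc)$ were at distance $\ge 3$ in $L(H)$ from every member of $\tau(\rc)$, then — walking along a path in the $(1,2)$-tree $E(\rc)$ from $g$ towards $\tau(\rc)$ — one locates an edge that can be adjoined to $\tau(\rc)$ while preserving both the connectivity-in-$\{2,3\}$-distance property and the ``no two members adjacent in $L(H)$'' property, contradicting maximality.

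\emph{Expected main obstacle.} The delicate point is the counting that converts the covering property into the factor $\Delta+1$: one must be careful about exactly which edges a given $\tau(\rc)$-member ``is responsible for'' and avoid double-dipping, and one must use the $(1,2)$-tree structure of $E(\rc)$ rather than arbitrary $L(H)$-adjacency so that distance-$2$ steps are genuinely available. A clean way to organize it is to fix, for each $f\in E(\rc)$, a nearest member $w(f)\in\tau(\rc)$ and argue that $w^{-1}(\{t\})$ has size at most $\Delta+1$ for each $t$: every $f$ mapped to $t$ is at $L(H)$-distance $\le 2$ from $t$ by the covering property, but a more careful look (using that $t$ itself plus edges at distance exactly $1$ from $t$ already suffices, since a distance-$2$ edge from $t$ shares a neighbor with $t$ and that neighbor is closer) pins the count at $\le \Delta+1$. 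Making this last reduction rigorous — i.e.\ that within a $(1,2)$-tree every edge assigned to $t$ can be charged through a distinct $L(H)$-neighbor of $t$ — is where the proof needs the most care, but it is a routine adaptation of the argument in Lemma~25.2 of \cite{GraphCol}.
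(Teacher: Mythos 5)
Your overall strategy matches the paper's: the satisfaction part is just unwinding definitions (all edges of $E(\rc)$ are bad, $\tau(\rc) \subseteq E(\rc)$), and the size bound comes from a covering argument that uses maximality of the $(2,3)$-tree inside the $(1,2)$-tree $E(\rc)$. However, there is a genuine gap in the counting step, and you even flag it yourself. You establish only a distance-$\le 2$ covering of $E(\rc)$ by $\tau(\rc)$: you show that an edge at distance $\ge 3$ from all of $\tau(\rc)$ would contradict maximality. But a distance-$\le 2$ covering gives, by naive counting, at most $1 + \Delta + \Delta(\Delta-1) = 1 + \Delta^2$ edges per member of $\tau(\rc)$, not $\Delta+1$. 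Your attempted repair --- charging each distance-$2$ edge $f$ to an intermediate neighbor $h$ of $t$ and claiming these charges land on \emph{distinct} neighbors --- does not hold up: several distance-$2$ edges can share the same intermediate, and that intermediate may itself also be an edge of $E(\rc)$ assigned to $t$ that needs its own slot.

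The fix is that your maximality argument is actually stronger than you credited: it proves a distance-$\le 1$ covering, which is exactly Fact 25.5 of \cite{GraphCol} that the paper cites. If some $g \in E(\rc)$ were at distance $\ge 2$ from \emph{every} element of $\tau(\rc)$ (note: $\ge 2$, not $\ge 3$), walk along a $(1,2)$-path in $E(\rc)$ from $\tau(\rc)$ to $g$ and take the first edge $g'$ at distance $\ge 2$ from all of $\tau(\rc)$; its predecessor is at distance $\le 1$ from some $t' \in \tau(\rc)$, and consecutive edges on the path are at distance $\le 2$, so $g'$ is at distance $2$ or $3$ from $t'$. Since a $(2,3)$-tree is allowed to connect at distance $2$ and only forbids distance $\le 1$ between its members, $g'$ could be adjoined to $\tau(\rc)$, contradicting maximality. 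Hence every edge of $E(\rc)$ is in $\tau(\rc)$ or adjacent to it in $L(H)$, and the bound $|E(\rc)| \le (\Delta+1)|\tau(\rc)|$ follows immediately with no need for the delicate charging you anticipated.
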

\begin{claimproof}
By definition $\tau(\rc) \subseteq E(\rc)$, and $E(\rc)$ contains only bad edges,
    therefore $\tau(\rc)$ is satisfied.
The lower bound on its size follows from 
    Claim \ref{claim:base-bad-edges-12tree} and 
    the fact that all elements of $(1,2)$-tree must be adjacent to its maximal $(2,3)$-tree
    (see Fact 25.5 from \cite{GraphCol}).
\end{claimproof}

Claim \ref{claim:base-witness-for-component} allows to focus analysis on witness structures.
Its immediate implication is that,
    if we want to show that
        it is unlikely to discover a final component having more than $\vCompBound$ edges,
    it suffices to show that 
        it is unlikely to find a satisfied witness structure of size greater than $2\log(m)$.

\begin{claim} \label{claim:base-count-witnesses}
    The number of witness structures in $L(H)$ of size $u$ is smaller than $m(4\Delta^3)^u$.
\end{claim}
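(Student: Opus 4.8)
The plan is to bound the number of $(2,3)$-trees of size $u$ in $L(H)$ by a standard counting argument that encodes each such tree as a rooted, labelled object and then controls the number of choices at every step. First I would fix a spanning tree of the auxiliary ``distance $2$ or $3$'' graph on the tree's edges and root it at one of its elements; there are at most $m$ choices for the root edge, since it can be any edge of $H$. Then I would traverse this spanning tree in a fixed order (say a depth-first walk), so that the whole structure is recovered from the root together with, for each of the $u-1$ non-root edges, a description of how it is reached from its parent.

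The key step is to bound the number of ways to extend the tree by one new edge adjacent (in the distance-$2$-or-$3$ sense) to a fixed already-placed edge $f$. An edge $g$ with $\mathrm{dist}_{L(H)}(f,g)\in\{2,3\}$ is reached by a path $f - h_1 - g$ or $f - h_1 - h_2 - g$ in $L(H)$; each intermediate edge is a neighbour of the previous one, and in $L(H)$ every edge of $H$ has at most $\Delta$ neighbours (this is exactly the maximum edge degree bound, using that $H$ is $k$-uniform so $|N(f)| \le k(\Delta+1)$ — more carefully, at most $\Delta$ in the degree sense; I would phrase the constant as is standard). Counting the at most two intermediate edges and the final edge gives at most $\Delta \cdot \Delta \cdot \Delta = \Delta^3$ choices per extension (allowing for the length-$2$ versus length-$3$ case contributes only a small constant factor, absorbed into the base $4$). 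A depth-first encoding also needs, at each of the $u-1$ extension steps, a constant amount of bookkeeping (whether we descend or backtrack, which neighbour index), and I would lump this into the factor $4$ as well. Multiplying, the number of witness structures of size $u$ is at most $m \cdot (4\Delta^3)^{u-1} < m(4\Delta^3)^u$.

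I would carry the steps out in this order: (i) recall that a witness structure is a $(2,3)$-tree, hence connected in the distance-$2$-or-$3$ graph; (ii) root it and fix a DFS traversal of a spanning subtree, paying $m$ for the root; (iii) bound by $\Delta^3$ the number of candidate edges at distance $2$ or $3$ from a fixed edge, using $|N(f)|\le\Delta$ in $L(H)$; (iv) fold the traversal bookkeeping and the length-$2$-vs-$3$ dichotomy into a constant, yielding base $4\Delta^3$; (v) conclude the bound $m(4\Delta^3)^u$.

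The main obstacle I expect is getting the constant clean. The delicate point is that a $(2,3)$-tree by definition has no two of its edges adjacent in $L(H)$, so the spanning structure I build lives in the distance-$2$-or-$3$ graph rather than $L(H)$ itself, and the intermediate edges on the connecting paths are generally \emph{not} part of the tree — they must be counted but not stored, and one has to make sure the encoding still uniquely determines the tree. I would handle this by insisting the connecting path be a shortest path and breaking ties by the fixed edge ordering, so the intermediates are a function of the two endpoints; this keeps the per-step choice at $\Delta^3$ without double counting, and the bound $m(4\Delta^3)^u$ follows.
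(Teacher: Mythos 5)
Your proof is correct and is essentially the paper's argument (cf.\ Claim~25.6 in \cite{GraphCol}): root the tree, pay $m$ for the root, encode the rooted tree shape via a DFS walk (a $4^{u-1}$ factor), and pay at most $\Delta^3$ per extension for a new edge at distance $2$ or $3$ in $L(H)$. The one small slip is harmless: you do not actually need to absorb a ``length-$2$ vs.\ length-$3$'' bit into the base $4$, since in a graph of maximum degree $\Delta$ the number of vertices at distance exactly $2$ or $3$ from a fixed vertex is at most $\Delta(\Delta-1)+\Delta(\Delta-1)^2=\Delta^2(\Delta-1)<\Delta^3$, so $\Delta^3$ already covers both cases with room to spare.
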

\begin{claimproof}
It follows from easy counting
    of possible shapes of the tree of size $u$
    (there are at most $4^u$ of them)
       and the number of its embeddings
            into a graph with $m$ vertices and maximum degree at most $\Delta^3$
            (at most $m(\Delta^3)^{u-1}$ ways of doing it).
See also the proof of Claim 25.6 from \cite{GraphCol}.
\end{claimproof}

\begin{claim} \label{claim:base-event-prob}
    Let $p_B = 2^{1-(1-\alpha)k}$.
    For a witness structure $\tau$ of size $u$ we have
    \[
        \pr{\event{\tau}} \leq p_B^{u}.
    \]
\end{claim}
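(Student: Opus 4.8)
\textbf{Proof proposal for Claim \ref{claim:base-event-prob}.}
The plan is to expose the event $\event{\tau}$ as an intersection of independent events, one per edge of $\tau$, each of which has probability at most $p_B$. Recall that a witness structure $\tau$ is a $(2,3)$-tree in $L(H)$, so the edges of $\tau$ are pairwise non-adjacent in $L(H)$, that is, they are pairwise vertex-disjoint in $H$. This disjointness is exactly what makes the per-edge events independent, because whether an edge $f$ becomes bad depends only on the colors assigned to the vertices of $f$ during the shattering phase, and the color of each vertex is an independent fair coin.

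First I would fix an edge $f$ of $\tau$ and bound the probability that $f$ is bad after the shattering phase. By definition, $f$ is bad only if at some point it has accumulated exactly $(1-\alpha)k$ accepted vertices and all of them share the same color; once $f$ is bad, no further vertex of $f$ is colored. The subtle point is that \emph{which} vertices of $f$ get accepted (and in what order) is itself determined by the random process and by the queries, so I cannot simply say ``the first $(1-\alpha)k$ vertices of $f$.'' The clean way around this is to observe that for $f$ to become bad there must exist \emph{some} set $S \subseteq f$ of size $(1-\alpha)k$ whose vertices all receive the same color, namely the first $(1-\alpha)k$ vertices of $f$ that get accepted. I would therefore argue as follows: reveal the colors of the vertices of $f$ one at a time in the order in which the algorithm accepts them; if at some stage the revealed colors are not all equal, $f$ can never become bad (an unsafe or safe edge instead); hence $\pr{f \text{ is bad}}$ is at most the probability that the first $(1-\alpha)k$ revealed colors are all equal, which is $2 \cdot 2^{-(1-\alpha)k} = 2^{1-(1-\alpha)k} = p_B$ (a factor $2$ for the choice of the common color). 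A short argument is needed to see that this conditioning/revealing is legitimate even though the acceptance order is adaptive: it suffices that the color of each vertex is drawn independently and is not used to decide whether to accept or not until the moment it is revealed here. Actually it \emph{is} used — a vertex is marked troubled precisely because some containing edge is bad — but the key is that by the time we would reveal the $((1-\alpha)k)$-th accepted vertex of $f$, $f$ has not yet been declared bad, so the decision to accept that vertex was made without reference to its color; thus the color is still a fresh fair coin conditioned on everything revealed so far.

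Next I would combine the per-edge bounds. Since the edges of $\tau$ are pairwise disjoint in $H$, the collections of vertex-colors governing ``$f$ is bad'' for distinct $f \in \tau$ involve disjoint sets of random bits. More carefully, I would process the edges of $\tau$ in any order and, for each $f$, apply the one-edge argument \emph{conditioned} on all the randomness revealed for the previously processed edges of $\tau$; because the revealing scheme for $f$ only ever inspects colors of vertices of $f$, and these are disjoint from the vertices of the earlier edges, the conditional probability that $f$ is bad is still at most $p_B$. Multiplying the $u$ conditional bounds gives $\pr{\event{\tau}} \le p_B^{\,u}$, as claimed.

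The main obstacle I anticipate is making the adaptive-revealing argument rigorous: the order in which vertices of a given edge are accepted, and even which vertices are accepted at all, is a random variable depending on the query sequence and on colors of vertices outside $f$, so one must be careful that conditioning on ``the algorithm's behavior so far'' does not secretly bias the not-yet-revealed colors of $f$. The standard remedy — which I expect the authors to use — is a principled coupling: imagine each vertex carries a pre-drawn independent uniform color that is only ``looked at'' by the algorithm when the vertex is accepted; then ``$f$ is bad'' is a function of at most $(1-\alpha)k$ of these pre-drawn colors restricted to $f$, the stopping rule for which colors to look at is a stopping time with respect to the natural filtration, and the optional-stopping-style bound $\pr{\text{first } (1-\alpha)k \text{ looked-at colors of } f \text{ all equal}} \le p_B$ holds; disjointness of edges of $\tau$ then yields the product bound immediately. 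Everything else is a one-line counting estimate.
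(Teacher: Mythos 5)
Your argument is correct and takes essentially the same route as the paper: bound the probability that a single edge $f$ becomes bad by $p_B$ by observing that the first $(1-\alpha)k$ colors assigned to vertices of $f$ must agree, then use the pairwise vertex-disjointness of the edges in a $(2,3)$-tree to multiply these bounds across the $u$ edges. The paper states this in two sentences and leaves the adaptive-revealing/stopping-time justification implicit, which you spell out carefully, but the underlying reasoning is identical.
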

\begin{claimproof}
For a single edge $f$, the probability that $f$ becomes bad is at most $p_B$,
    since during the shattering phase the algorithm has to
        assign at least $(1-\alpha)k$ random colors to vertices of $f$
        (which may not happen)
    and the first $(1-\alpha)k$ of these assignments have to get the same outcome.
For a set of disjoint edges $\tau$, satisfying the event $\event{\tau}$ requires
    such behavior on $u$ disjoint sets of vertices.
\end{claimproof}

\begin{proposition} \label{prop:base-satisfied-witness-prob}
    Let
    \begin{equation} \label{eq:beck-D3-bad-condition}
        \frac{8 \Delta^3}{2^{({1-\alpha}){k}}} < \frac{1}{2}
    \end{equation}
    and $u = \ceil{(c+1)\log(m)}$, for some $c > 0$, then
    \begin{equation*}
       \prt{
         there is a satisfied witness structure of size $u$
       } < \frac{1}{m^c}.
    \end{equation*}
\end{proposition}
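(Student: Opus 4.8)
The plan is to bound the probability of the event in question by a union bound over all witness structures of size $u$, using the three preceding claims as the building blocks. Specifically, by Claim~\ref{claim:base-witness-for-component}, if a final component with more than $\vCompBound$ edges is discovered, then there is a satisfied witness structure of size at least $\frac{\vCompBound}{\Delta+1} = 2\log(m) \geq u$ for the relevant range, and by trimming such a $(2,3)$-tree we may as well consider satisfied witness structures of size exactly $u$ (a subtree of a $(2,3)$-tree is again a $(2,3)$-tree, hence again a witness structure, and by Claim~\ref{claim:base-count-witnesses} it is still satisfied). So it suffices to bound the probability that \emph{some} witness structure of size $u$ is satisfied.

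First I would apply the union bound: by Claim~\ref{claim:base-count-witnesses} there are fewer than $m(4\Delta^3)^u$ witness structures of size $u$, and by Claim~\ref{claim:base-event-prob} each is satisfied with probability at most $p_B^u = \left(2^{1-(1-\alpha)k}\right)^u$. Hence
\begin{equation*}
  \prt{there is a satisfied witness structure of size $u$}
  < m \left(4\Delta^3 \cdot 2^{1-(1-\alpha)k}\right)^u
  = m \left(\frac{8\Delta^3}{2^{(1-\alpha)k}}\right)^u.
\end{equation*}
Now I would invoke the hypothesis~(\ref{eq:beck-D3-bad-condition}), namely $\frac{8\Delta^3}{2^{(1-\alpha)k}} < \frac{1}{2}$, so the bound is less than $m \cdot 2^{-u}$. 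Plugging in $u = \ceil{(c+1)\log(m)} \geq (c+1)\log(m)$ gives $m \cdot 2^{-u} \leq m \cdot 2^{-(c+1)\log(m)} = m \cdot m^{-(c+1)} = m^{-c}$, which is the claimed bound. (Here $\log$ is base $2$, consistent with the rest of the paper; if it were natural log one would simply carry the constant $\log 2$ through and absorb it, or note the inequality in~(\ref{eq:beck-D3-bad-condition}) has enough slack.)

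The only genuinely delicate point — and the step I would be most careful about — is the reduction at the start: making sure that "discovering a large final component" really does force the existence of a satisfied witness structure of size \emph{exactly} $u$, rather than merely of size \emph{at least} some threshold, and that the threshold $2\log(m)$ indeed dominates $u = \ceil{(c+1)\log(m)}$ in the regime of interest, or else restating the conclusion so that it is about "size $u$" cleanly. Since $(2,3)$-trees are closed under taking connected subtrees, one can always prune a large satisfied witness structure down to one of size exactly $u$, and pruning only shrinks the associated event's support while keeping the "all edges bad" property, so the implication goes through; this is routine but worth stating explicitly. Everything else is the arithmetic above, which is immediate once the three claims are in hand.
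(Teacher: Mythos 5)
Your proof is correct and follows essentially the same first-moment/union-bound argument as the paper, combining Claims~\ref{claim:base-count-witnesses} and~\ref{claim:base-event-prob} and then invoking hypothesis~(\ref{eq:beck-D3-bad-condition}) together with $u \geq (c+1)\log(m)$. The surrounding discussion about trimming large final components down to satisfied witness structures of size exactly $u$ is not part of this proposition's proof (the paper handles that reduction in the paragraph following the proposition, via Claim~\ref{claim:base-witness-for-component}), but it is correct and does no harm.
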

\begin{proof}
    We employ the first moment method
        and bound the probability of finding a satisfied witness structure of size $u$,
        by the expected number of such witness structures.
    We use Claims \ref{claim:base-count-witnesses} and \ref{claim:base-event-prob}
        to obtain bound $m(4 \Delta^3 p_B)^u$.
    From assumptions it holds that $4 \Delta^3 p_B < \frac{1}{2}$,
        which together with $u \geq (c+1)\log(m)$ finishes the proof.
\end{proof}

Finally, observe that for all $\alpha \leq 1/4$, and all $k$,
    whenever strengthened LLL condition (\ref{eq:parametrized-LLL-condition}) holds,
    then (\ref{eq:beck-D3-bad-condition}) is satisfied as well.
By taking $u=\ceil{2\log(m)}$ in Proposition \ref{prop:base-satisfied-witness-prob},
    we get that the probability of 
        finding a satisfied witness of size exactly $u$ is less than $1/m$.
It also limits the possibility of finding larger witnesses that are satisfied,
    since each $(2,3)$-tree of size greater than $u$ contains a $(2,3)$-tree of size exactly $u$.
As a result, due to Claim \ref{claim:base-witness-for-component},
    it is unlikely to discover a large final component, and
    the probability of a failure (across all queries)
        during the shattering phase is less than $1/m$.
Moreover, by increasing linearly the acceptable size of the final component,
    we can decrease this bound exponentially.

\nextSectionGap
\subsection{Probability of a failure in the final coloring phase}

Consider a single execution of the procedure \FColorComponent
    and the hypergraph $\rc'$ obtained from the input final component $\rc$.
Since it satisfies Local Lemma assumptions,
    the theorem of Moser and Tardos allows to bound  the expected number of steps
        of \RESAMPLE{} by $|E(\rc')|/\Delta$ (see Theorem 7 in \cite{survey-2021}).
Running that procedure for the number of steps that is twice the expected number,
    ensures that the probability of not finding a solution in a single trial is at most $1/2$.
By repeating the procedure $2 \log(m)$ times,
    the probability of a failure is reduced to $m^{-2}$.

Consider now executions of \FColorComponent across all queries.
Since the number of final components cannot be greater than number of edges,
    it is not called more than $m$ times.
Thus, the probability of a failure (across all queries)
    during the final coloring phase is less than $1/m$.
Again, we point out that by increasing linearly the number of trials of RESAMPLE procedure,
    we can decrease this bound exponentially.

\newpage
\section{Proof of the main result} \label{sec:proof}

In this section we prove our main result formulated in Theorem \ref{thm:main}.
In fact, we show the following, slightly stronger statement.

\begin{theorem}[main result - technical statement] \label{thm:main-technical}
    There exists a local computation algorithm that,
    in polylogarithmic time per query,
        with probability $1-O(1/n)$ solves the problem of two-coloring
        for $k$-uniform hypergraphs with maximum edge degree $\Delta$,
            that satisfy
    \begin{itemize}
        \vspace{0.2cm}
        \item[(i)] $2 \eul (\Delta+1) < 2^{\alpha k}$, for some $\alpha \leq 1/3$, and
        \vspace{0.2cm}
        \item[(ii)] $\Delta \leq \frac{1}{192}~2^{k/3}$.
    \end{itemize}
\end{theorem}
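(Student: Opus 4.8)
The plan is to check that the LCA procedure of Section~\ref{sec:main-lca}, run on any $H$ satisfying (i) and (ii), meets the three requirements packaged in a $(\polylog(n),\Oh{n},\Oh{1/n})$-LCA: it answers queries consistently with some proper two-colouring, it runs in $\polylog(n)$ time per query, and the probability that it ever declares failure over a whole query sequence is $\Oh{1/n}$. Correctness and running time I would dispatch first, along the lines of Appendix~\ref{sec:base-analysis}. For correctness the key is the invariant that after every answered query, deleting the safe edges of $H$ and restricting to the troubled vertices leaves a hypergraph in which every edge still has at least $\alpha k$ vertices and the maximum edge degree is still at most $\Delta$; the trimming-to-bad-component step always leaves $\ge\alpha k$ troubled vertices of a trimmed edge inside one bad-component, the activation-exclusion branch of \FExpandOrAccept only accepts proposed colours of edges that have already become properly coloured, and \FColorComponent adds to a final component only unsafe edges with $\ge\alpha k$ troubled vertices in it, so the invariant is preserved; then (i) together with the Lov\'{a}sz Local Lemma makes the residual hypergraph two-colourable, so the accepted partial colouring always extends, and since accepted colours are never changed, repeated queries get consistent answers. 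For the running time, the hard caps of $\vCompBound$ edges on $B$ in \FBuildComponent and on every search area $A$ in \FExpandOrAccept force every region examined during a query, and its edge-neighbourhood, to have $\Oh{\log n}$ edges, each processed in $\Oh{1}$ time, while \FColorComponent runs \RESAMPLE\ for $\Oh{\log n}$ steps over $\Oh{\log n}$ trials, for $\Oh{\log^2 n}$ per query.

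The substance is the failure bound. A failure occurs only if \FColorComponent exhausts its \RESAMPLE\ trials, or \FBuildComponent grows $B$ beyond $\vCompBound$, or some call to \FExpandOrAccept grows its search area beyond $\vCompBound$. The first mode is handled exactly as in the base case: each $\rc'$ obeys the Local Lemma hypothesis by (i), so by Moser--Tardos the expected number of resamplings is $|E(\rc')|/\Delta=\Oh{\log n}$, running twice that many steps fails with probability $\le 1/2$, $2\log m$ independent trials drive this to $m^{-2}$, and summing over the at most $m$ components gives $<1/m$. For the other two modes I would argue, as in the base analysis, that if the two procedures are imagined to ignore their caps then a failure would mean that some region actually explored --- a final component, or a search area inside some \FExpandOrAccept call, under some query order --- has more than $\vCompBound$ bad edges; so it suffices to show it is unlikely that \emph{any} such region is large. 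To do that I would attach to every potentially-explored region a \emph{witness structure} $\tau$ in the line graph $L(H)$ together with an event $\event{\tau}$ depending only on the colours sampled in the shattering phase, with the properties that (a) exploring the region forces $\event{\tau}$ to hold, and (b) $|\tau|$ is at least $|E(\text{region})|$ divided by a factor depending only on $\Delta$; a region with more than $\vCompBound$ edges then yields a satisfied witness of size at least $2\log m$, and a first-moment bound over all witnesses of that size --- number of witnesses times probability of being satisfied --- closes the argument once the per-element ``count times probability'' factor drops below a constant, with the threshold $\ceil{2\log m}$ giving probability $<1/m$. Summing the three contributions yields $\Oh{1/m}=\Oh{1/n}$, and each bound can be reduced to $\Oh{1/n^{c}}$ for any constant $c$ by enlarging the caps and the trial count linearly in $c$.

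The real work, and the step I expect to be the main obstacle, is constructing these witness structures, because here --- unlike the base case, where the witness is simply a $(2,3)$-tree of bad edges --- the structure has to record \emph{why} each expansion step happened, and these reasons are probabilistically heterogeneous. Every expansion of $B$ (rules (r1)--(r3)) and every expansion of a search area (configurations (e1)--(e3), as well as the accepting branch that returns $\emptyset$) is driven by an unsafe edge carrying an amortizing configuration, and these split into: a genuinely monochromatic edge, of probability only $\le 2^{1-k}$; an unsafe edge monochromatic outside the active troubled area, or one able to cause an activation of the current bad-component, which is monochromatic on more than $(1-\alpha)k$ vertices and so has probability $\le 2^{1-(1-\alpha)k}=p_B$; and an unsafe edge meeting two disjoint bad edges, which contributes two (essentially independent) $p_B$-factors while dragging two fresh bad edges into the structure. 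So $\tau$ becomes a bounded-branching tree whose nodes carry events of these three kinds together with the bad edges of each visited bad-component, with adjacency given by intersection or near-intersection (distance up to $3$) in $L(H)$, so that there are $\Oh{c^u}$ tree shapes and $m\cdot(\Delta^{\Oh{1}})^u$ embeddings of size $u$. The delicate part is the amortization: one must show that within any witness of size $u$ the scarce-but-cheap monochromatic-type events together with the paired-bad-edge events save enough to beat the $m\,(\Delta^{\Oh{1}})^u$ count even though the plain $p_B$-factors alone would not, so that $\pr{\event{\tau}}\le 2^{-\Omega(u)}$ relative to the count; this is exactly where $\alpha<1/3$ and the constant $\frac{1}{192}$ in (ii) enter. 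A further subtlety is that the region actually explored depends on the query order and on colours accepted in earlier queries, so $\tau$ must be pinned not to that region but to an order-independent over-structure that contains every possible exploration, so that the first-moment bound covers all query orders at once. Working out the precise charging scheme --- how many bad edges one amortizing step may drag in, and how far apart witness elements may sit without spoiling either the probability product or the embedding count --- is the technical core, which I would develop in a dedicated appendix.
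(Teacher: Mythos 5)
Your high-level plan matches the paper's: correctness via the LLL invariant on the residual hypergraph, running time via the explicit $\vCompBound$ caps, and the failure bound via witness structures embedded in $L(H)$ with a first-moment union bound. Two smaller points first. Your running-time accounting for \FBuildComponent{} is incomplete: each individual set $B$ and each search area $A$ has $\Oh{\log n}$ edges, but a single call to \FBuildComponent{} can trigger $\Oh{\log n}$ separate calls to \FExpandOrAccept{} (one per unsafe edge adjacent to $B$), so the shattering phase alone already costs $\Oh{\log^2 n}$; the per-query bound of $\Oh{\log^2 n}$ is still right, but not because the region examined is of size $\Oh{\log n}$. Also, the paper does not build an ``order-independent over-structure'': the witness is pinned to the very region explored under the given query order, and the union bound over all combinatorial witness structures of a given size (at most $m(48\Delta)^u$) already handles every order, because the per-vertex probability bounds are argued at the moment each vertex's color is sampled, so $\pr{\event{\tau}}\le q^u$ holds for every fixed query order.

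The substantive issue is that you have correctly located the hard part and then declined to do it. The entire jump from $\alpha\le 1/4$ to $\alpha<1/3$ lives in the witness-structure construction and the amortization scheme, and the paper needs a full further appendix for it. There, witness structures are tuples $(T,f,\lab,\ort)$ with a tree $T$, a homomorphism $f:T\to L(H)$, a \emph{six}-way labeling $\lab\in\{\typeM,\typeB,\typeBin,\typeBout,\typeU,\typeE\}$, and an orientation $\ort$ of the arcs, subject to a balance condition $2\#\{\typeM\}+\#\{\typeB,\typeBin,\typeBout\}-\#\{\typeE\}\ge 0$ and to the active edges being pairwise disjoint; the orientation induces node depths, which are exactly what makes the $\typeBin$ and $\typeBout$ events well-defined (they reference vertices covered by shallower nodes) and lets the probability argument go vertex-by-vertex. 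Your sketch misses the $\typeU$ label altogether (needed when the trigger edge $f$ is disjoint from both the current component and the new search area and has fewer than $\alpha k$ troubled vertices on each side, where $p_U = 2/2^{(1-2\alpha)k}$ is the only available bound), never introduces the orientation/depth machinery, and does not exhibit the charging scheme by which each $\typeM$ node pays for two empty nodes and each $\typeB$-type node for one. One must then actually produce, for every region the algorithm might explore, a satisfied proper witness of size at least $|C|/(\Delta+1)$, which the paper does through roughly fifteen incremental-construction claims tracking each expansion step of \FBuildComponent{} and \FExpandOrAccept. As written, what you have is a faithful outline of the paper's strategy, not a proof of the theorem.
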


The first condition in the above theorem ensures correctness of our algorithm.
The second follows from the analysis and is required to obtain
    that with high probability every final component is of at most logarithmic size.

Observe that for $k \geq 8$ the second condition implies the first with $\alpha = 1/3$.
On the other hand, it is not reasonable to take smaller $k$,
    since then, the first condition implies that hypergraph has no edges ($\Delta < 1$).
So in principle, we can narrow down theorem assumptions only to the second condition.
In relation to Theorem \ref{thm:main} the technical statement
    allows to take $\alpha = 1/3$ and it clarifies what is meant by a large enough $k$
    (note that the strengthened LLL criterion for $\alpha < 1/3$
        implies the second condition when $k$ is sufficiently large).

Fix $\alpha \leq 1/3$ and let $H$ be a $k$-uniform hypergraph,
    satisfying assumptions of Theorem \ref{thm:main-technical} for that $\alpha$.
Similarly to the analysis of the base algorithm (see section \ref{sec:base-analysis}),
    we show that the improved algorithm described in section \ref{sec:main-lca},
    fulfills the conditions of $\left(\log^2(n), \Oh{n}, \Oh{1/n}\right)$-LCA
    for the problem of two-coloring of $k$-uniform hypergraphs.
To prove the correctness we can use literally the same argument as for the base algorithm
    (see section \ref{sec:base-analysis-correctness}).
The space requirements also remain the same.
Therefore, below we discuss only the running time and the probability of a failure.
Since the improved algorithm
    substantially differs from the base version only in the implementation
        of \FBuildComponent, we focus our analysis on that procedure.

\subsection{Running time of the shattering phase}

Consider a call \FBuildComponent.
Its running time depends linearly on the number of edges processed during its work.
These are bad and unsafe edges explored
    through extension rules and 
    in conditional expansions that end up accepting their searched areas.
Since all processed unsafe edges are adjacent to explored bad edges,
    and $\Delta$ is $\Oh{1}$, we can only focus on estimating the number of the latter.
We show that it is $\Oh{\log^2(n)}$, therefore 
    the final component construction fits in $\Oh{\log^2(n)}$ bound on the running time.

Focus on a single execution of \FExpandOrAccept.
Bad edges are explored when a selected bad-component is expanded,
    then they are added to the search area.
Due to the explicit check of the size during component expansions (in \FExpandBadComp),
    and by checking the size of the search area,
    a single conditional expansion can explore at most $\Oh{\log(n)}$ bad edges.

Back to \FBuildComponent, now we can notice that
    before, during and after each extension rule, set $B$ is of size $\Oh{\log(n)}$.
Thus, it remains to bound the number of bad edges explored in
    those conditional expansions that end up accepting their searched areas.
Notice, that number of all conditional expansions is bounded by
    the number of unsafe edges adjacent to $B$, so there are at most $\Oh{\log(n)}$ of them.
This implies that the total number of explored edges is indeed $\Oh{\log^2(n)}$.

\subsection{Probability of a failure in the shattering phase}

Similarly to analysis of the shattering phase in the base algorithm
    (see section \ref{sec:base-shattering-fail}),
    we suppose that instead of failing,
        procedures \FBuildComponent and \FExpandOrAccept
        continues until a complete component (either final component or search area) is constructed.
Therefore, we measure the probability of discovering a component
    that has more than $\vCompBound$ edges.
We bound this probability in the following proposition.

\begin{proposition} \label{prop:bound-component-size}
    Under the assumptions of Theorem \ref{thm:main-technical}, for any sequence of queries,
        the probability that the algorithm
            discovers a final component or search area containing more than $\vCompBound$ edges
                is smaller than $\frac{2}{m}$.
\end{proposition}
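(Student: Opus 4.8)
The plan is to mirror the structure of the base-algorithm analysis (Claims \ref{claim:base-bad-edges-12tree}--\ref{claim:base-event-prob} together with Proposition \ref{prop:base-satisfied-witness-prob}), but with a richer notion of witness structure that accounts for the extra kinds of events the improved algorithm reacts to. First I would fix a discovered component $\rc$ -- either a final component produced by \FBuildComponent or a search area produced by \FExpandOrAccept -- and show that the bad edges $E(\rc)$ form a $(1,2)$-tree in $L(H)$, exactly as in Claim \ref{claim:base-bad-edges-12tree}: each bad-component is connected in $L(H)$, and consecutive bad-components are joined through an unsafe edge, hence at distance $2$. Passing to a maximal $(2,3)$-subtree $\tau(\rc)$ loses only a factor $\Delta+1$ in size, so it suffices to bound the probability that there exists a ``satisfied'' witness structure of size $u=\ceil{2\log(m)}$.

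The essential difference from the base case is what ``satisfied'' should mean. In the base algorithm every edge of the witness is simply bad, with probability at most $p_B=2^{1-(1-\alpha)k}$ each. Here the component grows not only through bad edges but also through the amortizing configurations (r1)/(r2) and (e1)/(e2)/(e3): an unsafe edge that is monochromatic outside the already-explored troubled area, or an unsafe edge meeting two disjoint external bad edges. So I would let a witness structure be a $(2,3)$-tree $\tau$ in $L(H)$ \emph{together with a labelling} of its edges by a ``type'' recording which mechanism caused that part of the expansion -- roughly: an edge labelled $\typeB$ contributes the event ``$f$ is bad'' ($\le p_B$), an edge labelled $\typeM$ contributes ``$f\setminus(\text{some prescribed vertex set})$ is monochromatic'' (which, since $f$ still has $\ge\alpha k$ free vertices, costs at most $2^{1-\alpha k}$), and an edge labelled $\typeBout$ contributes ``$f$ meets two disjoint bad edges'', which is covered by two further bad-edge events on disjoint vertex sets. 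The event $\event{\tau}$ is the conjunction over the tree. Because the $(2,3)$-tree condition guarantees the edges of $\tau$ are pairwise \emph{non-adjacent} in $L(H)$, the vertex sets witnessing the individual events can be chosen disjoint, so the probabilities multiply: $\pr{\event{\tau}}\le q^{|\tau|}$ where $q=\max(p_B,2^{1-\alpha k})^{\Theta(1)}$, and with $\alpha\le 1/3$ both $p_B=2^{1-2k/3}$ and $2^{1-\alpha k}\le 2^{1-k/3}$ so $q\le 2^{1-k/3}$ up to the fixed combinatorial blow-up from the $\typeBout$ type.

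The counting step then generalizes Claim \ref{claim:base-count-witnesses}: the number of labelled $(2,3)$-trees of size $u$ in $L(H)$ is at most $m\,(c_0\Delta^3)^u$ for an absolute constant $c_0$ (the extra factor over $4\Delta^3$ absorbing the constant number of possible type-labels per edge and the bounded branching needed to attach the auxiliary disjoint bad edges used by $\typeBout$). A first-moment bound gives
\[
    \pr{\text{some satisfied witness of size }u} \;\le\; m\,(c_0\Delta^3 q)^u .
\]
Condition (ii) of Theorem \ref{thm:main-technical}, $\Delta\le \tfrac{1}{192}2^{k/3}$, is precisely what is needed to make $c_0\Delta^3 q < \tfrac12$ (this is where the constant $192$ comes from, after one tracks $c_0$ and the factor $2^{1-k/3}$ through); taking $u=\ceil{2\log m}$ then yields probability $<1/m$ for witnesses of that exact size, and since every larger $(2,3)$-tree contains one of size exactly $u$, the same bound controls all large witnesses. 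Finally one must observe that \emph{both} \FBuildComponent and \FExpandOrAccept can overflow, so one applies the estimate to both families of discovered components (final components and search areas); union-bounding over these two (each bounded by $m$ in number, but the witness argument already ranges over all of $L(H)$, so a single application with a factor $2$ suffices) gives the claimed bound $\tfrac2m$.

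I expect the main obstacle to be the bookkeeping in the probability step: making precise, for a given witness edge, \emph{which} vertex subset realises its event and verifying that across the whole $(2,3)$-tree these subsets are genuinely disjoint even when some edges carry the $\typeBout$ type (which ``reaches into'' two external bad edges that might themselves sit on the tree). Getting the amortization right -- i.e.\ charging an expansion step through an unsafe edge to an event whose probability is small enough that $c_0\Delta^3 q<1/2$ still holds with $\alpha$ up to $1/3$, rather than only $1/4$ -- is the crux, and is exactly the ``construction of the witness structures'' the introduction flags as the main technical contribution; here I would lean on the detailed construction in Appendix \ref{sec:proof-witness} and simply quote the per-witness probability and counting bounds it provides.
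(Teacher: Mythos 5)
The core of your plan---replace ``every witness edge is bad'' with a small menu of event types and push through the first-moment bound---is indeed the spirit of the paper's argument, but the structure you build it on is the wrong one, and the mismatch is fatal at $\alpha=1/3$ rather than a bookkeeping nuisance.

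You keep the base case's $(2,3)$-tree as the underlying combinatorial object. A $(2,3)$-tree of size $u$ has branching $\Delta^3$ per node, so the counting term is $\Theta(\Delta^3)^u$. For the first-moment bound to work you therefore need the per-node event probability $q$ to satisfy $c_0\Delta^3 q<1/2$. Under condition (ii) you have $\Delta\approx 2^{k/3}$, so $\Delta^3\approx 2^{k}$, and the only event in your menu cheap enough to beat that is the \emph{fully} monochromatic edge ($p_M=2^{1-k}$). But most of the growth of a final component or a search area is driven by edges that merely become bad, and with $\alpha=1/3$ the bad-edge probability is $p_B=2^{1-2k/3}$. Then $\Delta^3 p_B\approx 2^{k/3}\to\infty$: the $(2,3)$-tree count overwhelms the bad-edge probability, exactly the obstruction that stops the base analysis at $\alpha\leq 1/4$. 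No relabelling of the same $(2,3)$-tree fixes this, because every node of a $(2,3)$-tree must carry an event, and there is no way to charge an inexpensive $\typeM$ node for the expensive $\typeB$ nodes around it while still multiplying probabilities edge-by-edge.

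The paper's witness structure avoids this by abandoning the $(2,3)$-tree shape entirely. It embeds a tree into $L(H)$ by a \emph{homomorphism} (adjacent tree nodes map to adjacent edges of $H$), which drops the branching from $\Delta^3$ to $\Delta$; the price is that adjacent edges cannot carry independent events, so some nodes are labelled \emph{empty} ($\typeE$) and carry no event at all. The disjointness you are worried about is then restored by hand (a ``proper'' structure has pairwise disjoint active edges), and the loss from empty nodes is controlled by the \emph{balance} condition $2\numOfT{M}+\numOfSet{\typeB,\typeBin,\typeBout}-\numOfT{E}\geq 0$: a $\typeM$ node (probability $p_M=2^{1-k}$) amortizes two empty nodes, a bad-type node (probability $\le p_B=2^{1-2k/3}$) amortizes one. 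With $q=2^{1-k/3}$ this gives $\pr{\event{\tau}}\leq q^{|\tau|}$, the count is $m(48\Delta)^u$, and condition (ii) is tuned to make $48\Delta q\leq 1/2$. You correctly identify the amortization as the crux but then punt to the appendix; the issue is that the appendix's amortization is not an adjustment to your scheme but a replacement of its backbone. As a minor point, the factor $2$ in the final bound comes from summing the geometric series $\sum_{u\geq 2\log m}m(1/2)^u\leq 2/m$, not from a union bound over ``final components'' and ``search areas''---the construction of Proposition \ref{prop:witness-for-component} already produces a single witness for either kind of overflow.
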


Proof of this proposition has a similar construction to the proof
    of the analogous property for the base algorithm,
        however, the respective steps are more complex than before.
First we define witness structures linked to the line graph of $H$,
    capable of tracking expansion of the final components and search areas.
This time witness structures are more sophisticated and
    events defined by them may express several behaviors on edges belonging to them.
Then, we bound the number of possible structures of a certain size and show that it is unlikely
    to find a large enough satisfied witness structure.
Finally, we show how to construct the satisfied witness structure following
    the work of \FBuildComponent and \FExpandOrAccept procedures.
The whole proof %of Proposition \ref{prop:bound-component-size}
    is quite technical that is why we defer it to the next section.

\newpage
\section{Proof of Proposition \ref{prop:bound-component-size}} \label{sec:proof-witness}

\subsection{Witness structures} \label{sec:witness}
In this section we define a witness structure that describes events
    connected to the work of the algorithm presented in section \ref{sec:main-lca}.
The algorithm works on a $k$-uniform hypergraph $H$
    that satisfies assumptions of Theorem \ref{thm:main-technical} for some $\alpha \leq 1/3$.

\subsubsection{Basic properties and definitions}

The main part of a witness structure is a tree that is 
    embedded in the line graph $L(H)$ by a homomorphism (i.e. mapping preserving adjacency).
Additionally, it is equipped with some extra information
    -- labels denoting type of each vertex and an orientation of edges.

\begin{definition}
    \emph{Witness structure of size $u$} is a tuple $\tau= (T, f, \lab, \ort)$ in which
    \begin{enumerate}
        \item $T$ is a tree of size $u$ (unlabeled tree on $u$ vertices)
        \item $f: T \rightarrow L(H)$ is a homomorphism of $T$ into the line graph of $H$,
        \item $\lab$ assigns labels from
              $\{\typeM,
                 \typeB,
                 \typeBin, \typeBout,
                 \typeU,
                 \typeE
              \}$
              to the vertices of $T$,
        \item $\ort$ is a function that directs edges of $T$.
    \end{enumerate}
\end{definition}

\nextLines{
\nextLine We use term \emph{nodes} for the vertices of a witness structure,
            to keep a clear distinction between the vertices of $T$ and the vertices of $H$.
\nextLine For the similar reason, the edges of $T$ are called \emph{arcs}.
\nextLine For each node $x$ let $f_{x}$ denote the edge assigned to it by homomorphism $f$
            (it is possible that two nodes maps to the same edge).
\nextLine Nodes labeled with $\typeE$ are \emph{empty}, all other nodes are \emph{active}.
\nextLine Let $E(\tau) \subset E(H)$ be the set of all edges assigned to nodes of $\tau$.
\nextLine Let $A(\tau) \subset E(\tau)$ be the set of all edges assigned to active nodes of $\tau$.
\nextLine Nodes without outgoing edges are called \emph{roots}.
\nextLine For each node $x$ we define its \emph{depth} $d(x)$ as
            the length of the maximum directed path (of arcs) 
                from $x$ to one of the reachable roots.
\nextLine Roots have depth $0$.
\nextLine For a set of edges $S$, let $N^{+}(S)$ denote set of edges \emph{covered} by the edges in $S$,
    i.e. the set of edges that either belong or are adjacent to $S$
    (formally $N^{+}(S) = \{ S \cup \bigcup_{f \in S} N(f) \}$).
}
For a positive $d$ and a witness structure $\tau$:
\begin{itemize}
    \item let $A_{<d}(\tau) \subset A(\tau)$ be the set of edges assigned to
            the active nodes of $\tau$ of depth strictly smaller than $d$;
    \item let $\tCover{\tau} = V(N^{+}(A(\tau)))$
            be the set of vertices belonging to edges covered by $A(\tau)$,
    \item let $\tCover[<d]{\tau} = V(N^{+}(A_{<d}(\tau)))$
            be the set of vertices belonging to edges covered by $A_{<d}(\tau)$.
\end{itemize}

Definition of witness structure captures the basic shape of the  structure that we are going to use.
That is sufficient to obtain a reasonable bound for the number of structures of a given size.

\begin{claim} \label{claim:count-witnesses}
    The number of witness structures of size $u$ is smaller than $m(48\Delta)^u$.
\end{claim}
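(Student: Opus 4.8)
The plan is to bound the number of witness structures of size $u$ by a product of three factors: the number of unlabeled trees $T$ on $u$ vertices, the number of homomorphic embeddings $f\colon T\to L(H)$, and the number of ways to choose the labels $\sigma$ and the orientation $\phi$. First I would recall that the number of unlabeled rooted trees on $u$ vertices is at most $4^u$ (the standard Catalan-type bound), and a tree together with a choice of root and an orientation of all arcs toward/away from it is determined by a bit per arc, but since the orientation here is arbitrary (each arc directed independently), that costs an extra factor of $2^{u-1}<2^u$. So the combinatorial "shape plus orientation" data is at most $4^u\cdot 2^u = 8^u$.

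Next I would count embeddings into the line graph. Since $L(H)$ has $m$ vertices and maximum degree at most $\Delta^2$ (each edge of $H$ meets at most $\Delta$ others through each of its $k$ vertices — more carefully, at most $k(\Delta-1)<k\Delta$, but the paper's earlier convention is to bound it by $\Delta^2$ or similar; I would use whatever bound the paper has already fixed, here apparently treating the relevant degree as at most a polynomial in $\Delta$). Fixing an image for one node of $T$ costs a factor of $m$, and each subsequent node, being adjacent in $T$ to an already-placed node, has at most $\Delta^2$ choices for its image. Hence there are at most $m(\Delta^2)^{u-1}\le m\,\Delta^{2u}$ embeddings. Wait — I should reconcile this with the target bound $m(48\Delta)^u$, which only has $\Delta^u$, not $\Delta^{2u}$. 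This tells me the intended counting must exploit that adjacency in $L(H)$ means sharing a \emph{vertex} of $H$, so an embedding is really determined by, for each arc, a shared vertex: placing the first edge costs $m$, and each new node is reached from its parent by first choosing one of the $k$ vertices of the parent edge and then one of the $\Delta-1$ other edges on that vertex, giving at most $k\Delta$ choices — but $k$ is not $O(\Delta)$ in general. The main obstacle is therefore getting the per-step embedding cost down to $O(\Delta)$ rather than $O(k\Delta)$.

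The resolution I would pursue: the witness structures are built so that each arc of $T$ corresponds not to an arbitrary adjacency in $L(H)$ but to a \emph{specific} local move of the algorithm (expansion through a bad-component, or through an unsafe edge under one of the extension rules), and in each such move the number of target edges reachable from a given source edge is $O(\Delta)$ — indeed at most $\Delta$ bad edges in an adjacent bad-component reached through a fixed unsafe edge, or at most $\Delta$ edges neighboring a fixed bad edge within a bad-component when one also records \emph{which} vertex was used (but the bad-component structure already pins this down). So I would argue that, conditioned on the label of the parent and of the child (which records the type of move), the child's image has at most $\Delta$ preimages under the move relation, giving $m\Delta^{u-1}$ embeddings-with-roles. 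The labels come from a set of size $6$, so choosing $\sigma$ costs $6^u$. Collecting everything: $8^u$ (tree + orientation) times $m\Delta^{u-1}$ (embedding) times $6^u$ (labels) is at most $m(8\cdot 6\cdot\Delta)^u = m(48\Delta)^u$, which is exactly the claimed bound.

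Concretely, the steps in order are: (1) bound the number of shapes $T$ by $4^u$; (2) bound the number of orientations $\phi$ by $2^u$; (3) bound the number of labelings $\sigma$ by $6^u$ since $|\{\typeM,\typeB,\typeBin,\typeBout,\typeU,\typeE\}| = 6$; (4) bound the number of homomorphisms $f$ by $m\,\Delta^{u-1}\le m\,\Delta^u$, using that $L(H)$ is reached one node at a time, each new node adjacent to a placed one, and that the adjacency-with-label data admits at most $\Delta$ extensions per step; (5) multiply: $4^u\cdot 2^u\cdot 6^u\cdot m\,\Delta^u = m(48\Delta)^u$, and note the bound is strict because the tree count $4^u$ and the orientation count $2^{u-1}$ are not simultaneously tight, or simply because one of the inequalities is strict. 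The delicate point — and the one I would be most careful to state correctly — is step (4): making precise why, even though $L(H)$ can have degree as large as $\Theta(k\Delta)$, the \emph{structured} adjacencies recorded by a witness structure only ever offer $O(\Delta)$ choices, so that the factor $k$ never enters the count.
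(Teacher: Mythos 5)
Your final arithmetic matches the paper's bound and the decomposition into tree shapes, orientations, labels, and embeddings is the same, but the way you justify the embedding count contains a real misunderstanding of the paper's notation. You treat $\Delta$ as a \emph{vertex} degree of $H$ and therefore worry that the line graph $L(H)$ could have degree $\Theta(k\Delta)$, and you try to rescue the $\Delta^{u-1}$ embedding count by a hand-wavy argument that the witness-structure labels restrict the adjacencies to ``algorithmic moves'' with at most $\Delta$ options. That patch is neither needed nor rigorous: a witness structure only requires $f$ to be a graph homomorphism into $L(H)$, and nothing in the definition ties an arc of $T$ to a particular vertex of the parent edge or a particular type of move, so the labels do not cut down the embedding count.

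The actual resolution is much simpler and is stated explicitly in the paper's preliminaries: $\Delta$ denotes the maximum \emph{edge} degree of $H$, i.e.\ the maximum over edges $f$ of $|N(f)|$, which is precisely the maximum degree of the line graph $L(H)$. Hence a homomorphism of a tree on $u$ nodes into $L(H)$ is determined by choosing one of $m$ images for the first node and then, for each of the remaining $u-1$ nodes, choosing one of at most $\Delta$ neighbors of an already-placed node. This directly gives the $m\Delta^{u-1}$ bound, and with $4^u$ tree shapes, $2^{u-1}$ orientations, and $6^u$ labelings you get $4^u\cdot 2^{u-1}\cdot 6^u\cdot m\Delta^{u-1} < m(48\Delta)^u$. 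You should drop the detour about $k\Delta$ and the label-conditioned move count, and instead cite the definition of $\Delta$ as the maximum edge degree.
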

\begin{claimproof}
    There are at most $4^u$ unlabeled trees on $u$ vertices.
    For every such tree $T$,
        there are at most $m \Delta^{u-1}$ homomorphic mappings of $T$ into $L(H)$,
        since the first node can be mapped to any vertex of $L(H)$,
            and every other node has to be mapped in a way that preserves adjacency.
    Finally,
        there are $6^u$ candidates for labeling function $\lab$,
        and $2^{u-1}$ choices for $\ort$.
\end{claimproof}

\subsubsection{Events described by witness structures}

For each active node $x$, the label assigned to $x$ determines some basic event.
It depends only on the colors assigned to vertices of $f_x$ in the shattering phase.
Recall that for an edge $f$,
    $f|_a$ denotes set of accepted vertices of $f$,
    and $f|_t$ denotes set of troubled vertices of $f$.

\begin{definition}
    Let $x$ be an active node of a witness structure $\tau$.
    The \emph{basic event $\eventBasic{x}$} is defined as follows,
        depending on the label $\lab(x)$ assigned to $x$:
    \begin{list}{}{\leftmargin=5.2em \labelwidth=5.2em \topsep=0em}
    \newcommand{\litem}[1][]{\item[~\labelitemi~ #1 \hfill --]}

    \litem[$\typeM$] $f_x$ is monochromatic,

    \litem[$\typeB$] $f_x$ is bad,

    \litem[$\typeBin$] $f_x$ is unsafe,
            and set $f_x|_a \cup (f_x|_t \cap \tCover[<d(x)]{\tau})$
            is monochromatic and of size at least $(1-\alpha)k$,

    \litem[$\typeBout$] $f_x$ is unsafe, 
            and set $f_x|_a \cup (f_x|_t \setminus \tCover[<d(x)]{\tau})$
            is monochromatic and of size at least $(1-\alpha)k$,

    \litem[$\typeU$] $f_x$ is unsafe
              with less than $2\alpha{k}$ troubled vertices,
                  that is, $f_x|_a$ is monochromatic and of size at least $(1-2\alpha)k$.
    \end{list}
\end{definition}

Observe that the definitions of basic events for the nodes labeled
    $\typeBin$ or $\typeBout$ involve the shape of the enclosing witness structure.

\begin{definition}
    For a witness structure $\tau$, event $\event{\tau}$ is defined as
        the conjunction of the basic events associated to active nodes of $\tau$.
    We say that witness structure $\tau$ is \emph{satisfied} when $\event{\tau}$ holds.
\end{definition}
Note that empty nodes of $\tau$ do not influence the event $\event{\tau}$
    (alternatively, we may suppose that events assigned to empty nodes are always satisfied).
In the following, we define upper bounds $p_M$, $p_B$, $p_U$ for the probabilities of basic events.

\begin{claim}
    For $x$ labeled with $\typeM$ the probability of $\eventBasic{x}$ is $p_M = 2/2^{k}.$
\end{claim}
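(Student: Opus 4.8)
The plan is to unwind the definition of the basic event attached to a node labeled $\typeM$ and reduce it to a trivial computation about independent fair coin flips. Recall that for such a node $x$ the event $\eventBasic{x}$ asserts that the edge $f_x$ is monochromatic, i.e.\ that all of its vertices --- both those ultimately classified as accepted and those classified as troubled --- carry the same color. The crucial observation is that in the modified shattering phase \emph{every} vertex of $H$ is assigned exactly one color, drawn uniformly and independently from the two available colors (each vertex reads its color from a dedicated block of the random tape, as in the standard LCA construction, so the outcome is independent of the processing order and of whether the vertex is later marked accepted or troubled).

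First I would record that $f_x$, being an edge of the $k$-uniform hypergraph $H$, has exactly $k$ vertices, and that --- unconditionally --- the colors of these $k$ vertices form a family of $k$ independent uniform bits. The event that they are all equal decomposes as the disjoint union, over the two possible colors, of the events ``all $k$ vertices receive that color'', each of probability $2^{-k}$. Hence $\pr{\eventBasic{x}} = 2 \cdot 2^{-k} = 2/2^{k}$, which is exactly the asserted value of $p_M$.

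The only subtlety worth a remark --- rather than a genuine obstacle --- is that $\eventBasic{x}$ is defined purely in terms of the colors sampled during the shattering phase, so the fact that a troubled vertex's proposed color may later be overwritten in the final coloring phase is irrelevant to this bound. Unlike the $\typeBin$ and $\typeBout$ labels, the $\typeM$ event carries no dependence on the enclosing witness structure, so the estimate is exact and no amortization is needed; the argument is therefore complete after the elementary counting above.
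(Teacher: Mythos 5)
Your proposal is correct and, as you yourself note at the end, the argument is elementary: since in the improved shattering phase every vertex of $f_x$ (accepted or troubled alike) receives an independent uniform random color, monochromaticity of the $k$-element edge has probability exactly $2\cdot 2^{-k}$. The paper in fact states this claim without any written proof, treating it as immediate, and your reasoning is exactly the calculation it leaves implicit; you also correctly flag that, unlike the $\typeBin$/$\typeBout$ events, this event does not depend on the enclosing witness structure, which is why the bound is exact rather than an upper estimate.
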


\begin{claim}
    For $x$ labeled $\typeB$, $\typeBin$ or $\typeBout$
        the probability of $\eventBasic{x}$ is at most $p_B = 2/2^{(1-\alpha)k}$.
\end{claim}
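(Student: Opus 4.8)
The plan is to treat the labels $\typeB$, $\typeBin$, $\typeBout$ uniformly. In each case I would argue that $\eventBasic{x}$ is contained in the event that some explicitly described set of at least $(1-\alpha)k$ vertices of $f_x$, each of which receives a (uniformly random) colour during the shattering phase, ends up monochromatic. To set this up, for a node $x$ carrying one of these labels call a vertex $w\in f_x$ \emph{relevant}: for $\typeB$, if $w$ is accepted; for $\typeBin$, if $w$ is accepted, or $w$ is troubled and $w\in \tCover[<d(x)]{\tau}$; for $\typeBout$, if $w$ is accepted, or $w$ is troubled and $w\notin \tCover[<d(x)]{\tau}$. One checks directly from the definitions that $\eventBasic{x}$ implies that $f_x$ has at least $(1-\alpha)k$ relevant vertices and that the first $(1-\alpha)k$ of them, in the order in which the shattering phase assigns their colours, all get the same colour. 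For $\typeB$ this is precisely the statement that $f_x$ becomes bad (a bad edge has exactly $(1-\alpha)k$ accepted vertices, and they are monochromatic). For $\typeBin$ and $\typeBout$ the set of relevant vertices is exactly $f_x|_a\cup(f_x|_t\cap \tCover[<d(x)]{\tau})$, resp.\ $f_x|_a\cup(f_x|_t\setminus \tCover[<d(x)]{\tau})$, and the basic event requires this set to be monochromatic of size at least $(1-\alpha)k$.

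The one step I expect to require care is the observation that whether a vertex $w\in f_x$ is relevant is decided before $w$'s own colour is drawn. Indeed, $w$ is accepted precisely if, when it is processed, no edge through $w$ is yet bad, and badness of an edge at any given time is a function of the colours of the vertices processed earlier (an edge turns bad exactly when $(1-\alpha)k$ of its already-coloured vertices are accepted and monochromatic); moreover the cover set $\tCover[<d(x)]{\tau}$ is determined by the witness structure $\tau$ alone and involves no randomness. So, revealing colours in processing order, one may maintain the list of relevant vertices of $f_x$ discovered so far, and a vertex is added to this list only on the basis of information available before its colour is exposed, which is a fresh uniform bit.

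The bound then follows by the standard martingale/sequential-revelation estimate. Let $w^{(1)},w^{(2)},\dots$ be the relevant vertices of $f_x$ in the order they are coloured, and put $m=(1-\alpha)k$, an integer since $\alpha k$ is. On the event that at least $m$ relevant vertices occur, conditioning on the history just before $w^{(j)}$ is coloured gives $\pr{c(w^{(j)})=c(w^{(1)})}=1/2$ for every $j$ with $2\le j\le m$, where $c(\cdot)$ is the colour assigned in the shattering phase; multiplying these $m-1$ factors shows that the event ``$f_x$ has at least $m$ relevant vertices and $w^{(1)},\dots,w^{(m)}$ all receive the same colour'' has probability at most $2^{-(m-1)}=2/2^{(1-\alpha)k}=p_B$. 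By the first paragraph this event contains $\eventBasic{x}$ for each of the three labels, so $\pr{\eventBasic{x}}\le p_B$.
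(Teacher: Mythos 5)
Your proof is correct and follows essentially the same approach as the paper's: both observe that for each of the three labels the basic event forces a set of at least $(1-\alpha)k$ vertices of $f_x$ to be monochromatic, and both rely on the key point that membership in that set (your ``relevant'' set; the paper's ``part of $f_x$ that should be monochromatic'') is determined by information available before the vertex's own colour is sampled, so a sequential-revelation bound of $2^{-((1-\alpha)k-1)}=p_B$ applies. You spell out the conditioning argument in more explicit detail than the paper does, but the underlying idea and decomposition are the same.
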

\begin{claimproof}
    In each case all the accepted vertices of $f_x$ has to be monochromatic
        and when $x$ is labeled $\typeBin$ or $\typeBout$,
            then also some of the troubled vertices have to obtain the same color.
    Observe that for each vertex $v \in f_x$ when a color is sampled for $v$,
        it is known whether it will be marked as accepted or troubled.
    It is also known whether $v$ belongs to $\tCover[<d(x)]{\tau}$ or not,
        since that set is determined precisely by $\tau$.
    Therefore, when a color is chosen for $v$ %a vertex of $f_x$
        it is known whether it belongs to that part of $f_x$ that should be monochromatic.
    In order to satisfy $\eventBasic{x}$ there should be at least $(1-\alpha)k$
        of such vertices (which may not happen),
        and since they have to get the same outcome,
        the probability of that event is at most $p_B$.
\end{claimproof}

\begin{claim}
    For $x$ labeled $\typeU$
        the probability of $\eventBasic{x}$ is at most $p_U = 2/2^{(1-2\alpha)k}$.
\end{claim}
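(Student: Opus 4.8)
The plan is to mirror the argument just used for the labels $\typeB$, $\typeBin$, and $\typeBout$, since the basic event $\eventBasic{x}$ for a node labeled $\typeU$ is once more of the shape ``a structurally determined subset of the vertices of $f_x$ turns out monochromatic'': here that subset is $f_x|_a$, and the label additionally stipulates $|f_x|_a|\geq(1-2\alpha)k$ (equivalently, $f_x$ has fewer than $2\alpha k$ troubled vertices). So the whole content is a first-moment / exposure estimate, not a new idea.

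First I would record the independence fact already exploited earlier: at the moment the shattering phase samples the color of a vertex $v\in f_x$, it is already decided whether $v$ will be accepted or troubled, because $v$ is troubled exactly when, at the time it is processed, it lies in some edge that is already bad, and that condition does not depend on the color currently being drawn for $v$. Hence I can expose the colors of the vertices of $f_x$ in the order in which the algorithm processes them, and as I go I know which of the exposed vertices fall into $f_x|_a$. Then I run the standard estimate: for $\eventBasic{x}$ to occur there must be at least $(1-2\alpha)k$ accepted vertices of $f_x$ — and this may simply fail, in which case the event does not hold — while, conditioned on having seen that many accepted vertices, all of them must have received the same one of the two colors; since each vertex's color is an independent uniform choice, the probability that the first $(1-2\alpha)k$ accepted vertices are monochromatic is $2\cdot 2^{-(1-2\alpha)k}$. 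Discarding the case of too few accepted vertices, this gives $\pr{\eventBasic{x}}\leq p_U = 2/2^{(1-2\alpha)k}$.

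The calculation is entirely routine; the only point that needs a word of justification — and it is the very same subtlety handled in the $\typeB$ claim — is that membership in $f_x|_a$ is determined by the colors drawn strictly before $v$'s color, so that the monochromaticity probability genuinely factorizes over independent fair coins. Nothing beyond that is required for the $\typeU$ case, so I expect no real obstacle here.
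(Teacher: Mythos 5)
Your proof is correct and takes essentially the same approach as the paper: the paper's own proof of this claim is a one-sentence reference back to the argument for the $\typeB$ case, noting only that the threshold on the number of accepted vertices changes from $(1-\alpha)k$ to $(1-2\alpha)k$, which is precisely the reduction you carry out (with the exposure/independence point spelled out a bit more explicitly than the paper does).
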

\begin{claimproof}
    It follows from the same argument as for the event that $f_x$ is bad,
        except that here the set of accepted vertices can be smaller,
            but there should be at least $(1-2\alpha)k$ of them.
\end{claimproof}

Basic events for the active nodes of some $\tau$ are not independent
    even if the assigned edges are pairwise disjoint.
However, in such a case, %when the assigned edges are pairwise disjoint,
    it is true that the probability of the conjunction of the basic events 
        is bounded by the product of the corresponding upper bounds.
This is due to the fact that then,
    each basic event specifies a desired behavior on a distinct set of vertices,
    and the upper bounds only concern the probability that
        a specific sampled colors get the same value.

\subsubsection{Proper witness structures}

The nature of the events of our interest allows to impose
    additional properties on the structure.
These properties are used in the derivation of a bound
    on the probability that such witness structure is satisfied.
First,
    the number of empty nodes cannot be to large.
The desired proportion is captured by the following definition.

\newcommand{\numOf}[1]{\#_{#1}}
\newcommand{\numOfSet}[1]{\numOf{\{#1\}}}
\newcommand{\numOfT}[1]{\numOfSet{\type{#1}}}
\begin{definition}
    For a witness structure $\tau$, we define \emph{the balance} of $\tau$ by the following formula
    \[
            2\numOfT{M}
            + \numOfSet{\typeB, \typeBin, \typeBout}
            - \numOfT{E},
    \]
    where $\numOf{X}$ is the number of nodes in $\tau$ with label in set $X$.
    Witness structure with a nonnegative balance is called \emph{balanced}.
\end{definition}
Second,
    we require that each basic event refers to disjoint set of vertices.
These two conditions characterize witness structures of our interest.
\begin{definition}
    A witness structure is \emph{proper} if 
        it is balanced and
        edges assigned to active nodes are pairwise disjoint %\footnote{
            (in particular, an edge cannot be assigned to two different active nodes).
\end{definition}

In a proper witness structure $\tau$ %we have the mentioned property that
    the probability of $\event{\tau}$ is at most the product of
        our upper bounds on the probabilities of
            the basic events corresponding to the active nodes of $\tau$.
The fact that witness structure is balanced can be used
    to amortize the occurrences of empty nodes in $\tau$ by the presence of active nodes.

\begin{claim} \label{claim:prob-common-bound}
    Let $q = 2^{1-k/3}$.
    For $\alpha \le 1/3$
        it holds that $\max(p_M^{1/3}, p_B^{1/2}, p_U) \le q$.
\end{claim}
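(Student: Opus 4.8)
The plan is to unfold the definitions and reduce the whole statement to a comparison of exponents of powers of $2$. By the three preceding claims we have $p_M = 2^{1-k}$, $p_B = 2^{1-(1-\alpha)k}$ and $p_U = 2^{1-(1-2\alpha)k}$, while $q = 2^{1-k/3}$. Hence each of the three inequalities $p_M^{1/3} \le q$, $p_B^{1/2} \le q$ and $p_U \le q$ is equivalent (since the base $2$ exceeds $1$) to an inequality between real exponents, and I would verify these one at a time using only $0 < \alpha \le 1/3$ and $k > 0$; taking the maximum of the three then gives the claim.

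For the first term, $p_M^{1/3} = 2^{(1-k)/3} = 2^{1/3 - k/3}$, and since $1/3 < 1$ this is strictly below $2^{1-k/3} = q$ for every $\alpha$ and $k$. For the second, $p_B^{1/2} = 2^{1/2 - (1-\alpha)k/2}$, so the required inequality rewrites as $1/2 - (1-\alpha)k/2 \le 1 - k/3$, i.e. $k \cdot \tfrac{3\alpha-1}{6} \le \tfrac12$; for $\alpha \le 1/3$ the factor $3\alpha - 1$ is nonpositive, so the left-hand side is $\le 0 < 1/2$ and the inequality holds with room to spare. For the third, $p_U \le q$ rewrites as $(1-2\alpha)k \ge k/3$, i.e. $1 - 2\alpha \ge 1/3$, i.e. $\alpha \le 1/3$ — exactly the hypothesis.

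There is no genuine obstacle here: the argument is a one-line substitution followed by routine arithmetic. The only point worth flagging is that the hypothesis $\alpha \le 1/3$ is used in an essential and tight way solely for the $\typeU$ term — at $\alpha = 1/3$ one already has $p_U = q$ — whereas the $\typeM$ and $\typeB$ bounds hold with slack. This is precisely why $1/3$ is the threshold appearing in this claim, and, downstream, in Theorem \ref{thm:main-technical}: the exponent $k/3$ in $q$ is what allows the later amortization argument to absorb empty nodes against active ones.
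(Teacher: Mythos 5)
Your proof is correct: the paper itself states this claim without any proof (evidently regarding it as routine algebra), and your verification by comparing exponents of $2$ is exactly the intended argument. Your observation that the $\typeU$ term is the tight one at $\alpha = 1/3$, while $\typeM$ and $\typeB$ have slack, accurately identifies where the $1/3$ threshold enters.
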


The above claim implies that, in some sense,
    each node labeled $\typeM$
        can amortize two empty nodes,
    and each of the nodes labeled $\typeB$, $\typeBin$, or $\typeBout$
        can amortize one empty node.
Using that observation, for a proper witness structure $\tau$,
    we are able to derive an upper bound on the probability of satisfying $\tau$,
        which involves only the size of this witness structure.

\begin{claim} \label{claim:event-prob}
    Let $q = 2^{1-k/3}$.
    For a proper witness structure $\tau$ of size $u$ we have
    \[
        \pr{\event{\tau}} \leq q^u.
    \]
\end{claim}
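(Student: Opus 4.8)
The plan is to combine three ingredients already prepared in the excerpt: the per-node probability bounds ($p_M$, $p_B$, $p_U$), the disjointness of the edges assigned to active nodes (which upgrades the conjunction bound to a product), and the balance inequality that lets empty nodes be paid for by $\typeM$, $\typeB$, $\typeBin$, $\typeBout$ nodes. Write $u_M = \numOfT{M}$, $u_B = \numOfSet{\typeB,\typeBin,\typeBout}$, $u_U = \numOfT{U}$, and $u_E = \numOfT{E}$, so that $u = u_M + u_B + u_U + u_E$. Since $\tau$ is proper, the edges assigned to its active nodes are pairwise disjoint, so by the remark immediately preceding Claim~\ref{claim:event-prob} the probability of $\event{\tau}$ is at most the product of the individual upper bounds over active nodes, namely $p_M^{u_M}\, p_B^{u_B}\, p_U^{u_U}$.

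Next I would apply Claim~\ref{claim:prob-common-bound}, which gives $p_M \le q^3$, $p_B \le q^2$, and $p_U \le q$ (note $q<1$ for large $k$, so raising to these powers is monotone). Substituting, $\pr{\event{\tau}} \le q^{3u_M}\, q^{2u_B}\, q^{u_U} = q^{3u_M + 2u_B + u_U}$. It now suffices to check that the exponent is at least $u$, i.e. that $3u_M + 2u_B + u_U \ge u_M + u_B + u_U + u_E$, which simplifies exactly to $2u_M + u_B - u_E \ge 0$ — and this is precisely the statement that $\tau$ is balanced, which holds because properness includes balancedness. Hence $\pr{\event{\tau}} \le q^u$, as claimed.

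The argument is short and essentially bookkeeping; the only point that deserves care is the first step — justifying that, for a proper (hence active-edge-disjoint) witness structure, the probability of the conjunction of the basic events really is bounded by the product of the individual bounds. This was asserted in the paragraph before the definition of proper witness structures, with the justification that each basic event constrains the sampled colors on a disjoint vertex set and the stated bounds only concern the event that certain sampled colors coincide; I would simply invoke that paragraph. One small subtlety worth a sentence: the basic events for $\typeBin$ and $\typeBout$ nodes depend on the set $\tCover[<d(x)]{\tau}$, which is determined by $\tau$ alone (not by the random colors), so conditioning on which vertices of $f_x$ must be monochromatic introduces no dependence on the coloring — this is exactly what the proof of the $p_B$ claim already observed. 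With that in hand, the product bound and the balance inequality close the proof.

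I do not expect a genuine obstacle here; the potential pitfall is purely notational — making sure the case split over the six labels is exhaustive (empty nodes $\typeE$ contribute nothing to $\event{\tau}$ and only to $u_E$, which is handled by the balance term) and that the monotonicity direction in applying Claim~\ref{claim:prob-common-bound} is correct, which it is since $0 < q < 1$.
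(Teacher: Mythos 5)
Your proof is correct and matches the paper's own argument exactly: both bound $\pr{\event{\tau}}$ by the product $p_M^{\numOfT{M}} p_B^{\numOfSet{\typeB,\typeBin,\typeBout}} p_U^{\numOfT{U}}$ via disjointness of active edges, then invoke Claim~\ref{claim:prob-common-bound} to replace each factor by a power of $q$, and finally use the balance inequality $2\numOfT{M}+\numOfSet{\typeB,\typeBin,\typeBout}-\numOfT{E}\ge 0$ to verify that the resulting exponent is at least $u$. Your extra remarks about monotonicity of $q^{(\cdot)}$ for $0<q<1$ and about $\tCover[<d(x)]{\tau}$ being determined by $\tau$ alone are sound and merely make explicit what the paper leaves implicit.
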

\begin{claimproof}
    Applying the product of upper bounds of the basic events,
        using Claim \ref{claim:prob-common-bound} and the fact that $\tau$ is balanced,
    we get
    \[
        \pr{\event{\tau}} \leq p_M^{\numOfT{M}}
                               p_B^{\numOfSet{\typeB, \typeBin, \typeBout}}
                               p_U^{\numOfT{U}}
                          \leq q^{3\numOfT{M} 
                                 +2\numOfSet{\typeB, \typeBin, \typeBout} 
                                 + \numOfT{U}}
                          \leq q^u.
    \]
\end{claimproof}

Now we are ready to bound the probability that
    some sufficiently large proper witness structure
        becomes satisfied during the shattering phase.

\begin{proposition} \label{prop:proper-witness-prob}
    Let $q = 2^{1-k/3}$.
    \[
        \prt{ there exists a satisfied proper 
              witness structure of size at least $2\log(m)$ } < \frac{2}{m}
    \]
\end{proposition}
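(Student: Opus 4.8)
The plan is to prove Proposition~\ref{prop:proper-witness-prob} by a direct first-moment (union bound) computation, exactly parallel in spirit to Proposition~\ref{prop:base-satisfied-witness-prob}, but now using the refined counting and probability bounds for the more elaborate witness structures. First I would fix an integer size $u \ge 2\log(m)$ and bound the probability that \emph{some} satisfied proper witness structure of size exactly $u$ exists by the expected number of satisfied proper witness structures of size $u$. By Claim~\ref{claim:count-witnesses} there are at most $m(48\Delta)^u$ witness structures of size $u$ in total (hence at most that many proper ones), and by Claim~\ref{claim:event-prob} each proper one of size $u$ is satisfied with probability at most $q^u$, where $q = 2^{1-k/3}$. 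Linearity of expectation therefore gives
\[
    \prt{ there exists a satisfied proper witness structure of size exactly $u$ }
        \le m\,(48\Delta\,q)^u .
\]

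Next I would invoke the standing assumption~(ii) of Theorem~\ref{thm:main-technical}, namely $\Delta \le \tfrac{1}{192}\,2^{k/3}$, which yields $48\Delta \le \tfrac{1}{4}\,2^{k/3} = \tfrac{1}{8}\,q^{-1}$, so that $48\Delta\,q \le 1/8 < 1/2$. Consequently, for $u \ge 2\log(m)$ (with $\log$ to base $2$, matching the convention implicit in $p_M = 2/2^k$ etc.) one has $(48\Delta\,q)^u \le 2^{-u} \le 2^{-2\log m} = m^{-2}$, so the probability of a satisfied proper witness structure of size exactly $u$ is at most $m^{-1}$. To pass from "size exactly $u$" to "size at least $2\log(m)$", I would argue that it suffices to control the smallest admissible size: I would take $u_0 = \ceil{2\log(m)}$ and sum the geometric tail $\sum_{u \ge u_0} m(48\Delta q)^u \le m \cdot \frac{(1/2)^{u_0}}{1 - 1/2} = 2m\,2^{-u_0} \le 2m\cdot m^{-2} = 2/m$, which is the claimed bound. (Alternatively, if proper witness structures are upward-closed in the sense that a large one contains a proper sub-structure of size exactly $u_0$ — as was the case for $(2,3)$-trees in the base analysis — one could reduce directly to size $u_0$ and get the sharper $1/m$; I would use whichever reduction the definitions actually support, but the geometric-tail argument is self-contained and gives $2/m$ regardless.)

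\textbf{The main obstacle} I anticipate is not in this proposition itself — once Claims~\ref{claim:count-witnesses} and~\ref{claim:event-prob} are in hand the computation is routine — but rather a bookkeeping subtlety I would want to double-check: Claim~\ref{claim:event-prob} applies only to \emph{proper} witness structures, and properness bundles together two conditions (balanced, and active edges pairwise disjoint). The union bound over all witness structures of size $u$ in Claim~\ref{claim:count-witnesses} counts \emph{all} of them, not just proper ones, so I would make sure the logic runs in the safe direction: restrict the sum to proper structures, of which there are still at most $m(48\Delta)^u$, and apply the $q^u$ bound only to those. I would also be careful that the event "there exists a satisfied proper witness structure of size exactly $u$" is genuinely the union, over the (finitely many) proper witness structures $\tau$ of size $u$, of the events $\event{\tau}$ — this is immediate from the definition of satisfaction, but worth stating so the first-moment step is unambiguous. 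Finally, I would record explicitly that this bounds only the event considered here; the reduction from "large component or search area discovered by the algorithm" to "large satisfied proper witness structure exists" is the content of the subsequent construction (the satisfied witness structure is built from the traces of \FBuildComponent{} and \FExpandOrAccept{}), and is where the real work of Proposition~\ref{prop:bound-component-size} lies.
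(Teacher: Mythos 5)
Your proposal is correct and follows essentially the same first-moment argument as the paper: bound the expected number of satisfied proper witness structures of size at least $2\log(m)$ by $\sum_{u \ge 2\log(m)} m(48\Delta)^u q^u$ using Claims~\ref{claim:count-witnesses} and~\ref{claim:event-prob}, observe that assumption~(ii) gives $48\Delta q \le 1/2$, and sum the geometric tail to get $2/m$. One small arithmetic slip worth fixing: you write $48\Delta \le \tfrac14 2^{k/3} = \tfrac18 q^{-1}$, but since $q^{-1} = 2^{k/3-1}$ one actually has $\tfrac14 2^{k/3} = \tfrac12 q^{-1}$, so the conclusion is $48\Delta q \le 1/2$ (which is exactly what the paper states and exactly what your subsequent geometric-series computation uses), not $\le 1/8$.
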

\begin{proof}
    By Claims \ref{claim:count-witnesses} and \ref{claim:event-prob} we get:
    \[
        \Ett{ number of satisfied proper witness structures }{ of size at least $2\log(m)$ }
        < \sum_{u \geq 2\log(m)} m(48 \Delta)^u \cdot q^u.
    \]
    Recall that $\Delta \le \frac{1}{192}~2^{k/3}$ %= \frac{1}{96}~q^{-1}$.
        (by assumptions of Theorem \ref{thm:main-technical}).
    Hence $48 \Delta q \le \frac{1}{2}$,
        and the above sum is easily upper bounded by $\frac{2}{m}$.
\end{proof}

\subsection{Construction of satisfied proper witness structures}

We finalize the proof of Proposition \ref{prop:bound-component-size}
    by showing a relation between the work done by the algorithm
        and the occurrence of a satisfied proper witness structure.
Proposition \ref{prop:bound-component-size} is a direct consequence of
    Proposition \ref{prop:proper-witness-prob} and \ref{prop:witness-for-component} (stated below).
Proof of the latter spans to the end of this section.

\begin{proposition} \label{prop:witness-for-component}
    For every component constructed during work of the algorithm
        (either final component or search area),
        there exist a satisfied proper witness structure $\tau$
            of size $|\tau| \geq \frac{|C|}{(\Delta+1)}$,
            where $C$ denote the set of bad edges of that component.
\end{proposition}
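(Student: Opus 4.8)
The plan is to associate with every component $C$ produced by the algorithm a witness structure $\tau$ and verify the three requirements: that $\tau$ is satisfied, that it is proper (balanced and active-disjoint), and that $|\tau| \geq |C|/(\Delta+1)$. The construction must simultaneously handle the final components built by \FBuildComponent{} and the search areas built by \FExpandOrAccept{}, since both are bounded by the same $\vCompBound$ threshold and both types of overgrowth are what Proposition~\ref{prop:bound-component-size} controls. I would proceed by recording, along the execution, \emph{why} each bad-component was absorbed: either it was reached through an unsafe edge satisfying (r1)/(e3) (an unsafe edge hitting two disjoint bad edges outside the current set), or through an unsafe edge satisfying (r2)/(e2) (monochromatic outside the current active troubled area), or -- inside \FExpandOrAccept{} -- because of a genuinely monochromatic edge (e1), or because the search edge in $Q$ had its $C$-side vertices plus accepted vertices monochromatic (the activation-risk condition). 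Each such reason is turned into one or two nodes of the tree with an appropriate label: a monochromatic edge gives an $\typeM$ node; a bad edge of an absorbed bad-component gives a $\typeB$ node; an (r2)/(e2)-type unsafe edge gives a $\typeBout$ node (its troubled vertices outside the already-covered area are the ones forced monochromatic, matching the basic event of $\typeBout$); the activation-risk edges feeding the $Q$-queue inside \FExpandOrAccept{} give $\typeBin$ nodes (their troubled vertices \emph{inside} the current $C$ are forced monochromatic, matching $\typeBin$); an (r1)/(e3) edge is handled by picking two disjoint bad edges it meets and making those $\typeB$ nodes, with the unsafe edge itself possibly recorded as an empty $\typeE$ node serving only to keep the tree connected and the homomorphism valid. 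The $\typeU$ label is available for unsafe edges with very few troubled vertices and is used where such an edge must be traversed without contributing a strong event.

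\textbf{Key steps, in order.} First I would make precise the bookkeeping: run the algorithm, and every time a bad-component is added to $B$ (or to the search area $A$), attach it to the tree under construction as a subtree of $\typeB$-labelled nodes (one per bad edge), arcs oriented so that depth increases toward the place where expansion started; the node(s) encoding the unsafe edge responsible for the absorption become the parent, oriented toward the already-built part. This guarantees the homomorphism property (consecutive edges in the tree are adjacent or, for the $(2,3)$-gap handled by an intermediate empty node, at distance two in $L(H)$) and gives the depth function the monotonicity needed for the $\typeBin$/$\typeBout$ basic events to refer to the correct set $R_{<d(x)}(\tau)$: when edge $f_x$ was processed, the ``already covered'' area is exactly what sits at strictly smaller depth. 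Second, I would verify \emph{satisfaction}: each labelled node's basic event is exactly the condition the algorithm tested at the moment of absorption (this is where the careful matching of $\typeBout$ with ``$f\setminus V_t(B)$ monochromatic'' in (r2) and of $\typeBin$ with ``$g|_a \cup (g|_t\cap V(C))$ monochromatic'' in the $Q$-selection of \FExpandOrAccept{} does the work), and the search areas that are \emph{accepted} rather than returned contribute an $\typeM$ or (e3)-witness at their boundary -- the point being that an accepted search area attached to a growing final component must itself have been entered through an (r3) edge that, by the extension-rule logic, forces either monochromaticity or a two-disjoint-bad-edges configuration somewhere. Third, \emph{disjointness}: here I would exploit that bad-components are vertex-disjoint by definition, and that the unsafe edges chosen as witnesses can be selected to avoid the bad edges already used (using the $2\alpha k$ vs $\alpha k$ slack: an unsafe edge has more than $\alpha k$ troubled vertices, the part already inside the covered area has fewer than $\alpha k$ of them, so the ``forced monochromatic'' part genuinely lies on fresh vertices) -- where a conflict is unavoidable I relabel the offending node $\typeE$. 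Fourth, \emph{balance}: the empty nodes are introduced only (a) to bridge distance-two gaps between bad-components, at a rate of at most one per absorbed bad-component, and (b) to drop conflicting witnesses, again charged to a specific absorption event; since each absorbed bad-component contributes at least one $\typeB$ node (worth $+1$) and the (r1)/(e3) and $\typeM$ events contribute extra positive weight, the count $2\#_{\typeM} + \#_{\{\typeB,\typeBin,\typeBout\}} - \#_{\typeE}$ stays $\geq 0$. Fifth, the size bound $|\tau|\geq |C|/(\Delta+1)$: $C$ has $|C|$ bad edges, each bad-component contributes a $\typeB$ node for each of its bad edges (so in fact already $\#_{\typeB}\geq$ a constant fraction of $|C|$), and the factor $\Delta+1$ is the usual slack absorbing the empty nodes and the fact that a $(2,3)$-tree extracted from a $(1,2)$-connected edge set loses at most a $\Delta+1$ factor -- mirroring Claim~\ref{claim:base-witness-for-component} from the base analysis.

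\textbf{Main obstacle.} The hard part will be the \emph{amortization / balance argument combined with disjointness}: I must choose, for every absorption step, witness edges and labels so that (i) the basic event actually holds, (ii) the chosen edges are disjoint from all previously chosen active edges, and (iii) I do not spend more empty nodes than the positive labels can pay for. These three demands pull against each other -- insisting on disjointness may force me to use an empty node, and too many empty nodes destroy balance. The resolution is the quantitative gap between $\alpha k$ and $2\alpha k$ built into the rules (r1)--(r3), (e1)--(e3) and into the $\typeU$ event: an unsafe edge that triggers an extension rule has enough troubled vertices ($>\alpha k$) outside the already-covered region that the ``monochromatic part'' of its basic event lives on genuinely new vertices, and the $(1-2\alpha)k \geq k/3$ bound (valid precisely because $\alpha\leq 1/3$) is exactly what makes $p_U\leq q$ in Claim~\ref{claim:prob-common-bound}, so no relabelling to $\typeU$ ever breaks the probability bound. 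Making this tension quantitatively tight -- i.e. showing that each empty node can be injectively assigned to a distinct $\typeM$ (half of one) or $\typeB/\typeBin/\typeBout$ node that "pays" for it -- is the technical heart of the construction, and is where I expect the bulk of the case analysis (following \FBuildComponent{} and \FExpandOrAccept{} line by line, including the subtle interaction with trimming-to-bad-component) to be required.
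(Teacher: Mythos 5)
Your proposal follows the same high-level plan as the paper: replay the execution of \FBuildComponent{} and \FExpandOrAccept{}, grow a witness tree alongside, use the labels $\typeB$, $\typeM$, $\typeBin$, $\typeBout$, $\typeU$, $\typeE$ exactly for the configurations they were designed to capture, and orient arcs so that depth records the order in which active edges were added, making $\tCover[<d(x)]{\tau}$ in the $\typeBin$/$\typeBout$ events refer to the correct ``already covered'' region. That is the paper's construction, and your observation that the $\alpha\le 1/3$ bound is what keeps $p_U\le q$ is the right reason $\typeU$ is safe to use. However, there are genuine gaps.

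First, your statement about search areas is backwards. You write that ``search areas that are accepted rather than returned contribute an $\typeM$ or (e3)-witness at their boundary'' and that ``an accepted search area attached to a growing final component must itself have been entered through an (r3) edge that ... forces either monochromaticity or a two-disjoint-bad-edges configuration.'' An accepted search area is precisely one in which no amortizing configuration (e1)--(e3) was found; \FExpandOrAccept{} returns $(\emptyset,\emptyset)$, none of its edges join $B$, and it needs no witness at all. It is the \emph{returned} search area -- the one that does contain an amortizing configuration -- that is incorporated into the final component and must be covered by the witness.

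Second, the size argument is internally inconsistent. You claim ``each bad-component contributes a $\typeB$ node for each of its bad edges,'' which would make adjacent bad edges both active and violate the disjointness requirement on active edges. The paper only turns a $(2,3)$-tree-like subset of each bad-component into $\typeB$ nodes (inserting one $\typeE$ bridging node per step); the factor $\Delta+1$ then arises because each active node covers at most $\Delta+1$ bad edges in $N^+$, so condition that all bad edges of $C$ lie in $N^{+}(A(\tau))$ directly yields $|\tau|\ge|C|/(\Delta+1)$. You appeal to both pictures (one node per bad edge \emph{and} a $(2,3)$-tree extraction with a $\Delta+1$ loss) at once, and they cannot both hold.

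Third, and most substantively, the hardest part of the balance argument is missing. When a returned search area is grafted onto the main construction -- in particular when all of its edges lie at distance $\ge 3$ from $\tau$, which happens whenever the triggering (r3) edge $f$ is the only bridge -- up to two additional empty nodes (and possibly a $\typeU$ node for $f$ itself) are needed just to connect the two structures. The paper pays for these by first establishing that the witness built for any returned search area has balance at least $2$: the surplus comes from its amortizing configuration, since a $\typeM$ node contributes $+2$, and an (e3)-type edge gives one $\typeE$ against \emph{three} disjoint $\typeB$ nodes (one inside $C_t$ and two outside), net $+2$. Your balance discussion charges empty nodes to absorbed bad-components ``at a rate of at most one per component,'' which does not cover these connector nodes and offers no mechanism to pay for them. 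Without the explicit balance-at-least-$2$ invariant for returned search areas and the careful reorientation/attachment used to splice them in, the charging scheme breaks precisely in the case your algorithm was designed to improve.
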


In order to prove Proposition \ref{prop:witness-for-component} we show
        that simultaneously with the progress of 
            the algorithm %presented in section \ref{sec:main-lca}
        we can construct a satisfied proper witness structure of a related size.
The construction is organized in a series of extensions of the witness structure constructed so far.
That extensions correspond to expansions of the component
    to neighboring bad-components (through unsafe edges).
First, we specify the properties of the witness structure that we intend to preserve during construction.
Then, we describe the basic techniques for constructing and extending a witness structure.
After that, we discuss the possibilities of constructing satisfied proper witness structures
    for the search area explored in \FExpandOrAccept.
Finally, we show how to construct a satisfied proper witness structure for the final component
    obtained in \FBuildComponent.

\subsubsection{Properties of the construction}

\nextLines{
\nextLine For a witness structure $\tau$, the distance between an edge $g$ and $\tau$,
            is the length of the shortest path in $L(H)$ between $g$ and any element of $E(\tau)$.
\nextLine We say that an edge is \emph{adjacent} to $\tau$ if its distance to $\tau$ is exactly $1$.
\nextLine An edge \emph{intersects} $\tau$ if it intersects some edge from $E(\tau)$,
            that is, it belongs to $E(\tau)$ or is adjacent to it. %that occurs in $\tau$.
\nextLine By \emph{active edges} we mean edges belonging to $A(\tau)$.
\nextLine For a set of edges $S$, let $S|_b \subset S$ denote the set of all bad edges in $S$.
\nextLine We say that $\tau$ \emph{covers} $S$ when $S$ is covered by $A(\tau)$.
}

\newcounter{tempListItemCounter}     % counter for resuming numbering in another list
\newcommand{\refa}[1]{(\ref{#1})}    % reference in "(.)"
\begin{definition}[Proper construction] \label{def:proper-construction}
Let $\tau$ be a witness structure and $C$ be a set of edges. % connected in $L(H)$
We say that $\tau$ is \emph{a proper construction for $C$}
    when it fulfills all of the following properties:
    \begin{enumerate}[(a)]
    \item \label{it:disjoint}
        active edges of $\tau$ are disjoint and no edge is assigned to two different active nodes,
    \item \label{it:balanced}
        $\tau$ has a positive balance,
    \item \label{it:satisfied}
        each of the basic events defined by the active nodes of $\tau$ is satisfied,
    \item \label{it:active-covers}
        $\tau$ covers all bad edges contained in $C$,
            that is, $C|_b \subset N^{+}(A(\tau))$,
    \item \label{it:active-inside}
        $A(\tau) \subset C$.
    \setcounter{tempListItemCounter}{\value{enumi}}
    \end{enumerate}
\end{definition}

First two conditions imply that $\tau$ is proper.
The third ensures that it is satisfied.
Condition (\ref{it:active-covers})
    implies that the size of $C|_b$ is at most $(\Delta+1)$ times larger
    than the size of $\tau$ (as required in Proposition \ref{prop:witness-for-component}).
It is also used together with (\ref{it:active-inside}) to prove other properties.
Note that the positive balance of a proper construction ensures that
    the set of active edges is not empty.

In the following section,
    we show how to obtain a proper construction for a set of edges of a single bad-component.
Then, we show how to extend a proper construction $\tau$ for $C$ 
    to become a proper construction for $C \cup D$,
        where $C$ and $D$ are sets of edges that are somehow related.
We call such process as \emph{expanding $\tau$ within $D$}.
We analyze a few typical cases of a relation between $C$ and $D$.
The target construction is achieved by
    a series of additions of new active nodes holding edges from $D$ to $\tau$.
In most cases, a new active node is added together with some empty node.

In order to ensure that the final witness structure is a proper construction for $C \cup D$,
    we stick to the following rules.
In each addition only an edge that belongs to $D$ and 
    is not covered by the current set of active edges may be added as a new active node.
That ensures that conditions \refa{it:disjoint} and \refa{it:active-inside} are met.
The expansion continues until $D|_b$ is not covered,
    which eventually satisfies \refa{it:active-covers}.
After each addition of a group of new nodes we check
    that the balance of a witness structure does not decrease,
    so \refa{it:balanced} is preserved.
Finally, to ensure that condition \refa{it:satisfied} is satisfied,
    for each new active node, we check whether the corresponding basic event holds
    at the moment when that node is added.
With each addition, we also have to ensure that the definitions of the basic events
    assigned to the previously inserted active nodes are not altered.
The events for nodes labeled with $\typeM$, $\typeB$ and $\typeU$ 
    are self-descriptive and cannot change.
A special care must be taken for nodes labeled $\typeBin$ and $\typeBout$.
To avoid altering their definitions, we adhere to the following rule:
\begin{enumerate}[(a)]
\setcounter{enumi}{\value{tempListItemCounter}}
\item \label{it:add-depth}
    when a new active node $x$ is added to $\tau$,
    depth of $x$ in $\tau$ must be greater than the depths of the nodes labeled $\typeBin$ or $\typeBout$
        that correspond to the edges that are in a distance $2$ from $f_x$.
\end{enumerate}
That rule ensures that once a node $x$ with label $\typeBin$ or $\typeBout$
    is added to $\tau$, the set of vertices $f_x \cap \tCover[<d(x)]{\tau}$ does not change
        in the later extensions of $\tau$.

In the following proofs we comply with the above rules.
Therefore, we only explicitly argue that
    the balance is not decreased,
    and that the way of adding a new active node adheres to rule \refa{it:add-depth}.
We also check that
    the basic event assigned to a newly added active node is satisfied,
    however, in case of bad and monochromatic edges 
        it needs no comment that they can be labeled respectively as $\typeB$ and $\typeM$.

\subsubsection{Basic techniques}

\begin{definition}[Adding an edge]
For a witness structure $\tau$ and an edge $g$ that is adjacent %to $\tau$,
    to some edge in $E(\tau)$, % we need in a proof of \ref{claim:construct-extend-with-amort-conf}
    by \emph{adding $g$ to $\tau$ as $\type{X}$} we mean the following steps.
Let $y$ be a node in $\tau$ such that $f_y$ is adjacent to $g$.
We add a new node $x$, set $f_x=g$, label it $\type{X}$,
    and add to the tree of $\tau$ an arc between $x$ and $y$.
By default, we direct that arc towards $y$.
\end{definition}

\begin{claim}[Proper construction for a bad-component] \label{claim:construct-bad}
    Let $B$ be the set of edges of some bad-component.
    Then, there exist a proper construction for $B$.
\end{claim}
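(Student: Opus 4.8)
The plan is to build a proper construction for a single bad-component $B$ directly, node by node, following a spanning walk of the line-graph restricted to $B$. First I would pick any edge $g_0 \in B$ and start the witness structure $\tau$ with a single node mapped to $g_0$. Since every edge of a bad-component is bad, this starting node gets label $\typeB$, so its basic event ($g_0$ is bad) is satisfied, and the balance of $\tau$ is $1 > 0$. Then, because $B$ is connected in $L(H)$, I would repeatedly pick a bad edge $g \in B$ that is not yet covered by the current active set $A(\tau)$ but is adjacent to some already-present edge $f_y \in E(\tau)$, and add $g$ to $\tau$ as $\typeB$ (via the ``adding an edge'' operation, directing the new arc towards $y$). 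Each such addition increments the balance by $1$, so balance stays positive; each added node is labeled $\typeB$ and its event ``$g$ is bad'' is automatically satisfied; active edges need not be disjoint a priori, so I must be slightly careful here — this is where the real work is.

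The disjointness requirement \refa{it:disjoint} is the main obstacle: two bad edges of the same bad-component can intersect, so I cannot simply take all of $B$ as active nodes. The fix is the standard $(2,3)$-tree / independent-set trick already used in the base analysis (Claims \ref{claim:base-bad-edges-12tree}, \ref{claim:base-witness-for-component}): instead of putting every edge of $B$ into $\tau$ as active, I select a maximal subset $T \subseteq B$ that is pairwise non-adjacent in $L(H)$ yet still ``covers'' all of $B$ in the sense that every edge of $B$ lies in $N^{+}(T)$ — such a set exists by greedily adding non-adjacent bad edges until maximality — and I connect consecutive chosen edges through at most one intermediate edge. Concretely, when the next chosen bad edge $g \in T$ is at distance $2$ from the current structure rather than distance $1$, I add an intermediate edge $h$ (on a shortest path) as an \emph{empty} node labeled $\typeE$, and then add $g$ adjacent to $h$ as $\typeB$. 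This two-node addition changes the balance by $+1 - 1 = 0$, so balance is preserved (it started at $1$, hence stays $\geq 1 > 0$). Rule \refa{it:add-depth} is vacuous here since we never use labels $\typeBin$ or $\typeBout$ in this construction, and depths can be assigned consistently by directing all new arcs towards the already-built part.

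With this scheme, properties \refa{it:disjoint}--\refa{it:active-inside} are all secured: \refa{it:disjoint} holds because the active (bad-labeled) edges form a pairwise non-adjacent — in particular disjoint — set; \refa{it:balanced} holds since the balance is maintained at a positive value throughout; \refa{it:satisfied} holds because every active node is labeled $\typeB$ and every edge of a bad-component is genuinely bad after the shattering phase; \refa{it:active-covers} holds by the maximality of $T$, which guarantees $B = B|_b \subseteq N^{+}(T) = N^{+}(A(\tau))$; and \refa{it:active-inside} holds because every edge we label active is itself a bad edge of $B$, hence lies in $B$ — while the empty intermediate nodes are allowed to hold arbitrary edges and do not count towards $A(\tau)$. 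Thus $\tau$ is a proper construction for $B$. I would close by remarking that, exactly as in Claim \ref{claim:base-witness-for-component}, the size bound $|\tau| \geq |B|/(\Delta+1)$ comes for free from $B \subseteq N^{+}(A(\tau))$ together with the maximum degree $\Delta$ in $L(H)$ being at most... (here one uses that each active edge covers at most $\Delta+1$ edges), although that bound is really needed only later in Proposition \ref{prop:witness-for-component} rather than in the claim itself.
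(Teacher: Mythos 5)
Your overall plan mirrors the paper's: grow the witness structure greedily, label each new active node $\typeB$ (whose event is automatic for bad edges), and insert a single empty node to bridge a distance-$2$ gap, so that the balance never drops below $1$. The verification of \refa{it:disjoint}--\refa{it:active-inside} is sound given your assumptions. The gap is in the selection of $T$. You take $T$ to be an \emph{arbitrary} maximal independent set of $L(H)$ restricted to $B$ and then assert that consecutive elements of $T$ can always be linked by at most one intermediate empty node; that assertion is false in general. Take $B = \{t_1, a, b, t_2\}$ forming a path in $L(H)$ (only adjacencies $t_1 \sim a$, $a \sim b$, $b \sim t_2$). Then $\{t_1, t_2\}$ is a maximal independent set that dominates $B$, yet $t_1$ and $t_2$ are at distance $3$, so bridging them requires two empty nodes. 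Starting from $t_1$ (balance $1$), adding $a$ and $b$ as $\typeE$ and then $t_2$ as $\typeB$ yields balance $1 - 1 - 1 + 1 = 0$, which violates condition \refa{it:balanced} of Definition~\ref{def:proper-construction}: a proper construction requires \emph{strictly positive} balance, not merely the nonnegative balance used in the definition of a balanced witness structure.

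The paper's proof avoids this by not committing to $T$ in advance. At each step it picks \emph{any} edge of $B$ at distance exactly $2$ from the current active set $A(\tau)$; such an edge exists as long as $B$ is not yet covered, by a connectivity argument (the function $d(\cdot,A(\tau))$ restricted to the connected set $B$ attains the value $0$ at the starting edge and some value $\geq 2$ at an uncovered edge, hence attains $2$). In the path example this greedy picks $b$ rather than $t_2$, producing $A(\tau) = \{t_1, b\}$, which covers $B$ with balance $1$. Your argument is repairable: either replace "any maximal independent set" by this explicit distance-$2$ greedy, or prove that the independent set can be enumerated so that each new member lies at distance exactly $2$ from the union of the earlier ones. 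As written, the existence of such an ordering for a $T$ obtained "by greedily adding non-adjacent bad edges until maximality" is asserted without justification and, as the example shows, can fail.
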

\begin{claimproof}
    We start $\tau$ from a single node labeled $\typeB$ to which we assign an edge $e \in B$.
    Then, as long as $B$ is not covered by $A(\tau)$,
        we pick any $f$ from $B$ that is in a distance $2$ from $A(\tau)$.
    Such an edge has to exist since 
        edges from $B$ that are not covered by $A(\tau)$ are 
            in a distance at least $2$ from that set,
        the distance of $e \in B$ to $A(\tau)$ is $0$,
        and $B$ is connected.
    Edge $f$ is in a distance at most $2$ from $\tau$.
    The following construction ensures that
        edges assigned to empty nodes are adjacent to active edges,
        so $f$ does not belong to $E(\tau)$.
    Thus, we can consider two cases.
    If $f$ is adjacent to $\tau$, we add it to $\tau$ as $\typeB$.
    Otherwise, $f$ is in a distance $2$ from $\tau$.
    In that case, let $g$ be an edge that is on a path of length 2 from $f$ to $\tau$.
    We add $g$ to $\tau$ as $\typeE$,
        then $f$ is adjacent to $\tau$ and we add it to $\tau$ as $\typeB$.

    The initial balance of $\tau$ (after setting the first node for $e$) is $1$.
    The way of extending $\tau$ cannot decrease that balance, 
        so condition \refa{it:balanced} is satisfied.
    All active nodes are labeled $\typeB$
        and only bad edges that belong to $B$ are assigned to them,
        so \refa{it:satisfied} and \refa{it:active-inside} are met.
    When an edge is added as a new active node it is in a distance $2$ from $A(\tau)$,
        so \refa{it:disjoint} holds.
    When the construction eventually stops it also fulfills condition \refa{it:active-covers}.
\end{claimproof}

\begin{definition}[$2$-attainability]
For a witness structure $\tau$ and an edge $g$,
    we say that \emph{$\tau$ is $2$-attainable from $g$},
        when
            $g$ does not intersect $A(\tau)$, and
            for some node $x$ of $\tau$ the distance between $g$ and $f_x$ is $2$.
In such a case,
    by \emph{a deepest node of $\tau$ that is $2$-attainable from $g$},
        we mean a node $x$ of $\tau$ of the greatest depth
            for which $f_x$ is in a distance $2$ from $g$.
In case there are multiple such nodes on the same depth, we choose any of them.
\end{definition}

Note that when some $g$ is not covered by $A(\tau) \neq \emptyset$,
    but its distance to $\tau$ is at most $2$, then $\tau$ is $2$-attainable from $g$.
This follows from the fact that
    some edge in $E(\tau)$ is in a distance at most $2$ from $g$,
    some edge in $A(\tau) \subset E(\tau)$ is in a distance at least $2$ from $g$,
    and $E(\tau)$ is connected.

\begin{definition}[Attaching an edge]
For an edge $g$ and a witness structure $\tau$ that is $2$-attainable from $g$,
    by \emph{attaching $g$ to $\tau$ as $\type{X}$} we mean the following operation.
Let $x \in \tau$ be the deepest node of $\tau$ that is  $2$-attainable from $g$,
    and $h$ be an edge that is on a shortest path from $g$ to $f_x$.
We add $h$ to $\tau$ as empty node
    (even if $h$ is already assigned to some previously added node), joining it towards $x$.
Then, we add $g$ to $\tau$ and label it with $\type{X}$.
\end{definition}
Assuming that $\type{X}$ is neither $\type{E}$ nor $\type{U}$,
    the balance of $\tau$ is not decreased.
This operation adheres to rule \refa{it:add-depth}
    so it does not alter basic events of the active nodes added to $\tau$ before $g$. 
It also retains information about which vertices of $g$ are in $\tCover{\tau}$
    at the moment when $g$ is added to $\tau$.
We express that property in the following claim.

\begin{claim}[Attaching preserves coverage] \label{claim:construct-attach}
    For an edge $g$ and a witness structure $\tau$ that is  $2$-attainable from $g$,
        let $\tau'$ denote the witness structure obtained by attaching $g$ to $\tau$.
    Then,
        $g \cap \tCover{\tau} = g \cap \tCover[<d(z)]{\tau'}$,
    where $z$ denote the newly added node corresponding to $g$ in $\tau'$.
\end{claim}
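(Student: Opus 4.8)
The plan is to reduce the identity to a local statement about edges at distance exactly $2$ from $g$, after first pinning down the depths produced by the attaching operation. Let $x$ be the deepest node of $\tau$ that is $2$-attainable from $g$, let $w$ be the empty node created for the intermediate edge $h$ (joined toward $x$, hence with arc $w\to x$), and let $z$ be the node created for $g$ (joined toward $w$). Since the two new arcs each leave a freshly created node and nothing points into $w$ or $z$ besides $z\to w$, no directed path out of an already present node is altered; hence old depths are untouched, $d(w)=d(x)+1$, and $d(z)=d(x)+2$. In particular $A(\tau')=A(\tau)\cup\{g\}$ (the node $w$ is empty and contributes nothing), and since $z$ sits at depth exactly $d(z)$ we get $A_{<d(z)}(\tau')=\{\,a\in A(\tau): a \text{ is assigned to some node of }\tau\text{ of depth }<d(x)+2\,\}\subseteq A(\tau)$. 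Also, by $2$-attainability $g$ is vertex-disjoint from every edge of $A(\tau)$, and therefore from every edge of $A_{<d(z)}(\tau')$.

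The conceptual core is the following observation, which I would isolate first: for any set $S$ of edges each of which is vertex-disjoint from $g$, one has $g\cap V(N^{+}(S))=g\cap V(N^{+}(S_2))$, where $S_2=\{\,s\in S:\mathrm{dist}_{L(H)}(s,g)=2\,\}$. The nonobvious inclusion is a short distance chase: if $v\in g$ lies in some $e'\in N^{+}(S)$, then $e'$ meets $g$ in $v$, so $e'$ is adjacent to $g$ (it cannot equal $g$, since $g\notin S$ and $g$ is adjacent to no edge of $S$); and $e'\in N^{+}(S)$ forces $e'$ to be adjacent to some $s\in S$ (it cannot lie in $S$ itself, as it meets $g$); consequently $\mathrm{dist}_{L(H)}(s,g)\le 2$, while $\mathrm{dist}_{L(H)}(s,g)\ge 2$ by vertex-disjointness, so $s\in S_2$ and $v\in V(N^{+}(S_2))$.

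Applying this observation to $S=A(\tau)$ and to $S=A_{<d(z)}(\tau')$ reduces the claim to showing that these two sets have the same part at distance $2$ from $g$, that is $\{a\in A(\tau):\mathrm{dist}_{L(H)}(a,g)=2\}=\{a\in A_{<d(z)}(\tau'):\mathrm{dist}_{L(H)}(a,g)=2\}$. One inclusion is immediate from $A_{<d(z)}(\tau')\subseteq A(\tau)$; for the other, any node of $\tau$ carrying an edge at distance $2$ from $g$ has depth at most $d(x)$ by the defining maximality of $x$, hence depth $<d(x)+2=d(z)$, so that edge already belongs to $A_{<d(z)}(\tau')$. Putting the pieces together gives $g\cap \tCover{\tau}=g\cap \tCover[<d(z)]{\tau'}$. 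I expect the only genuinely delicate point to be the depth bookkeeping of the first paragraph — verifying that attaching does not perturb the depths of previously assigned nodes, which is exactly what keeps the sets $A_{<d}$ stable under the construction; the distance chase and the final matching of distance-$2$ parts are routine.
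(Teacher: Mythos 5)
Your proof is correct and follows essentially the same route as the paper's: establish that old depths are preserved and $d(z)=d(x)+2$ (so $A_{<d(z)}(\tau')\subseteq A(\tau)$), observe that only active edges at $L(H)$-distance exactly $2$ from $g$ can affect $g\cap V(N^+(\cdot))$, and use the maximality of $x$ to conclude those edges all sit at depth $\le d(x)<d(z)$. You simply make explicit two steps the paper leaves terse — the distance-$2$ reduction as a standalone lemma applied to both $A(\tau)$ and $A_{<d(z)}(\tau')$, and the verification that attaching does not perturb existing depths — so the substance is the same.
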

\begin{claimproof}
    By definition of 2-attainability, the distance of $g$ to $A(\tau)$ is at least $2$.
    Although set $\tCover{\tau}$ is defined by all active edges of $\tau$,
        when we focus on its restriction to $g$,
        only those edges have impact that are in a distance $2$ from $g$.
    By definition of how $g$ is attached to $\tau$,
        all nodes corresponding to such active edges have smaller depth than newly added $z$,
        so these edges belong to $A_{<d(z)}(\tau')$.
\end{claimproof}

In the following we describe how to extend a proper construction $\tau$
    to cover edges of bad-components that are in a close distance to $\tau$.
For the sake of simplicity, in this context we equate a bad-component with its set of edges.

\begin{claim}[Expansion within a bad-component] \label{claim:construct-within-bad}
    Let $\tau$ be a proper construction for a set of edges $C$.
    Let $B$ be a set of edges of a bad-component which is in a distance at most $2$ from $\tau$.
    Then, $\tau$ can be extended to a proper construction for $C \cup B$
        without decreasing the balance.
\end{claim}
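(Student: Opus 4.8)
The plan is to mimic the proof of Claim~\ref{claim:construct-bad}, using the \emph{attaching} operation to grow $\tau$ outward into the bad-component $B$ one bad edge at a time. Throughout, let $A$ denote the current set of active edges, so initially $A = A(\tau)$. If already $B \subseteq N^{+}(A(\tau))$ there is nothing to do: then $(C\cup B)|_b = C|_b \cup B \subseteq N^{+}(A(\tau))$, the active set is unchanged, and all of \refa{it:disjoint}--\refa{it:active-inside} continue to hold for $C \cup B$. So assume from now on that some edge of $B$ is not covered by $A(\tau)$.

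First I would make sure at least one edge of $B$ becomes active. If no edge of $B$ is covered by $A(\tau)$, I use the hypothesis that $B$ is at distance at most $2$ from $\tau$ to pick $f_0 \in B$ whose line-graph distance to $E(\tau)$ is at most $2$. Since $f_0$ is uncovered it does not intersect $A(\tau)$, and since $\tau$ has positive balance $A(\tau) \neq \emptyset$; hence, by the remark following the definition of $2$-attainability, $\tau$ is $2$-attainable from $f_0$, and we attach $f_0$ to $\tau$ as $\typeB$. Now some edge of $B$ lies in $A$. Next I would iterate the following step as long as $B \not\subseteq N^{+}(A)$: by connectivity of $B$ in the line graph, fix a path inside $B$ from an uncovered edge to an edge of $B$ that is currently active; the edge $f$ immediately preceding the first covered edge along this path is uncovered and adjacent to a covered edge, hence at distance at most $2$ from $A$, and since it is uncovered its distance to $A$ is exactly $2$. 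In particular $f$ does not intersect $A$ and its distance to $\tau$ is at most $2$, so $\tau$ is $2$-attainable from $f$; attach $f$ as $\typeB$, which adds the bad edge $f$ to $A$. As $B$ is finite and each iteration covers a new edge of $B$, the loop terminates with $B \subseteq N^{+}(A)$; call the final structure $\tau'$.

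It then remains to verify that $\tau'$ is a proper construction for $C \cup B$. Each attaching step adds a $\typeB$ node together with one $\typeE$ node, changing the balance by $-1+1 = 0$, so the balance never decreases and stays positive, giving \refa{it:balanced}; moreover attaching always joins the new edge below the deepest $2$-attainable node, so it respects rule \refa{it:add-depth} and, by Claim~\ref{claim:construct-attach} together with the fact that the $\typeM$, $\typeB$, $\typeU$ events are self-descriptive, it does not alter the basic events of the previously inserted active nodes. Each newly added active edge is a bad edge of $B$, so its $\typeB$ event holds and it lies in $C \cup B$, giving \refa{it:satisfied} and \refa{it:active-inside}; and it was at distance exactly $2$ from $A$ when added, hence disjoint from all earlier active edges, giving \refa{it:disjoint}. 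Finally $C|_b \subseteq N^{+}(A(\tau)) \subseteq N^{+}(A(\tau'))$ and $B \subseteq N^{+}(A(\tau'))$, so $(C\cup B)|_b \subseteq N^{+}(A(\tau'))$, which is \refa{it:active-covers}.

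The only mildly delicate point I anticipate is the initialization step — being able to ``enter'' $B$ from $\tau$ even when $B$ is reached only through empty nodes of $\tau$, so that the distance of the first entry edge to the active set $A(\tau)$ could be larger than $2$ — and this is exactly what the $2$-attainability remark is designed to handle, since it only requires distance at most $2$ to $E(\tau)$. Everything else is a routine adaptation of the argument in Claim~\ref{claim:construct-bad}, with ``attach'' playing the role of the two-step extension used there.
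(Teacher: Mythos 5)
Your approach is the same as the paper's---iteratively attach uncovered bad edges of $B$ as $\typeB$ nodes via the attach operation, exploiting connectivity of $B$---and the verification of properties \refa{it:disjoint}--\refa{it:active-inside} at the end is correct. There is, however, a small case gap in the initialization. You only activate an edge of $B$ when \emph{no} edge of $B$ is covered by $A(\tau)$, then assert ``Now some edge of $B$ lies in $A$'' and anchor the iteration path at ``an edge of $B$ that is currently active.'' But in the actual uses of this claim, $B$ is a bad-component disjoint from $C$, so no edge of $B$ is ever in $A(\tau)\subset C$; nevertheless an edge of $B$ can still be \emph{covered} (merely adjacent to an active edge of $\tau$). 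In that subcase your initialization is skipped and the very first iteration has no active anchor inside $B$. The fix is cosmetic: anchor the path at a \emph{covered} edge of $B$ rather than an active one---the entry edge $f$ just before the first covered edge along the path is still at distance exactly~$2$ from $A$, since a covered edge has distance at most~$1$ from $A(\tau)$---or, on the first step, at the edge $g\in B$ witnessing the distance-$\le 2$ hypothesis. With this adjustment the argument becomes uniform across all cases and matches the paper's proof.
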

\begin{claimproof}

    When $B$ is not covered by $A(\tau)$, it is guaranteed that
        there exist $f \in B$ from which $\tau$ is $2$-attainable.
    As long as this is the case,
        we pick any such $f$ and attach it to $\tau$ as~$\typeB$.
    This operation does not decrease the balance
        and adheres to rule \refa{it:add-depth}.
    The existence of such an edge follows from the fact that $B$ is connected 
        and the initial distance of some $g \in B$ to $\tau$ is at most $2$.
\end{claimproof}

Let $C$ be a set of edges and
    $B$ be a set of edges of some bad-component outside $C$ (that is, $C \cap B = \emptyset$),
    such that there is an unsafe edge $f$ that intersects both $C|_b$ and $B$,
        and $f|_t \subset V(C|_b) \cup V(B)$.
We say that \emph{$f$ may activate $C$ from $B$} when
\begin{itemize}
  \item $f|_a \cup (f|_t \cap V(C|_b))$ is monochromatic and of size at least $(1-\alpha)k$.
\end{itemize}    
We say that \emph{$f$ may activate $B$ from $C$} when
\begin{itemize}
  \item $f|_a \cup (f|_t \setminus V(C|_b))$ is monochromatic and of size at least $(1-\alpha)k$.
\end{itemize}    
These definitions correspond to the way a search area is expanded 
    and to the extension rule (r2) in the construction of a final component.

\begin{claim}[Following a potential cause of activation] \label{claim:construct-bad-inside}
    Let $\tau$ be a proper construction for a set of edges $C$,
        and $B$ a set of edges of a bad-component, such that
        there exist an unsafe edge $f$ which may activate $C$ from $B$.
    Then, $\tau$ can be extended to a proper construction for $C \cup \{f\} \cup B$
        without decreasing the balance.
\end{claim}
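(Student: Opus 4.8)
The plan is to split into two cases according to whether the unsafe edge $f$ is already covered by the active edges of $\tau$. Since $f$ may activate $C$ from $B$, it intersects some bad edge $b\in C|_b$, and property \refa{it:active-covers} gives $C|_b\subseteq N^{+}(A(\tau))$, so $f$ lies within distance $2$ of $A(\tau)$. If $f$ is covered, i.e.\ $f\in N^{+}(A(\tau))$, then, as $f$ also intersects $B$, every edge of $B$ lies within distance $2$ of $\tau$; I would simply invoke Claim~\ref{claim:construct-within-bad} to extend $\tau$ to a proper construction for $C\cup B$ without decreasing the balance, and then observe that, $f$ being unsafe, $(C\cup\{f\}\cup B)|_b=(C\cup B)|_b$, so the resulting witness structure is already a proper construction for $C\cup\{f\}\cup B$. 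No node carrying $f$ is introduced in this case, which is legitimate because condition \refa{it:active-inside} only constrains active nodes and permits, but does not force, an active node carrying $f$.

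In the main case $f$ is not covered by $A(\tau)$. Then, since $f$ intersects $b\in C|_b\subseteq N^{+}(A(\tau))$ while meeting no active edge, $\tau$ is $2$-attainable from $f$, and I would attach $f$ to $\tau$ as a new node $z$ labeled $\typeBin$. Attaching inserts one empty node and one node of label $\typeBin$, so the balance is unchanged, and by construction the attaching operation respects the depth rule \refa{it:add-depth}, so no previously added basic event is altered. To check that $z$ may carry label $\typeBin$, I would apply Claim~\ref{claim:construct-attach}, which gives $f\cap \tCover[<d(z)]{\tau'}=f\cap \tCover{\tau}$ for the resulting structure $\tau'$. Combining this with $V(C|_b)\subseteq \tCover{\tau}$ (which follows from $C|_b\subseteq N^{+}(A(\tau))$) and with the hypotheses $f|_t\subseteq V(C|_b)\cup V(B)$ and $C\cap B=\emptyset$, and using that $\tau$ does not yet reach $B$, the troubled vertices of $f$ inside the covered region are exactly $f|_t\cap V(C|_b)$; hence $f|_a\cup\bigl(f|_t\cap \tCover[<d(z)]{\tau'}\bigr)=f|_a\cup\bigl(f|_t\cap V(C|_b)\bigr)$, which is monochromatic and of size at least $(1-\alpha)k$ precisely because $f$ may activate $C$ from $B$. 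This is exactly the basic event $\event{z}$ demanded of a node of label $\typeBin$.

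It then remains to cover $B$. After attaching $f$, the edge carried by $z$ intersects $B$, so $B$ lies within distance $2$ of $\tau'$, and I would apply Claim~\ref{claim:construct-within-bad} once more to reach a proper construction for $C\cup\{f\}\cup B$ without decreasing the balance. Disjointness \refa{it:disjoint} of the newly added active edges is automatic: $f$ is at distance $2$ from $A(\tau)$, and the bad edges added while expanding within $B$ are inserted by attaching, which only adds edges meeting no current active edge. Condition \refa{it:active-inside} holds because every new active edge lies in $\{f\}\cup B$, and \refa{it:active-covers} follows since $B$ (entirely bad) becomes covered while $C|_b$ was covered already.

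The step I expect to be the main obstacle is the verification of the $\typeBin$ basic event, and concretely the assertion that the troubled vertices of $f$ inside the currently covered region are precisely $f|_t\cap V(C|_b)$. This is the place where the hypotheses $f|_t\subseteq V(C|_b)\cup V(B)$ and $C\cap B=\emptyset$ are indispensable: one has to rule out that a troubled vertex of $f$ lying in $V(B)\setminus V(C|_b)$ is nonetheless covered by $A(\tau)$, which needs a careful argument from the definition of $\tCover{\tau}$ together with the fact that all active edges of $\tau$ lie in $C$ (property \refa{it:active-inside}) while $B$ is a bad-component disjoint from $C$. The rest is routine reuse of the attaching and within-a-bad-component expansion lemmas and bookkeeping of the balance.
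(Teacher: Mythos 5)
The proposal has a genuine gap, precisely at the step you yourself flag as the main obstacle, and the root cause is that your case split does not align with the paper's.

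You split on whether $f$ is covered by $A(\tau)$. The paper instead splits on whether the bad-component $B$ lies within distance $2$ of $\tau$ (i.e.\ of $E(\tau)$). In the paper's far case (all of $B$ at distance at least $3$), the conclusion $\tCover{\tau}\cap V(B)=\emptyset$ is immediate, and this is exactly what makes $f|_t\cap\tCover{\tau}=f|_t\cap V(C|_b)$ go through. Your case ``$f$ not covered by $A(\tau)$'' does \emph{not} yield $\tCover{\tau}\cap V(B)=\emptyset$, and you cannot derive it from $A(\tau)\subseteq C$ and $C\cap B=\emptyset$ alone. Here is a concrete failure mode: suppose $\tau$ has an active node labeled $\typeBin$ carrying an unsafe edge $a\in C$, and some bad edge $b'\in B$ intersects $a$. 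Then $b'\in N^{+}(A(\tau))$, so every vertex of $b'$ lies in $\tCover{\tau}$. If a troubled vertex $v$ of $f$ lies on $b'$, then $v\in f|_t\cap V(B)\cap\tCover{\tau}$, so $f|_t\cap\tCover{\tau}$ is strictly larger than $f|_t\cap V(C|_b)$, and the attached node cannot legitimately carry the $\typeBin$ event. Meanwhile $f$ itself may still be at distance exactly $2$ from $A(\tau)$ (touching $b'$ but not $a$), so this scenario sits squarely in your case 2. Disjointness of $C$ and $B$ only forbids a \emph{bad} active edge of $\tau$ from being adjacent to $B$ (that would merge the bad-components), but active nodes labeled $\typeBin$, $\typeBout$, or $\typeU$ carry unsafe edges, and those are free to touch $B$.

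The fix is to adopt the paper's dichotomy. If $B$ is within distance $2$ of $\tau$, expand within $B$ directly via Claim~\ref{claim:construct-within-bad} (your case 1 reasoning, suitably broadened; note this also subsumes the problematic scenario above, since there $b'$ is within distance $1$ of $\tau$). If $B$ is at distance at least $3$, then no edge touching $V(B)$ can lie in $N^{+}(A(\tau))$, so $\tCover{\tau}\cap V(B)=\emptyset$; together with $V(C|_b)\subseteq\tCover{\tau}$ and $f|_t\subseteq V(C|_b)\cup V(B)$ this gives exactly $f|_t\cap\tCover{\tau}=f|_t\cap V(C|_b)$, and the rest of your argument (the $2$-attainability, attaching as $\typeBin$, Claim~\ref{claim:construct-attach}, then expanding within $B$) is correct as written.
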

\begin{claimproof}
If $B$ is in a distance at most $2$ from $\tau$,
    then we can expand $\tau$ by Claim \ref{claim:construct-within-bad}
        to a proper construction for $C \cup B$.
Note that, since $f$ is an unsafe edge, 
    being a proper construction for $C \cup B$
    is enough to be a proper construction for $C \cup \{f\} \cup B$.
If not, then all the edges of $B$ are in a distance at least $3$ from $\tau$.
In particular, $\tCover{\tau} \cap V(B) = \emptyset$.
On the other hand, $V(C|_b) \subset \tCover{\tau}$ due to condition (\ref{it:active-covers}).
Since $f$ intersects $V(C|_b)$ and $V(B)$,
    it has to be exactly in a distance $2$ from $\tau$.
By assumptions $f$ may activate $C$ from $B$,
    so $f|_a \cup (f|_t \cap V(C|_b))$ is monochromatic.
Since each troubled vertex of $f$ that does not belong to $V(C|_b)$
    has to be in $V(B)$, and the latter set is disjoint with $\tCover{\tau}$,
    it occurs that $f|_t \cap V(C|_b) = f|_t \cap \tCover{\tau}$.
That allows attaching $f$ to $\tau$ as $\typeBin$.
It follows from Claim \ref{claim:construct-attach}
    that the corresponding event is satisfied after the addition of $f$ to $\tau$,
    so we get a proper construction for $C \cup \{f\}$.
Then, we can use Claim \ref{claim:construct-within-bad}
    to expand it within $B$, obtaining a desired proper construction for $C \cup \{f\} \cup B$.
Both edge attachment and expansion according to Claim \ref{claim:construct-within-bad}
    do not decrease the balance of $\tau$.
\end{claimproof}

\begin{claim}[Following a potential activation] \label{claim:construct-bad-outside}
    Let $\tau$ be a proper construction for a set of edges $C$,
        and $B$ a set of edges of a bad-component, such that
        there exist an unsafe edge $f$ which may activate $B$ from $C$.
    Then, $\tau$ can be extended to a proper construction for $C \cup \{f\} \cup B$
        without decreasing the balance.
\end{claim}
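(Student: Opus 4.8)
The structure of this proof should mirror the previous claim (Claim~\ref{claim:construct-bad-inside}), with the roles of $C$ and $B$ swapped in the relevant places, and with the label $\typeBout$ used in place of $\typeBin$. First I would dispose of the easy case: if $B$ is in a distance at most $2$ from $\tau$, then by Claim~\ref{claim:construct-within-bad} we may extend $\tau$ to a proper construction for $C \cup B$, and since $f$ is unsafe, covering $C \cup B$ already covers $C \cup \{f\} \cup B$ (an unsafe edge is never a bad edge, so it need not be covered to satisfy property~\refa{it:active-covers}). The balance does not decrease. So the remaining case is that every edge of $B$ is at distance at least $3$ from $\tau$.

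In that case, as in the previous proof, $\tCover{\tau} \cap V(B) = \emptyset$, while $V(C|_b) \subset \tCover{\tau}$ by property~\refa{it:active-covers}. Since $f$ intersects both $V(C|_b)$ and $V(B)$, the edge $f$ is at distance exactly $2$ from $\tau$. The key algebraic observation is that the set of troubled vertices of $f$ lying \emph{outside} $V(C|_b)$ is precisely $f|_t \setminus \tCover{\tau}$: indeed, $f|_t \subset V(C|_b) \cup V(B)$ by hypothesis, so a troubled vertex of $f$ not in $V(C|_b)$ lies in $V(B)$, hence outside $\tCover{\tau}$; conversely a troubled vertex of $f$ outside $\tCover{\tau}$ is certainly outside $V(C|_b) \subset \tCover{\tau}$. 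Therefore $f|_a \cup (f|_t \setminus V(C|_b)) = f|_a \cup (f|_t \setminus \tCover{\tau})$, and since $f$ may activate $B$ from $C$, this set is monochromatic and of size at least $(1-\alpha)k$. Now I would attach $f$ to $\tau$ as $\typeBout$. By Claim~\ref{claim:construct-attach}, attaching preserves coverage, so $f \cap \tCover{\tau} = f \cap \tCover[<d(z)]{\tau'}$ where $z$ is the new node for $f$; combined with the identity above, this gives that the basic event $\eventBasic{z}$ for the $\typeBout$ label holds after the addition, so $\tau' $ is a proper construction for $C \cup \{f\}$. Attaching as $\typeBout$ does not decrease the balance and respects rule~\refa{it:add-depth}.

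Finally, $f$ intersects $B$, so after the attachment $B$ is at distance at most $1$ from $\tau'$ (via $f$), and I would invoke Claim~\ref{claim:construct-within-bad} once more to expand $\tau'$ within $B$, obtaining a proper construction for $C \cup \{f\} \cup B$ with no decrease in balance. The only step that requires genuine care---and the one I would expect a referee to scrutinize---is the verification of the vertex-set identity $f|_t \setminus V(C|_b) = f|_t \setminus \tCover{\tau}$, since it relies crucially on the hypothesis $f|_t \subset V(C|_b) \cup V(B)$ built into the definition of ``may activate'', together with the distance-$\geq 3$ separation of $B$ from $\tau$; everything else is a routine repackaging of the techniques already established.
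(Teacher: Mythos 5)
Your proof is correct and follows exactly the approach the paper takes: the paper's own proof of this claim is a one-line remark that the argument is identical to that of Claim~\ref{claim:construct-bad-inside} except that $f$ is attached with label $\typeBout$ rather than $\typeBin$, and your write-up is precisely that argument unrolled, including the key vertex-set identity $f|_t \setminus V(C|_b) = f|_t \setminus \tCover{\tau}$ which mirrors the identity $f|_t \cap V(C|_b) = f|_t \cap \tCover{\tau}$ used in the $\typeBin$ case.
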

\begin{claimproof}
    We proceed analogously to the proof of Claim \ref{claim:construct-bad-inside}.
    The only change is that
        in case $B$ is in a distance at least $3$ from $\tau$,
            we attach $f$ to $\tau$ as $\typeBout$ instead of $\typeBin$.
\end{claimproof}

\begin{claim} [Utilization of an amortizing configuration] \label{claim:construct-use-amort-conf}
    Let $\tau$ be a proper construction for a set of edges $C$.
    Let $f$ be an edge in a distance at most $2$ from $\tau$, such that
        it is either monochromatic
        or intersects at least two disjoint bad edges that are not covered by $A(\tau)$.
    Let $S$ denote the union of sets of edges of bad-components 
        that are not included in $C$, but are intersected by $f$.
    Then, $\tau$ can be extended to a proper construction for $C \cup \{f\} \cup S$
        without decreasing the balance.
\end{claim}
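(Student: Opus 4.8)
The plan is to grow $\tau$ by a short sequence of the operations developed above — attaching an edge, adding an edge, and expanding within a bad-component via Claim \ref{claim:construct-within-bad} — letting the hypothesis on $f$ pay for the empty connector nodes that must be spent to reach the new bad-components. Recall the balance arithmetic: a $\typeM$ node contributes $+2$, a $\typeB$ node $+1$, an empty node $-1$, while expanding within a bad-component at distance at most $2$ never decreases the balance. I will only ever add nodes labeled $\typeM$, $\typeB$, or $\typeE$, so (using that active edges are pairwise disjoint) no previously placed event changes as long as each new active node is introduced either through the attaching operation, which enforces rule \refa{it:add-depth}, or at a position from which no $\typeBin$ or $\typeBout$ node sits at distance $2$, making rule \refa{it:add-depth} vacuous.

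First I would treat the case where every bad-component meeting $f$ already lies within distance $2$ of $\tau$: then one application of Claim \ref{claim:construct-within-bad} per such bad-component covers all of $S$ (and $\{f\}$ as well, since $\{f\}|_b$ is nonempty only when $f$ is bad, in which case $f$'s bad-component is among those just covered), without decreasing the balance. This already handles $f$ being at distance at most $1$ from $\tau$ — in particular $f$ being covered by $A(\tau)$ — and here the amortizing hypothesis is not needed at all. So the real case is $f$ at distance exactly $2$ from $\tau$, which forces $f$ to be uncovered and (since $A(\tau)\neq\emptyset$) makes $\tau$ $2$-attainable from $f$; now a bad-component of $S$ may sit at distance $3$. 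If $f$ is monochromatic I attach it as $\typeM$ (its basic event holds by assumption, and the balance rises by $1$), after which every bad-component of $S$ is within distance $1$ and Claim \ref{claim:construct-within-bad} finishes. If instead $f$ meets two disjoint bad edges $b_1\neq b_2$ not covered by $A(\tau)$, I would first check whether one of them, say $b_1$, is within distance $2$ of $\tau$; if so, attach $b_1$ as $\typeB$ (balance change $0$) and expand its bad-component by Claim \ref{claim:construct-within-bad}, after which $f$ is adjacent to the active edge $b_1$ and we are back in the easy case for $b_2$ and the rest of $S$.

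The genuinely delicate sub-case is that both $b_1$ and $b_2$ lie at distance exactly $3$ from $\tau$. Then I attach $f$ to $\tau$ as $\typeE$ (legitimate by $2$-attainability; the balance drops by $2$), so that $b_1,b_2$ become adjacent to $f\in E(\tau)$. The key point is that each $b_i$, being at distance $3$ from the original $\tau$ — and every $\typeBin$ or $\typeBout$ node of $\tau$ lies on an edge of that original structure, since none is ever added here — is at distance at least $3$ from all such edges, so rule \refa{it:add-depth} is vacuous and each $b_i$ can simply be added as a $\typeB$ node at no connector cost; since $b_1$ and $b_2$ are disjoint this is consistent with \refa{it:disjoint} even when they lie in the same bad-component. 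This recovers $+1+1$, so the net balance change is $0$, and a final round of Claim \ref{claim:construct-within-bad} covers the bad-components of $b_1$, of $b_2$, and the remaining ones of $S$, all now within distance $2$. It then only remains to read off the five properties of a proper construction for $C\cup\{f\}\cup S$: \refa{it:disjoint}, \refa{it:balanced} and \refa{it:satisfied} were preserved at every step; $A(\tau)$ only grew, so \refa{it:active-covers} for $C$ is inherited while $S|_b$ and $\{f\}|_b$ are covered by the last applications of Claim \ref{claim:construct-within-bad}; and every newly activated edge lies in $\{f\}$ or in a bad-component of $S$, which gives \refa{it:active-inside}. The main obstacle is exactly this distance-$3$ bookkeeping — avoiding the connector cost when inserting $b_1$ and $b_2$ — and the way around it is the observation that an edge that far from the pre-existing structure cannot threaten any $\typeBin$ or $\typeBout$ node.
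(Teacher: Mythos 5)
Your proposal is correct and follows essentially the same route as the paper's proof: handle the close case with Claim \ref{claim:construct-within-bad} alone; attach $f$ as $\typeM$ when monochromatic; and in the distance-$3$ sub-case attach $f$ as $\typeE$ and hang both disjoint bad edges off it as $\typeB$ nodes, recovering the balance. The only differences are cosmetic — you split cases by whether all of $S$ is within distance $2$ rather than by whether $f$ is, and you make explicit (via the distance-$\geq 3$ observation) why rule \refa{it:add-depth} is vacuous where the paper merely asserts adherence.
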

\begin{claimproof}
    When $f$ intersects $\tau$, then $\tau$ can be expanded within each bad-component
        included in $S$ by Claim \ref{claim:construct-within-bad}
        (since these components are then in a distance at most $2$ from $\tau$).
    Hence, we may assume that $f$ is in a distance $2$ from $\tau$,
        so $\tau$ is 2-attainable from $f$.

    In the first case, when $f$ is monochromatic, we attach it to $\tau$ as $\typeM$,
        obtaining a proper construction for $C \cup \{f\}$.
    Then, we expand it within $S$ by Claim \ref{claim:construct-within-bad}.
    In the second case, $f$ intersects at least two disjoint bad edges.
    Let denote them as $g_1$ and $g_2$.
    From assumptions they are not covered by $A(\tau)$ and belong to $S$.
    If any $g_i$ %, assume that $g_1$ 
        is in a distance at most $2$ from $\tau$, then $\tau$ is 2-attainable from it.
    In such a case we attach $g_i$ to $\tau$ as $\typeB$
        obtaining a proper construction for $C \cup \{g_i\}$.
    In a result, $f$ intersects $A(\tau)$, so all bad-components included in $S$ are now
        in a distance at most $2$ from $\tau$.
    Hence, applying Claim \ref{claim:construct-within-bad} for each of them,
        we expand $\tau$ within $S$.
    If both $g_1$ and $g_2$ are in a distance at least $3$ from $\tau$,
        we attach $f$ to $\tau$ as $\typeE$.
    Then, we add $g_1$ and $g_2$ to $\tau$ as $\typeB$,
        linking them towards the newly added node corresponding to edge $f$.
    In that way the balance is preserved and we adhere to rule \refa{it:add-depth}.
    Afterwards, $\tau$ is a proper construction for $C \cup \{f, g_1, g_2\}$.
    Again, we apply Claim \ref{claim:construct-within-bad} enough times
        to expand $\tau$ within $S$.
    In each case we obtain a proper construction for $C \cup \{f\} \cup S$.
\end{claimproof}

\subsubsection{Construction for a search area}

Consider the search area constructed by the complete execution of $\FExpandOrAccept(e)$
    for some edge $e$
    (recall that for the purpose of analysis, instead of declaring a failure,
        we continue evaluation of the procedure).
We can represent it as $A = C_1 \cup C_2 \cup ... \cup C_t \cup F$, where
\begin{itemize}
    \item $C_i$ is the set of edges of the $i$-th bad-component
            by which the search area was extended,
    \item $t$ is the number of such extensions,
    \item $F$ denotes bad edges added to $A$ 
            when amortizing configuration (e2) or (e3) was found.
\end{itemize}
To be more specific,
    sets $C_i$ correspond to consecutive sets $C$ obtained by \FExpandViaUnsafe,
         skipping calls that returned an empty set.
When no amortizing configuration was found, or (e1) occurs, then set $F$ is empty.
Note that it may also happen in the case of (e2). %that $F$ is empty.
Recall that when $F$ is not empty, it consists of the edges of some bad-components.
Let $U = \{f_1, f_2, ..., f_t\}$ be the set of unsafe edges
    such that $f_i$ triggered expansions of the search area to $C_i$ (in particular $f_1 = e$).
For the case when an amortizing configuration was found,
    let $f_F$ denote the edge that enabled the amortizing configuration.
It intersects $C_t$ and it satisfies either (e1) or (e2) or (e3).
In that case let $C_F = C_t \cup \{f_F\} \cup F$.
When no amortizing configuration was found, $f_F$ is not defined.
Let $\upsilon_F$ denote $\emptyset$ when $F$ is empty,
    and $\upsilon_F = \{f_F\}$ when it is not.
Then, we define $U_A = U \setminus \{f_1\} \cup \upsilon_F$
    as a set of unsafe edges that cause expansions of search area from $C_1$ to $A$.

In the following, we describe how to build a proper construction for a search area
    by tracking the evaluation of \FExpandOrAccept.
Formally, we build a proper construction for $A \cup U_A$.
Then, we show that, in fact, we can obtain a proper construction
    starting from any $C_i$ instead of $C_1$.
Finally we argue, that, in case of finding an amortizing configuration,
    we can obtain a proper construction with a balance at least $2$.

\begin{claim}[Tracking expansions of the search area] \label{claim:construct-search-area-part}
    Let $A_i = C_1 \cup C_2 \cup ... \cup C_i \cup \{f_2, ..., f_i\}$
        be the search area together with unsafe edges after the $i$-th expansion.
    For every $i \leq t$ there exist a proper construction $\tau_i$ for $A_i$.
\end{claim}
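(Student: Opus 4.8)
The plan is to prove the claim by induction on $i$, where each step is a single application of Claim~\ref{claim:construct-bad-inside}. For the base case $i=1$ we have $A_1 = C_1$, the edge set of the bad-component into which \FExpandViaUnsafe expands from the seed edge $e=f_1$, so Claim~\ref{claim:construct-bad} supplies a proper construction $\tau_1$ for $C_1$ directly.

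For the inductive step I would assume $\tau_i$ is a proper construction for $A_i$ and locate when $f_{i+1}$ entered the queue $Q$: this happens while \FExpandOrAccept is finishing the processing of some earlier component $C_j$ with $j\le i$, and the admission test (the final \texttt{Else} branch of Listing~\ref{alg:main-lca-expand-or-accept}) says that $f_{i+1}$ belongs to $U_{C_j}$, that $f_{i+1}|_a \cup (f_{i+1}|_t \cap V(C_j))$ is monochromatic, and that $f_{i+1}\setminus V(C_j)$ has fewer than $\alpha k$ troubled vertices. The next step is to convert this into the hypotheses of Claim~\ref{claim:construct-bad-inside} with $C:=A_i$, $B:=C_{i+1}$ and $f:=f_{i+1}$. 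Two structural facts carry this through: an edge of $U_{C_j}$ is not adjacent to the bad edges of $C_1,\dots,C_{j-1}$ (those were already marked explored), and, by the invariant recorded in the description of \FExpandOrAccept, such an edge meets at most one bad-component outside the area present when $C_j$ was processed; moreover a queue edge whose external bad-component has meanwhile been absorbed produces an empty expansion and so contributes no $C_{i+1}$, hence is absent from the decomposition $A=C_1\cup\cdots\cup C_t\cup F$. Consequently $f_{i+1}$ touches exactly the two bad-components $C_j$ and $C_{i+1}$, the edge sets $A_i$ and $C_{i+1}$ are disjoint, $f_{i+1}|_t \subseteq V(A_i|_b)\cup V(C_{i+1})$, and $f_{i+1}|_t \cap V(A_i|_b) = f_{i+1}|_t \cap V(C_j)$ (distinct bad-components share no vertices). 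Thus the algorithm's monochromaticity test against $V(C_j)$ is literally the statement that $f_{i+1}|_a \cup (f_{i+1}|_t \cap V(A_i|_b))$ is monochromatic, and a one-line count — an unsafe edge has more than $\alpha k$ troubled vertices, of which fewer than $\alpha k$ lie outside $V(C_j)$ — shows this set has size at least $(1-\alpha)k$. Hence $f_{i+1}$ may activate $A_i$ from $C_{i+1}$, and Claim~\ref{claim:construct-bad-inside} extends $\tau_i$, without decreasing the balance, to a proper construction for $A_i\cup\{f_{i+1}\}\cup C_{i+1}=A_{i+1}$, which we take as $\tau_{i+1}$. Note that Claim~\ref{claim:construct-bad-inside} already handles uniformly the subcase where $C_{i+1}$ lies close to $\tau_i$ and the subcase where $f_{i+1}$ is attached as a $\typeBin$ node (where rule~\refa{it:add-depth} and Claim~\ref{claim:construct-attach} keep the corresponding basic event satisfied), so no further case analysis is needed.

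I expect the main obstacle to be not the invocation of Claim~\ref{claim:construct-bad-inside} but the bookkeeping that justifies its hypotheses from the execution of \FExpandOrAccept: one must be careful that $f_{i+1}$ touches no bad-component of the search area besides $C_j$ (otherwise the monochromatic set $f_{i+1}|_a \cup (f_{i+1}|_t\cap V(A_i|_b))$ would be strictly larger than the one the algorithm actually tested), and that $C_{i+1}$ is a single bad-component genuinely disjoint from $A_i$, so that skipped iterations of the main loop are correctly excluded from the decomposition. Everything else — positivity of the balance, satisfaction of all basic events, preservation of the events of the $\typeBin$ and $\typeBout$ nodes inserted earlier — is inherited from Claim~\ref{claim:construct-bad-inside} and the inductive hypothesis, and iterating the step up to $i=t$ yields the required $\tau_i$ for every $i\le t$.
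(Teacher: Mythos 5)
Your proof is correct and follows essentially the same route as the paper's: induction on $i$, with the base case supplied by Claim~\ref{claim:construct-bad} and each inductive step being a single invocation of Claim~\ref{claim:construct-bad-inside} after observing that $f_{i+1}$ may activate $A_i$ from $C_{i+1}$. The paper states the latter observation without justification, whereas you fill in the bookkeeping (the $U_{C_j}$ membership, the single-external-component invariant, and the size count giving $(1-\alpha)k$) needed to see that the hypotheses of Claim~\ref{claim:construct-bad-inside} really do hold.
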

\begin{claimproof}
Recall that when $\FExpandOrAccept(e)$ is called, edge $e$
    intersects only one non-explored bad-component,
        and that component is captured by $C_1$.
So for $i=1$, let $\tau_1$ be a proper construction for $A_1 = C_1$ obtained
    according to Claim \ref{claim:construct-bad}.
For the induction step, let $\tau_i$ be a proper construction for $A_i$.
Observe that $f_{i+1}$ potentially may activate $A_{i}$ from $C_{i+1}$.
Thus, we can apply Claim \ref{claim:construct-bad-inside}
    to extend $\tau$ to a proper construction $\tau_{i+1}$ for
    $A_{i+1} = A_{i} \cup \{f_{i+1}\} \cup C_{i+1}$.
\end{claimproof}

\begin{claim}[Proper construction for the search area] \label{claim:construct-search-area}
    There exists a proper construction for $A \cup U_A$.
\end{claim}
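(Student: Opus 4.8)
The plan is to take the proper construction $\tau_t$ for $A_t = C_1 \cup C_2 \cup \dots \cup C_t \cup \{f_2,\dots,f_t\}$ furnished by Claim~\ref{claim:construct-search-area-part}, and to enlarge it by at most one more step so that it covers everything in $A \cup U_A$. First I would recall the bookkeeping from the paragraph preceding the claim: $A = C_1 \cup \dots \cup C_t \cup F$ and $U_A = \{f_2,\dots,f_t\} \cup \upsilon_F$, so that $A \cup U_A = A_t \cup F \cup \upsilon_F$, and since $U_A$ consists of unsafe edges we have $(A \cup U_A)|_b = A$. The argument then splits according to whether $F$ is empty.

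If $F = \emptyset$ --- which is exactly what happens when no amortizing configuration was found, when configuration (e1) occurred, or in those runs of (e2) in which the monochromatic edge meets no bad-component outside $C_1,\dots,C_t$ --- then $\upsilon_F = \emptyset$ as well, hence $A \cup U_A = A_t$ and $\tau_t$ is already the desired proper construction, nothing more to do. If $F \neq \emptyset$, then the amortizing configuration was (e2) or (e3), it was witnessed by the edge $f_F$, and $\upsilon_F = \{f_F\}$, so $A \cup U_A = A_t \cup \{f_F\} \cup F$. Here I would invoke Claim~\ref{claim:construct-use-amort-conf} with $\tau = \tau_t$, $C = A_t$ and $f = f_F$: in that claim the set $S$ is the union of the edges of the bad-components met by $f_F$ that are not already included in $A_t$, which is precisely $F$, so the claim outputs a proper construction for $A_t \cup \{f_F\} \cup F = A \cup U_A$ without decreasing the balance, as required. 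The two hypotheses of Claim~\ref{claim:construct-use-amort-conf} that need checking are: that $f_F$ lies at distance at most $2$ from $\tau_t$, and that $f_F$ is monochromatic or meets two disjoint bad edges outside $N^{+}(A(\tau_t))$. The distance bound is routine: $f_F$ intersects $V(C_t)$, and $C_t = C_t|_b \subseteq N^{+}(A(\tau_t))$ by property~\refa{it:active-covers}, so $f_F$ is adjacent to an edge that is itself covered by an active edge of $\tau_t$; the monochromaticity alternative holds verbatim in case (e2).

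The one genuinely delicate point --- the rest being bookkeeping --- is verifying in case (e3) that the two disjoint bad edges incident to $f_F$ are not covered by $A(\tau_t)$. For this I would exploit that $f_F$ was picked from the set $U_C$ inside \FExpandOrAccept, so $f_F$ is adjacent only to the most recently expanded bad-component $C_t$ and to no edge explored earlier in the search area (nor to $B$); combined with $A(\tau_t) \subseteq A_t = C_1 \cup \dots \cup C_t \cup \{f_2,\dots,f_t\}$, this forces the two bad edges to avoid the covered neighbourhood of $\tau_t$, making them genuinely uncovered. Once this is in place, Claim~\ref{claim:construct-use-amort-conf} applies and the proof is complete. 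I expect this last verification, and the precise matching of $S$ with $F$, to be the parts that require the most care; everything else follows directly from the earlier basic-technique claims.
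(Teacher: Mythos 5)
Your argument mirrors the paper's proof exactly: you take $\tau_t$ from Claim~\ref{claim:construct-search-area-part}, split on whether $F$ is empty, and when $F\neq\emptyset$ apply Claim~\ref{claim:construct-use-amort-conf} with $C=A_t$ and $f=f_F$, verifying the distance and non-coverage hypotheses in essentially the same way the paper does (the paper phrases the non-coverage check as ``$F$ and $A_t$ are not adjacent''; your observation that $f_F\in U_C$ is not adjacent to previously explored edges is an equivalent route to the same fact). The extra bookkeeping remarks, such as $(A\cup U_A)|_b = A$ and the case distinction within (e2), are harmless and correct, but not needed for the argument.
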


\begin{claimproof}
Let $A_i$ be defined as in Claim \ref{claim:construct-search-area-part}.
Then, $A \cup U_A = A_t \cup \upsilon_F \cup F$.
Let $\tau = \tau_{t}$ be a proper construction for $A_t$
    obtained by Claim \ref{claim:construct-search-area-part}.
When $F = \emptyset$ then $\tau$ is already a proper construction for $A$.
Suppose now that $F$ is not empty,
    so the procedure stopped by encountering an amortizing configuration.
In that case $f_F$ is defined at it is an unsafe edge that satisfies either (e2) or (e3)
    (in case of (e1) $F$ is empty).
$F$ consists of the bad edges of the all bad-components intersected by $f_F$
    that are outside $A_t$.
Note that $f_F$ intersects $C_t$ which is covered by $\tau$,
    so its distance to $\tau$ is at most $2$.
Observe that $F$ and $A_t$ are not adjacent,
    and since $A(\tau) \subset A_t$, $F$ is not covered by $A(\tau)$.
That situation fulfills assumptions of Claim \ref{claim:construct-use-amort-conf},
    so we can extend $\tau$ to a proper construction for $A_t \cup \{f_F\} \cup F = A \cup U_A$.
\end{claimproof}

Claim \ref{claim:construct-search-area} proves Proposition \ref{prop:witness-for-component}
    for the case of a search area.
The remaining part of this section is devoted to show some additional properties
    of the construction for a search area.
We focus on the case when the search area is intended to be incorporated into the final component,
    that is, when an amortizing configuration is found.
We show how to make use of such a configuration.

Let us look at the component-hypergraph,
    and focus on vertices corresponding to bad-components
        $V_A = \{ C_1, C_2, ..., C_{t-1}, C_t \}$
    and edges 
        $E_A = \{F_2, ..., F_t\}$
    corresponding to unsafe edges $\{f_2, ..., f_t\}$.
Recall that each $f_i$ intersects only two bad-components of $\{ C_1, C_2, ..., C_{t-1}, C_t \}$.
The first one is such a $C_j$ that added $f_{i}$ to $Q$ -- we denote it by $C_{Q(i)}$.
The other is $C_{i}$.
Thus, $T_A=(V_A, E_A)$ is in fact a tree.
Additionally, let us direct each $F_i$ from $C_{Q(i)}$ towards $C_{i}$.

The proper construction for $A$ obtained in the previous claim,
    is constructed in the order determined by the order 
        in which the components $C_{i}$ are visited by the algorithm.
This order can be viewed as traversing of $T_A$,
    in direction that is consistent with the orientation of the edges $F_i$,
    keeping the set of visited vertices connected.
It turns out, however, that we can expand a proper construction
    within search area by starting from some $C_{i}$ and traversing $T_A$ 
    in any order that preserves connectivity of the set of visited bad-components.
In particular we can allow traversing edges in the order opposite to their direction in $T_A$.

\begin{claim}[Expansion within the search area] \label{claim:construct-area-any-start}
    Let $\tau$ be a proper construction for a such set of edges $C$,
        that for each $C_i$ either $C_i \subset C$ or $C_i \cap C = \emptyset$,
        but for at least one $s \leq t$, $C_s \subset C$.
    Additionally, we require that
        either $F \subset C$ or none of the edges of $F$ is covered by $A(\tau)$.
    Then, $\tau$ can be extended to a proper construction for $C \cup A \cup U_A$
        without decreasing the balance.
\end{claim}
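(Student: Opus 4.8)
I would prove the claim by growing $\tau$ along the tree $T_A=(V_A,E_A)$, visiting its vertices in an arbitrary order that keeps the visited set connected and that starts from (one of) the components $C_s$ already covered by $C$. Throughout I maintain a set $\mathcal{D}\subseteq\{1,\dots,t\}$ of indices whose $C_i$ is currently covered, with the invariant that $\tau$ is a proper construction for $C$ together with all $C_i$ $(i\in\mathcal{D})$ and all $f_j$ whose arc $F_j$ has both endpoints in $\mathcal{D}$, that $\mathcal{D}\supseteq\{s:C_s\subseteq C\}\neq\emptyset$ (nonempty by hypothesis), and that the balance never drops below its initial value. As long as $\mathcal{D}\neq\{1,\dots,t\}$, connectivity of $T_A$ gives an arc $F_j$ with exactly one endpoint in $\mathcal{D}$; since $T_A$ is a tree, that endpoint is in fact the unique neighbor of the other one inside $\mathcal{D}$. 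Processing $F_j$ adds exactly one index to $\mathcal{D}$, so the procedure terminates after at most $t$ steps, leaving $\tau$ a proper construction for $C\cup\bigcup_i C_i\cup\{f_2,\dots,f_t\}$.

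\textbf{Processing an arc.} The heart of the matter is that $f_j$ can be followed in either direction along $F_j$. Recall that when $f_j$ entered the queue it was adjacent only to the then-current bad-component $C_{Q(j)}$ and to exactly one bad-component outside the search area, namely $C_j$, so all of its troubled vertices lie in $V(C_{Q(j)})\cup V(C_j)$; moreover it had fewer than $\alpha k$ troubled vertices outside $V(C_{Q(j)})$ and the set $f_j|_a\cup(f_j|_t\cap V(C_{Q(j)}))$ was monochromatic, so (as $f_j$ has $k$ vertices) this monochromatic set has more than $(1-\alpha)k$ of them. Because distinct bad-components are vertex-disjoint and, by hypothesis, each $C_\ell$ is either entirely inside or entirely outside $C$, for the current covered set $C'$ one gets $f_j|_t\cap V(C'|_b)=f_j|_t\cap V(C_{Q(j)})$ when $j\notin\mathcal{D}$ (so $Q(j)\in\mathcal{D}$), and $f_j|_t\setminus V(C'|_b)=f_j|_t\cap V(C_{Q(j)})$ when $Q(j)\notin\mathcal{D}$ (so $j\in\mathcal{D}$). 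In the first case $f_j$ may activate $C'$ from $C_j$, and I invoke Claim~\ref{claim:construct-bad-inside}; in the second $f_j$ may activate $C_{Q(j)}$ from $C'$, and I invoke Claim~\ref{claim:construct-bad-outside}. Either way $\tau$ becomes a proper construction for $C'\cup\{f_j\}$ together with the freshly attached bad-component, without decreasing the balance, and $\mathcal{D}$ grows by exactly one index.

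\textbf{Accounting for $F$.} If $F=\emptyset$ then $\upsilon_F=\emptyset$ and $C\cup A\cup U_A=C\cup\bigcup_i C_i\cup\{f_2,\dots,f_t\}$, so we are already done. If $F\subseteq C$ then $F$ is covered and $f_F$, being unsafe, can be added for free. Otherwise, by hypothesis no edge of $F$ is covered by the original $\tau$; since every active node added during the growth above carries an edge of $\bigcup_i C_i\cup\{f_2,\dots,f_t\}=A_t$ and $F$ (whose edges lie in bad-components outside the search area) is not adjacent to $A_t$, no edge of $F$ is covered by the enlarged $\tau$ either. As $f_F$ intersects $C_t$ it lies within distance $2$ of $\tau$, and since it satisfies (e2) or (e3) it is either monochromatic or meets two disjoint bad edges not covered by $A(\tau)$ (necessarily edges of $F$). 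Hence Claim~\ref{claim:construct-use-amort-conf}, applied with $f=f_F$ (its set $S$ being precisely the bad-components of $F$, as $f_F$ is a freshly explored unsafe edge around $C_t$ and touches no earlier bad edge), extends $\tau$ to a proper construction for $C\cup\bigcup_i C_i\cup\{f_2,\dots,f_t\}\cup\{f_F\}\cup F=C\cup A\cup U_A$, again without decreasing the balance.

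\textbf{The main obstacle.} The difficulty is bookkeeping rather than ideas: one must pin down exactly which bad-components each $f_j$ (and $f_F$) can touch, so that the direction-independent identity $f_j|_t\cap V(C'|_b)=f_j|_t\cap V(C_{Q(j)})$ genuinely holds at every step --- i.e.\ that $f_j$ never drags in troubled vertices of a third bad-component, and that the compatibility hypothesis on $C$ keeps the far endpoint $C_j$ wholly outside the current covered set until it is deliberately added. Once these adjacency facts are nailed down, the rest is a sequence of applications of Claims~\ref{claim:construct-within-bad}, \ref{claim:construct-bad-inside}, \ref{claim:construct-bad-outside} and \ref{claim:construct-use-amort-conf}, none of which decreases the balance, so properties \refa{it:disjoint}--\refa{it:active-inside} of a proper construction are preserved throughout.
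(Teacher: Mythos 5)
Your proposal is correct and follows essentially the same route as the paper: traverse $T_A$ from $C_s$ keeping the visited set connected, apply Claim~\ref{claim:construct-bad-inside} when an arc is followed in the direction of its orientation and Claim~\ref{claim:construct-bad-outside} when followed against it, and finish with Claim~\ref{claim:construct-use-amort-conf} for $\{f_F\}\cup F$. Your version is a bit more explicit than the paper's about the vertex-disjointness of bad-components (which underwrites the identity $f_j|_t\cap V(C'|_b)=f_j|_t\cap V(C_{Q(j)})$), about the $(1-\alpha)k$ size bound, and about why the case $F\subseteq C$ is harmless despite $f_F\in U_A$, but the argument is the same.
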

\begin{claimproof}
We start from $C_s$ and traverse $T_A$ in any order,
    keeping the set of visited bad-components connected (disregarding the orientation of edges).
Simultaneously with traversing $T_A$ we extend a proper construction $\tau$
    by visited bad-components, in a similar way to Claim~\ref{claim:construct-search-area-part}.

Let $C'$ be the union of the initial set $C$ and sets $C_i$ processed so far.
When we move from $C_i$ through edge $F_l$ to $C_j$, first we check whether $C_j \subset C$.
If so, we do not have to do anything.
Otherwise, $C_j$ is outside $C'$.
If direction from $C_i$ to $C_j$ agrees with the orientation of $F_l$
    (that is, when $i<j$ and $f_l = f_j$),
    then $f_l$ may activate $C'$ from $C_j$ and 
        we apply Claim~\ref{claim:construct-bad-inside} to extend $\tau$.
When we move in the direction opposite to $F_l$
    (that is, when $i>j$ and $f_l = f_i$),
    then $f_l$ may activate $C_j$ from $C'$ and 
        we apply Claim \ref{claim:construct-bad-outside} to extend $\tau$.

When $T_A$ is traversed, $\tau$ is a proper construction for $C \cup A_t$.
If initially $F = \emptyset$ or $F \subset C$, then we are done.
Otherwise, we can expand $\tau$ within $\{f_F\} \cup F$
    in the same way as in Claim~\ref{claim:construct-search-area}
    (that is, by the application of Claim \ref{claim:construct-use-amort-conf})
    obtaining a~proper construction for the whole $C \cup A \cup U_A$.
\end{claimproof}

\begin{claim}[Proper construction for an amortizing configuration] \label{claim:construct-amortizing-comp}
    There exists a proper construction $\tau_F$ for $C_F$
        that has balance at least $2$ and $f_F$ is covered by $A(\tau_F)$.
    Additionally, all active nodes in $\tau_F$ are labeled $\typeB$ or $\typeM$.
\end{claim}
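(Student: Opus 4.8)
The plan is to treat the three amortizing conditions (e1), (e2), (e3) for $f_F$ separately, in each case building $\tau_F$ out of a Claim~\ref{claim:construct-bad} construction of the bad-component $C_t$ into which we inject one extra unit of balance. Throughout, only the labels $\typeB$ and $\typeM$ will occur, so rule~\refa{it:add-depth} is vacuous and all basic events are self-descriptive; the only obligations at each step are that active edges stay pairwise disjoint, that the balance does not decrease, and that every edge promoted to a new active node was not yet covered. Recall that $C_F|_b = C_t \cup F$ (and $= C_t$ in case (e1)), so condition~\refa{it:active-covers} will follow once $C_t$ and $F$ are covered, and condition~\refa{it:active-inside} once all active edges of $\tau_F$ are seen to lie in $C_t \cup F \cup \{f_F\} = C_F$.

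Cases (e1) and (e2) exploit a monochromatic edge to seed a $\typeM$ node. If (e1) holds, then $f_F$ is a monochromatic edge of $C_t$ and $F = \emptyset$, so $C_F = C_t$; I would run the construction of Claim~\ref{claim:construct-bad} on $C_t$ but take $f_F$ as the initial edge and label it $\typeM$ instead of $\typeB$ (legitimate, since $f_F$ is monochromatic). The balance then starts at $2$, the remaining additions only append $\typeB$ nodes without decreasing it, and $f_F$ is active, hence covered. If (e2) holds, then $f_F$ is a monochromatic \emph{unsafe} edge adjacent to $C_t$ and $F$ is the union of the other bad-components met by $f_F$; appending $f_F$ as $\typeM$ to a construction of $C_t$ would break disjointness (since $f_F$ shares a vertex with some edge of $C_t$), so instead I would \emph{start} $\tau_F$ from the single node $f_F$ labelled $\typeM$ --- a proper construction for $\{f_F\}$ with balance $2$ --- and then apply Claim~\ref{claim:construct-within-bad} once to $C_t$ and once to each bad-component of $F$, all of which are at distance $1$ from $f_F \in E(\tau_F)$. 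Every edge activated in these expansions lies at distance $\ge 2$ from the current active set, hence is simultaneously uncovered and vertex-disjoint from the existing active edges; the balance never drops below $2$; and $f_F$ is active, hence covered.

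Case (e3) is where the balance accounting is delicate, and I expect it to be the main obstacle. Here $f_F \in U_{C_t}$ is an unsafe edge adjacent to $C_t$ that meets two disjoint bad edges $g_1, g_2$ lying outside $C_t$; since $f_F$ is not adjacent to any edge explored before $C_t$ was processed, one checks from the definitions that $g_1, g_2$ --- and in fact all of $F$ --- lie in bad-components distinct from $C_t$. Because $f_F$ is neither bad nor necessarily monochromatic, it cannot be an active node of $\tau_F$ (the claim admits only $\typeB$ and $\typeM$ active labels), so $f_F$ must become an empty node, and the extra unit of balance has to be harvested from $g_1$ and $g_2$. I would pick $g_0 \in C_t$ adjacent to $f_F$, run Claim~\ref{claim:construct-bad} on $C_t$ \emph{starting from $g_0$} (balance $\ge 1$, with $g_0$ active), then add $f_F$ as $\typeE$ joined to $g_0$ \emph{together with} $g_1$ as $\typeB$ joined to $f_F$ (this group nets $-1+1 = 0$), then add $g_2$ as $\typeB$ joined to $f_F$ ($+1$), and finally cover the remaining bad-components of $F$ by Claim~\ref{claim:construct-within-bad} (each nets $\ge 0$). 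This leaves the balance $\ge 2$; each of $g_1, g_2$ and of the later activated edges of $F$ is uncovered and vertex-disjoint from all current active edges precisely because it sits in a bad-component other than $C_t$; $f_F$ is adjacent to the active edges $g_1, g_2$, hence covered; and all active edges lie in $C_t \cup F$, so $\tau_F$ is a proper construction for $C_F$. The point that must be handled with care is exactly this: the $C_t$-construction has to be started from an edge adjacent to $f_F$, so that $f_F$ can be attached as a distance-$1$ empty node costing a single $\typeE$; routing $f_F$ in via an intermediate edge would cost two $\typeE$ nodes and would exactly cancel the gain obtained from $g_1$ and $g_2$.
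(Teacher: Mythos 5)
Your proof is correct and follows the same high-level plan as the paper's: seed balance $2$ from a $\typeM$ node at $f_F$ in cases (e1)/(e2), and from three pairwise-disjoint bad edges adjacent to $f_F$ (one in $C_t$, two in $F$) in case (e3). For (e1) and (e2) your construction matches the paper's (start from $f_F$ labeled $\typeM$ and expand). The noteworthy divergence is case (e3). You build the $C_t$-construction first, rooted at an edge $g_0 \in C_t$ adjacent to $f_F$, and only then hang $f_F$ as $\typeE$ off $g_0$ and $g_1, g_2$ as $\typeB$ off $f_F$; you correctly flag that the choice of root is load-bearing, since if the $C_t$-construction were rooted elsewhere, $f_F$ might sit at distance $2$ and attaching it would cost two $\typeE$ nodes, cancelling the gain from $g_1, g_2$. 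The paper sidesteps that subtlety entirely: it roots $\tau_F$ at $f_F$ itself (as a $\typeE$ node) and immediately hangs all three disjoint bad edges $e_1, e_2, e_3$ off it, reaching balance $2$ before any expansion, and only then expands within $C_t$ and the bad-components of $F$. The paper's ordering is cleaner because the distance-to-$f_F$ question never arises; your ordering is sound but requires the explicit rooting argument to make the balance accounting close. Both constructions use only $\typeB$ and $\typeM$ active labels, keep active edges pairwise disjoint, and deliver balance $\geq 2$ with $f_F$ covered.
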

\begin{claimproof}
    When a search area contains an amortizing configuration then $f_F$ is defined.
    Set $F$ may be empty, however, our proof covers also that case.
    Recall that $C_F = C_t \cup \{f_F\} \cup F$.
    Note that $f_F$ intersects both $C_t$ and $F$ (when it is not empty), so $C_F$ is connected.
    
    Consider first the case when $f_F$ satisfies (e1) or (e2), i.e. it is a monochromatic edge.
    We start $\tau$ from a single node labeled $\typeM$, and assign to it edge $f_F$.
    Its initial balance is $2$.
    Then, we can expand $\tau$ to a proper construction for $C_F$
         in the same way as in Claim \ref{claim:construct-bad}.
    The resulting $\tau$ has at least the same balance as the initial balance,
        thus, as required, it is at least $2$.
    Obviously $f_F$ is covered by $A(\tau_F)$.
    
    Now consider case when $f_F$ satisfies (e3).
    It has to intersect at least $2$ disjoint bad edges in $F$.
    Additionally, it intersects $C_t$ so
        there are at least $3$ disjoint bad edges adjacent to $f_F$.
    Denote them as $e_1$, $e_2$, and $e_3$.
    We can start $\tau$ from a single node labeled with $\typeE$, and assign $f_F$ to it.
    Then, we add $e_1$, $e_2$, and $e_3$ to $\tau$, all labeled with $\typeB$.
    After that initial steps, balance of $\tau$ is $2$ and $f_F$ is covered by $A(\tau_F)$.
    Now, as in the first case, we expand $\tau$ to a proper construction for $C_F$
        in the same way as in Claim \ref{claim:construct-bad},
        without decreasing the balance.
\end{claimproof}

\begin{corollary} \label{cor:construct-area-balance-2}
    For a search area that contains an amortizing configuration,
        there exists a~proper construction with a balance at least $2$.
\end{corollary}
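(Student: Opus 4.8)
The plan is to read this corollary off from Claims~\ref{claim:construct-amortizing-comp} and~\ref{claim:construct-area-any-start}. Since the search area contains an amortizing configuration, the edge $f_F$ is defined, so Claim~\ref{claim:construct-amortizing-comp} supplies a proper construction $\tau_F$ for $C_F = C_t \cup \{f_F\} \cup F$ whose balance is at least $2$ (and in which $f_F$ is covered by $A(\tau_F)$, all active nodes being labeled $\typeB$ or $\typeM$). What is left is to enlarge $\tau_F$ so that it also covers the bad edges contributed by $C_1,\dots,C_{t-1}$, while keeping the balance at least $2$.

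For that I would invoke Claim~\ref{claim:construct-area-any-start} with $C = C_F$ and $s = t$. Its hypotheses hold for elementary reasons: the sets $C_1,\dots,C_t$ are edge sets of pairwise distinct bad-components (the procedure skips a bad-component that has already been added to the search area), hence pairwise disjoint; the set $F$ consists of bad edges of bad-components lying outside $A_t$, hence is disjoint from every $C_i$; and $f_F$, if it is an unsafe edge, lies in no $C_i$ (those contain only bad edges), whereas if it is a bad edge it already belongs to $C_t$. Therefore, for each $i$ we have $C_i \subseteq C_F$ when $i = t$ and $C_i \cap C_F = \emptyset$ otherwise; the required index $s = t$ satisfies $C_t \subseteq C_F$; and $F \subseteq C_F$, so the alternative that $F \subset C$ or that no edge of $F$ is covered by $A(\tau_F)$ is met trivially. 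Claim~\ref{claim:construct-area-any-start} then extends $\tau_F$ to a proper construction for $C_F \cup A \cup U_A$ without decreasing the balance, so that balance remains at least $2$.

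It only remains to observe that this construction covers all bad edges of the search area, since $A \subseteq C_F \cup A \cup U_A$; indeed $C_F \cup A \cup U_A$ coincides with $A \cup U_A$ except that, in the degenerate case where $F = \emptyset$ and $f_F$ is an unsafe edge, it may additionally contain the single edge $f_F$, which is immaterial for the size bound of Proposition~\ref{prop:witness-for-component}. I do not anticipate a genuine obstacle here: the corollary is a bookkeeping consequence of the two cited claims, and the only step warranting a little care is checking the disjointness and containment hypotheses of Claim~\ref{claim:construct-area-any-start} for the starting set $C_F$.
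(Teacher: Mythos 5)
Your proof is correct and follows essentially the same route as the paper: both obtain $\tau_F$ for $C_F$ with balance at least $2$ via Claim~\ref{claim:construct-amortizing-comp}, then extend to the whole search area via Claim~\ref{claim:construct-area-any-start}. You are somewhat more explicit than the paper about verifying the disjointness/containment hypotheses of Claim~\ref{claim:construct-area-any-start} and about the harmless discrepancy between $C_F \cup A \cup U_A$ and $A \cup U_A$, but this is exposition, not a different argument.
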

\begin{proof}
    This is a consequence of Claim \ref{claim:construct-amortizing-comp} 
                               and \ref{claim:construct-area-any-start}.
    The former allows to obtain $\tau$ that is a proper construction for 
        $C_F = C_t \cup F \cup \{f_F\}$
        and has a balance at least $2$.
    Then, according to Claim \ref{claim:construct-area-any-start}, 
        it can be extended to the whole search area without decreasing the initial balance.
\end{proof}

\begin{claim}[Incorporating an amortizing configuration] \label{claim:construct-extend-with-amort-conf}
    Let $\tau$ be a proper construction for a set of edges $C$, such that
        $C_F$ in a distance at most $2$ from $\tau$, and
        none of the edges in $C_t \cup F$ is covered by $A(\tau)$.
    Then, $\tau$ can be extended to a proper construction for $C \cup C_F$
        without decreasing the balance.
\end{claim}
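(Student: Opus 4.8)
The plan is to obtain this claim by composing two balance‑preserving operations established earlier: expansion within a bad‑component (Claim~\ref{claim:construct-within-bad}) and utilization of an amortizing configuration (Claim~\ref{claim:construct-use-amort-conf}). Recall that $C_F = C_t \cup \{f_F\} \cup F$, that $f_F$ is monochromatic if it satisfies (e1) or (e2) and intersects at least two disjoint bad edges lying outside $C_t$ (hence in $F$) if it satisfies (e3), and that $F$ is the union of the edge sets of the bad‑components met by $f_F$ outside $A_t$. Since $f_F$ was picked from $U_{C_t}$, it is not adjacent to any of the earlier‑explored components $C_1,\dots,C_{t-1}$, so the bad‑components it intersects are exactly $C_t$ together with those making up $F$ — precisely the bad‑components whose union is $C_F$. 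I would also rely on the fact, immediate from property~\refa{it:active-covers}, that a bad‑component none of whose edges is covered by the active edges of a proper construction is not contained in that construction's edge set; by the hypothesis on $C_t \cup F$, this is the situation we are in.

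The step I expect to be the main obstacle is that the hypothesis guarantees only that \emph{some} edge of $C_F$ lies within distance~$2$ of $\tau$, while $f_F$ itself may lie deep inside a large component $C_t$ and so be far from $\tau$. I would therefore first bring $f_F$ into reach. If a closest edge of $C_F$ belongs to $C_t$ — in particular in case (e1), where $f_F \in C_t$ — I would apply Claim~\ref{claim:construct-within-bad} to extend $\tau$ to a proper construction for $C \cup C_t$; in case (e1) this is already a proper construction for $C \cup C_F$ and we are finished, and in cases (e2), (e3) the edge $f_F$, now adjacent to the covered $C_t$, lies within distance~$2$ of the new construction. If a closest edge belongs instead to a bad‑component $D$ contributing to $F$, I would expand $\tau$ within $D$ first, again placing $f_F$ within distance~$2$. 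Otherwise $f_F$ already lies within distance~$2$ of $\tau$.

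In each branch we have a proper construction $\tau'$ for some $C'\in\{C,\ C\cup C_t,\ C\cup D\}$, with $f_F$ within distance~$2$ of $\tau'$ and with balance at least that of $\tau$, so I would finish by applying Claim~\ref{claim:construct-use-amort-conf} with the edge $f:=f_F$. Its hypotheses are met: $f_F$ is monochromatic in cases (e1), (e2), while in case (e3) it is adjacent to at least three pairwise disjoint bad edges — the two inside $F$ plus one inside $C_t$ — of which a single preliminary bad‑component expansion can have absorbed at most one, leaving at least two uncovered by $A(\tau')$. The claim produces a proper construction, of balance no smaller than that of $\tau'$, for $C'\cup\{f_F\}\cup S$, where $S$ collects the edges of the bad‑components met by $f_F$ that are not contained in $C'$. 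Combining the identification of those bad‑components with the constituents of $C_F$ and the coverage‑versus‑inclusion fact noted above, $C'\cup\{f_F\}\cup S = C\cup C_F$, which proves the claim.
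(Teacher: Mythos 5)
Your strategy — first bring $f_F$ within distance~$2$ by a preliminary expansion inside the nearest bad-component of $C_F$, then invoke Claim~\ref{claim:construct-use-amort-conf} — breaks down in the branch where that nearest component is some $D$ contributing to $F$ and the configuration is of type (e3). The two disjoint bad edges outside $C_t$ that certify (e3) need not lie in distinct bad-components: they can both lie in the same $D$. After you extend $\tau$ within $D$ to a proper construction $\tau'$ for $C\cup D$, property~\refa{it:active-covers} forces \emph{every} bad edge of $D$ to be covered by $A(\tau')$, so both certifying edges are now covered. The only remaining disjoint bad edge met by $f_F$ that is certainly uncovered is the one in $C_t$, which gives you just one, not the two required by Claim~\ref{claim:construct-use-amort-conf}. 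Your assertion that ``a single preliminary bad-component expansion can have absorbed at most one'' is exactly the false step — it would be valid only if the three pairwise disjoint bad edges lay in three distinct bad-components, and nothing forces the two in $F$ to be split across components. Nor can you patch this by afterwards reaching $C_t$: after the $D$-expansion the bad edge in $C_t$ is only guaranteed to be within distance~$3$ of $\tau'$, which is too far for Claim~\ref{claim:construct-within-bad} or for attaching.

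The paper sidesteps this entirely by not routing through Claim~\ref{claim:construct-use-amort-conf} at all in the hard case: when $f_F$ does not intersect $\tau$, it builds a free-standing proper construction $\tau_F$ for $C_F$ via Claim~\ref{claim:construct-amortizing-comp}, which by design has balance at least~$2$ and whose active nodes carry only $\typeB$/$\typeM$ labels (hence are reorientation-safe). Since $C_F$ is within distance~$2$ of $\tau$ and $A(\tau_F)$ covers $C_F$, the distance from $\tau$ to $A(\tau_F)$ is at most~$3$; $\tau_F$ is then grafted onto $\tau$ along a shortest path by inserting at most two empty nodes, whose cost is paid for by the surplus balance of $\tau_F$. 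Your first branch (closest edge in $C_t$, or $f_F$ already close) and the monochromatic cases (e1)/(e2) do go through as you argue, but the (e3)-with-$D$ branch needs the paper's glue-in-a-separate-construction idea rather than a second application of Claim~\ref{claim:construct-use-amort-conf}.
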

\begin{claimproof}
    When $f_F$ intersects $\tau$, then 
        $C_t$ and all bad-components included in $F$ are in a distance at most $2$ from $\tau$.
    Then, using Claim \ref{claim:construct-within-bad} for each such bad-component,
        $\tau$ can be extended to a proper construction for $C \cup C_F$.
    Assume now that $f_F$ does not intersect $\tau$.
    Let $\tau_F$ be the proper construction for $C_F$
        obtained by Claim \ref{claim:construct-amortizing-comp}.
    Notice that by assumptions, edges from $A(\tau)$ are disjoint with edges from $A(\tau_F)$.
    Since $C_F$ is in a distance at most $2$ from $\tau$,
        and $A(\tau_F)$ covers $C_F$ (including $f_F$),
        it occurs that the distance between $\tau$ and $A(\tau_F)$ is at most~$3$.
    
    Let $x$ be a deepest node in $\tau$ such that
        the corresponding edge $f_x$ is in a distance at least~$1$ and at most $3$ from $A(\tau_F)$.
    Let $p$ be a shortest path from $f_x$ to $A(\tau_F)$ and 
        $z$ be a~node in $\tau_F$ such that $f_z$ is on the end of this path.
    Moving from $f_x$ to $f_z$ along $p$,
        for each inner edge we add it to $\tau$ as (E)
            joining it towards the node corresponding to its predecessor in $p$.
    After that, we reorient all the arcs in $\tau_F$ so that $z$ becomes the only root.
    Reorientation of edges does not alter definitions of the basic events,
        since all active nodes in $\tau_F$ are labeled as (B) or (M).
    Finally, we incorporate modified $\tau_F$ into $\tau$ by adding an arc from $z$
        towards such a node $y$ in $\tau$, that $f_y$ is the predecessor of $f_z$ in $p$.
    That way of joining $\tau_F$ to $\tau$ adheres to rule \refa{it:add-depth},
        so all basic events are satisfied in the obtained $\tau$.
    Since the length of $p$ is at most $3$,
        at most $2$ empty nodes are added to $\tau$.
    Their addition is balanced by the balance of $\tau_F$,
        so the balance of $\tau$ does not decrease.
\end{claimproof}

\subsubsection{Construction for a final component}

This is the final step in the proof of Proposition \ref{prop:witness-for-component},
    completing the proof of Proposition \ref{prop:bound-component-size}.
We show how to build a proper construction for a final component
    obtained by the complete evaluation of $\FBuildComponent(v)$.
Formally, it is a proper construction for $B \cup U_e$ where
    $B$ is the set of bad edges of the final component, and
    $U_e$ is the set of unsafe edges that caused expansions of the final component
        (it contains both edges that triggered extension rules and
            edges that are used to expand search areas added to the final component).

We start from a proper construction $\tau$ for the initial bad-component containing $v$.
We obtain it by applying Claim \ref{claim:construct-bad}.
Then we follow the extension rules, and extend $\tau$ accordingly.
Let $B'$ be a set of bad edges in the eventually final component at the moment of extending it.
Let $U'_e$ be a set of unsafe edges that caused expansions so far.
Assume that $\tau$ is a proper construction for $B' \cup U'_e$.
Let $f$ be an unsafe edge that satisfies one of the extension rules.
Note that it is in a distance at most $2$ from $\tau$, since it intersects $B'$.

\begin{claim}[Following rule (r1)]
    Let $f$ satisfy extension rule (r1) and $S$ be the set of edges
        belonging to external bad-components that are intersected by $f$.
    Then, $\tau$ can be extended to a proper construction for $B' \cup U'_e \cup \{f\} \cup S$.
\end{claim}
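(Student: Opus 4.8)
The plan is to recognize this as an immediate instance of Claim~\ref{claim:construct-use-amort-conf} (utilization of an amortizing configuration), applied with $C := B' \cup U'_e$ and the same edge $f$. By the running hypothesis of this phase of the construction, $\tau$ is a proper construction for $C$, so all that remains is to verify the two hypotheses of that claim: that $f$ lies within distance $2$ of $\tau$, and that $f$ intersects at least two disjoint bad edges not covered by $A(\tau)$.

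The distance bound is easy. Since $f \in U$, it intersects $V(B')$, hence shares a vertex with some bad edge $b \in B'$; as $f$ is unsafe and $b$ is bad, in fact $f$ is adjacent to $b$ in $L(H)$. By property~\refa{it:active-covers} of a proper construction, $b \in N^{+}(A(\tau))$, so $b$ is at distance at most $1$ from $A(\tau)$, and therefore $f$ is at distance at most $2$ from $\tau$, as needed.

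For the second hypothesis I would first isolate the following invariant of the construction: every bad edge in $N^{+}(A(\tau))$ already belongs to $B'$. To see this, note that $A(\tau) \subset C = B' \cup U'_e$ by~\refa{it:active-inside}; a bad edge assigned to an active node hence lies in $B'$ (it cannot lie in $U'_e$). A bad edge $g$ adjacent to such an active bad edge lies in the same bad-component, and $B'$ is at every stage a union of \emph{complete} bad-components, so $g \in B'$. Finally, a bad edge $g$ adjacent to an active unsafe edge $a \in U'_e$ is intersected by $a$, and at the moment $a$ caused an expansion all bad-components it intersects were absorbed into $B'$, so again $g \in B'$. Rule~(r1) provides two disjoint bad edges intersected by $f$ that lie outside $B'$; by the invariant, neither is covered by $A(\tau)$, which is exactly the second hypothesis of Claim~\ref{claim:construct-use-amort-conf}.

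Both hypotheses now hold, so Claim~\ref{claim:construct-use-amort-conf} extends $\tau$ --- without decreasing its balance --- to a proper construction for $C \cup \{f\} \cup S$, where $S$ is the union of the edge sets of the bad-components intersected by $f$ that are not contained in $C$; since $U'_e$ contains no bad edges, this is precisely the set $S$ of external bad-components of the statement. We thus obtain a proper construction for $B' \cup U'_e \cup \{f\} \cup S$, which is what is claimed. I expect the only genuine work to be in stating and checking the covering invariant cleanly; once that is in place the conclusion is a one-line appeal to the machinery already developed.
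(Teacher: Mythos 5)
Your proof is correct and follows the same route as the paper: recognize the situation as an application of Claim~\ref{claim:construct-use-amort-conf} with $C = B' \cup U'_e$ and then verify its two hypotheses. The paper tersely asserts that the two disjoint bad edges supplied by rule (r1) do not intersect $B' \cup U'_e$ and hence are not covered by $A(\tau)$; your explicit invariant (that every bad edge in $N^{+}(A(\tau))$ already lies in $B'$, proved via completeness of bad-components in $B'$ and the fact that each expansion-triggering unsafe edge has all its external bad-components absorbed) makes that assertion precise and is a worthwhile clarification rather than a different argument.
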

\begin{claimproof}
In case of (r1), $f$ intersects at least two disjoint bad edges 
    that do not intersect $B' \cup U'_e$, so they are not covered by $A(\tau)$.
This situation is twinned with finding an amortizing configuration (e3) in $\FExpandOrAccept$.
Hence, we can apply Claim \ref{claim:construct-use-amort-conf}
    to extend $\tau$ within $\{f\} \cup S$.
\end{claimproof}

\begin{claim}[Following rule (r2)]
    Let $f$ satisfy extension rule (r2) and $C$ be the set of bad edges
        of the only one bad-component outside $B'$ that is intersected by $f$.
    Then, $\tau$ can be extended to a proper construction for $B' \cup U'_e \cup \{f\} \cup C$.
\end{claim}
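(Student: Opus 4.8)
The plan is to recognize this claim as the rule-(r2) counterpart of Claim~\ref{claim:construct-bad-outside} (``Following a potential activation''), and to obtain it by a single application of that claim. Concretely, I would instantiate Claim~\ref{claim:construct-bad-outside} with its ``outer'' set of edges taken to be $B' \cup U'_e$ (for which $\tau$ is assumed to be a proper construction) and its bad-component $B$ taken to be our $C$. Since $U'_e$ consists only of unsafe edges, one has $(B' \cup U'_e)|_b = B'$ and hence $V((B'\cup U'_e)|_b) = V(B')$, so the condition ``$f$ may activate $C$ from $B'\cup U'_e$'' unwinds to the statement that $f|_a \cup (f|_t \setminus V(B'))$ is monochromatic and of size at least $(1-\alpha)k$.

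The first step is to check the structural preconditions. Being an edge of $U$, the edge $f$ is unsafe and intersects $B'$; by hypothesis it intersects the bad-component $C$; and $C$, lying outside $B'$, is disjoint from $B'\cup U'_e$ (its edges are bad and not in $B'$, hence not in $U'_e$ either). The only point needing a word is $f|_t \subseteq V(B') \cup V(C)$: every troubled vertex of $f$ lies in some bad-component, which is then intersected by $f$; since $B'$ is a union of complete bad-components (an invariant of the construction), that bad-component is either one of those comprising $B'$ or the unique external one $C$, so its vertices lie in $V(B')$ or in $V(C)$.

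The second step is the activation condition proper. Monochromaticity of $f|_a \cup (f|_t \setminus V(B'))$ is exactly what triggering rule (r2) provides, since $f \setminus V_t(B')$ is monochromatic and $f|_t \setminus V_t(B') = f|_t \setminus V(B')$, so $f \setminus V_t(B') = f|_a \cup (f|_t \setminus V(B'))$. For the lower bound on its size I would invoke the trimming invariant of \FBuildComponent: an extension rule is applied to $f$ only after it has been verified that $f$ has fewer than $\alpha k$ troubled vertices inside $V(B')$, whence $|f \setminus V_t(B')| = k - |f|_t \cap V(B')| > (1-\alpha)k$. With all hypotheses in hand, Claim~\ref{claim:construct-bad-outside} extends $\tau$ to a proper construction for $(B' \cup U'_e) \cup \{f\} \cup C$ without decreasing the balance, which is precisely the assertion.

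I do not expect a genuine obstacle here: the work is entirely in matching the ``inside/outside'' roles in the definition of ``may activate'' to the way rule (r2) is phrased, in using the invariant that $B'$ is a union of complete bad-components (so that the troubled vertices of $f$ outside $B'$ all land in $C$), and in spotting that the $(1-\alpha)k$ size bound comes from the trimming check rather than from the raw size of $f$. This mirrors the situation for the preceding ``Following rule (r1)'' claim, which is the (r1)-counterpart of Claim~\ref{claim:construct-use-amort-conf}.
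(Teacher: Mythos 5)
Your proposal is correct and matches the paper's proof: the paper likewise observes that $f$ may activate $C$ from $B' \cup U'_e$ and then invokes Claim~\ref{claim:construct-bad-outside}. You have simply filled in the details the paper leaves implicit — checking $f|_t \subseteq V(B') \cup V(C)$ via the ``union of complete bad-components'' invariant, and deriving the $(1-\alpha)k$ size bound from the trimming check preceding application of the extension rules.
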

\begin{claimproof}
In case of (r2), observe that $f$ may activate $C$ from $B' \cup U'_e$.
That situation is analogous to the extension of a search area
    described in Claim \ref{claim:construct-area-any-start}
    (when we move from $C_{i}$ to $C_j$ in the direction opposite to $F_i$).
We can apply Claim \ref{claim:construct-bad-outside} to expand $\tau$ within $\{f\} \cup C$.
\end{claimproof}

In case of conditional expansion, let $A$ be the set returned by $\FExpandOrAccept(f)$.
When $A$ is empty then no action is required,
    so we focus on the case when it is not, that is, an amortizing configuration is found.
Then, the procedure of expanding a proper construction is more complex.
As in the previous section, we use the form
        $A = C_1 \cup C_2 \cup ... \cup C_t \cup F$
    to represent the search area, and
        $U_A = \{f_2, ..., f_t\} \cup \upsilon_F$
    to denote the set of unsafe edges that triggered consecutive expansions.
Each $f_i$ leads from $C_{Q(i)}$ to $C_i$, and
    $f_F$ denotes the edge that enabled the amortizing configuration.
We also use $C_F = C_t \cup \{f_F\} \cup F$.

\begin{claim}[Following rule (r3)]
    Let $f$ satisfy extension rule (r3) and $A$ be the search area
        returned by $\FExpandOrAccept(f)$, which contains an amortizing configuration.
    Then, $\tau$ can be extended to a proper construction for
        $B' \cup U'_e \cup \{f\} \cup A \cup U_A$.
\end{claim}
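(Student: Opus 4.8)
Recall that at this point $\tau$ is a proper construction for $B'\cup U'_e$: it has positive balance, covers every bad edge of $B'$, and all its active edges lie in $B'\cup U'_e$. The plan is to handle the returned search area $A$ \emph{in isolation} — producing for it a proper construction with a surplus of balance — and then to splice that construction into $\tau$ along the single link edge $f=f_1$, paying for the splice out of that surplus.

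First, since $A$ contains an amortizing configuration, Corollary~\ref{cor:construct-area-balance-2} (which combines the seeded construction of Claim~\ref{claim:construct-amortizing-comp} with the free traversal of Claim~\ref{claim:construct-area-any-start}) yields a proper construction $\tau_A$ for $A\cup U_A$ whose balance is at least~$2$. Next I record the key property of the link edge. Being processed out of $U$, $f=f_1$ has fewer than $\alpha k$ troubled vertices in $V(B')$, and, because it triggered rule~(r3), it also has fewer than $\alpha k$ troubled vertices outside $V(B')$; hence it carries fewer than $2\alpha k$ troubled vertices, and being unsafe, its accepted vertices are monochromatic and more than $(1-2\alpha)k$ in number. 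So the basic event for label $\typeU$ holds for $f_1$, which means $f_1$ may be added to a witness structure as an active $\typeU$ node.

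It remains to merge $\tau$ and $\tau_A$. Since $f_1$ intersects $V(B'|_b)\subseteq\tCover{\tau}$, it lies within distance~$2$ of $A(\tau)$; since it intersects $V(C_1)$ with $C_1\subseteq N^{+}(A(\tau_A))$, it lies within distance~$2$ of $A(\tau_A)$. If $f_1$ shares a vertex with an active edge on one of the two sides, its distance to the active part there is~$1$, so a path through $f_1$ from $A(\tau)$ to $A(\tau_A)$ has length at most~$3$ and the two trees can be joined by inserting at most two empty nodes. Otherwise $f_1$ is disjoint from all active edges, and I first attach it to $\tau_A$ as a $\typeU$ node (at most one empty node, since $f_1$ is within distance~$2$ of $\tau_A$); $f_1$ is then active and the $\tau_A$-side distance drops to~$0$, so $\tau$ is linked to that node by at most one further empty node. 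In every case at most two empty nodes are inserted for the splice, which the surplus balance~$2$ of $\tau_A$ absorbs, so the merged structure still has positive balance; exactly as in Claims~\ref{claim:construct-extend-with-amort-conf} and~\ref{claim:construct-area-any-start}, the links are drawn towards deepest suitable nodes so that rule~\refa{it:add-depth} holds and no already-fixed basic event is altered, and the active edges stay pairwise disjoint (the search area was explored away from the vicinity of $B'$). The resulting witness structure is therefore a proper construction for $B'\cup U'_e\cup\{f\}\cup A\cup U_A$: it is balanced with positive balance, satisfied, its active edges lie inside that set, and it covers every bad edge of $B'\cup C_1\cup\dots\cup C_t\cup F$. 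This completes the treatment of rule~(r3), hence the inductive construction of a proper construction for the whole final component (Proposition~\ref{prop:witness-for-component}), and with Proposition~\ref{prop:proper-witness-prob} the proof of Proposition~\ref{prop:bound-component-size}.

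The delicate point is the balance bookkeeping of the splice. It is essential both that the amortizing configuration actually contributes a surplus of~$2$ — which is why $\tau_A$ must be built from scratch via Claim~\ref{claim:construct-amortizing-comp}, not grown out of $\tau$ via Claim~\ref{claim:construct-use-amort-conf} (which, in the $\typeE$-seeded case, yields no surplus) — and that this surplus is never exhausted by the empty nodes needed to cross $f_1$; the latter rests on $f_1$ touching a bad edge covered by each of the two constructions, so that it is within distance~$2$ of both, together with the availability of the $\typeU$ label for $f_1$ to save a node when it is not adjacent to $A(\tau_A)$.
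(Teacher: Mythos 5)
Your proposal takes the distance-$\geq 3$ case of the paper's proof and attempts to apply it uniformly, but this creates a genuine gap. The paper splits into two cases: (i) some edge of $A\cup U_A$ is within distance $2$ of $\tau$, and (ii) all edges of $A\cup U_A$ are at distance at least $3$ from $\tau$. Only in case~(ii) does it build $\tau_A$ from scratch and splice it onto $\tau$ through $f$; in case~(i) it instead \emph{extends $\tau$ directly} into the search area (via Claims~\ref{claim:construct-extend-with-amort-conf}, \ref{claim:construct-within-bad}, or \ref{claim:construct-use-amort-conf}, followed by Claim~\ref{claim:construct-area-any-start}). Your parenthetical ``(the search area was explored away from the vicinity of $B'$)'' is doing all the work and is not correct as stated: $C_1$ is by construction intersected by $f$, which also intersects $B'$, so edges of $A$ can easily lie at distance exactly~$2$ from $E(\tau)$. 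This is precisely the close case that the paper treats separately.

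Why the splice alone cannot handle the close case: the paper explicitly justifies rule~\refa{it:add-depth} after the splice only by observing that ``$A(\tau_A)\subset A\cup U_A$ which is in a distance at least $3$ from $\tau$, no definition of the basic events is altered after incorporation of $\tau_A$.'' When you graft $\tau_A$ as a subtree rooted at whatever node of $\tau$ the connecting path $e_1,f,e_2$ reaches, the depths of $\tau_A$'s active nodes are fixed by that single junction; if an active edge $g\in A(\tau_A)$ happens to be at distance~$2$ from a $\typeBin$/$\typeBout$ node $y$ of $\tau$ that is deeper than the junction, then $d(g)>d(y)$ can fail, which would retroactively change the set $f_y\cap \tCover[<d(y)]{\tau}$ and invalidate the basic event already certified for $y$. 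The paper's ``attach to the deepest $2$-attainable node'' discipline — used in the direct-extension path — is exactly what prevents this, and it is not replicated by a subtree splice at a single point. Disjointness of $A(\tau)$ and $A(\tau_A)$ also comes for free from the $\geq 3$ separation in case~(ii); in case~(i) you would need a separate argument that active unsafe edges of $\tau_A$ cannot share a vertex with active edges of $\tau$, which you have not supplied. The $\typeU$-labeling of $f$, the surplus-$2$ balance accounting, and the observation that $\tau_A$ must be seeded via Claim~\ref{claim:construct-amortizing-comp} rather than grown with Claim~\ref{claim:construct-use-amort-conf} are all correct, but they only discharge case~(ii); the proposal is incomplete without the paper's case~(i).
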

\begin{claimproof}
Depending on the distance between $A$ and $\tau$ we distinguish two cases.
Suppose that $A$ or $U_A$ is in a distance at most $2$ from $\tau$.
Then, we extend $\tau$ within at least one $C_i$ in the following way:
\begin{itemize}
\item if $C_F$ is in a distance at most $2$ from $\tau$
        then we extend $\tau$ to a proper construction for $B' \cup U'_e \cup C_F$
        by using Claim \ref{claim:construct-extend-with-amort-conf};
\item otherwise, if some $C_i$ is in a distance at most $2$ from $\tau$
        then we extend $\tau$ to a proper construction for $B' \cup U'_e \cup C_i$
        by using Claim \ref{claim:construct-within-bad};
\item otherwise, there is some $f_i$ that is in a distance at most $2$ from $\tau$
        and none of the edges in $C_{Q(i)}$ and $C_i$ is covered by $A(\tau)$;
        in that case $f_i$ intersects at least two disjoint bad edges
            (one in $C_{Q(i)}$ and the other in $C_i$)
        so we use Claim \ref{claim:construct-use-amort-conf},
            to expand $\tau$ within $\{f_i\} \cup C_{Q(i)} \cup C_i$.        
\end{itemize}
After this step, $\tau$ fulfills assumptions of Claim \ref{claim:construct-area-any-start},
    so we extend it to cover the whole search area.

Suppose now that all edges in $A$ and $U_A$ are in a distance at least $3$ from $\tau$.
In that case, $f$ has to be in a distance $2$ from $\tau$.
Let $p$ be a path $e_1 \in B', f, e_2 \in A$, from $B'$ to $A$.
Note that, $e_1 \notin E(\tau)$.
Let $\tau_A$ be a proper construction for $A \cup U_A$ with the balance at least $2$,
    obtained by Corollary \ref{cor:construct-area-balance-2}.
Since $e_2$ is covered by $\tau_A$, $f$ is in a distance at most $2$ from $\tau_A$.

If $f$ intersects $A(\tau_A)$, then let $x$ be an active node in $\tau_A$
    such that $f$ is adjacent to $f_x$.
We add $e_1$ to $\tau$ as empty node.
Then $f$ is adjacent to $\tau$ so we add it to $\tau$ as empty node $y$.
Finally, 
    we incorporate $\tau_A$ into $\tau$ by adding an arc from $y$ towards $x$.

In the remaining case, when $f$ is disjoint with all active edges of both constructions,
    note that both $\tau$ and $\tau_A$ are $2$-attainable from $f$.
Observe now that $f$ has less than $\alpha k$ troubled vertices in $B$
    and the same is true for $A$.
Since $f$ does not have any other troubled vertices, it can be labeled $\typeU$.
We create a node $x$ for $f$ and attach it to $\tau$ as $\typeU$.
Then, we incorporate $\tau_A$ into $\tau$, by attaching $f$ (using the same node~$x$) to $\tau_A$.
Note that each attachment inserts one empty node.

In both cases the excess balance in $\tau_A$
    amortizes two extra empty nodes used to connect $\tau_A$ to $\tau$,
        so the positive balance of $\tau$ is preserved.
The depth of active nodes remains unchanged after incorporating $\tau_A$ into $\tau$.
Since $A(\tau_A) \subset A \cup U_A$ which is in a distance at least $3$ from $\tau$,
    no definition of the basic events is altered after incorporation of $\tau_A$.
Also addition of $f$ as $\typeU$ in the last case,
    does not alter definitions of the basic events.
This means that finally,
    $\tau$ is a proper construction for $B' \cup U'_e \cup \{f\} \cup A \cup U_A$.
\end{claimproof}

}{}

\end{document}